\newtheorem{definition}{Definition}
\newtheorem{theorem}{Theorem}
\newtheorem{lemma}{Lemma}
\newtheorem{claim}{Claim}
\newcommand{\eat}[1]{}
\newcommand{\eg}{\textit{\textrm{e.g.}}\xspace}
\newcommand{\ie}{\textit{\textrm{i.e.}}\xspace}
\newcommand{\st}{\textit{\textrm{s.t.}}\xspace}
\newcommand{\etal}{\textit{\textrm{et al.}}\xspace}
\newcommand{\maxethreesat}{\texttt{\small MAX NM-E3SAT}\xspace}
\newcommand{\maxThreeSatBTwentyNine}{\texttt{\small MAX B29-3SAT}\xspace}
\newcommand{\fdset}{$\mathsf{\Sigma}$\xspace}
\newcommand{\fdseta}{$\mathsf{\Sigma_{A\rightarrow{B}\rightarrow{C}}}$\xspace}
\newcommand{\fdsetb}{$\mathsf{\Sigma_{A\rightarrow{B}\leftarrow{C}}}$\xspace}
\newcommand{\fdsetc}{$\mathsf{\Sigma_{AB\rightarrow{C}\rightarrow{B}}}$\xspace}
\newcommand{\fdsetd}{$\mathsf{\Sigma_{AB\leftrightarrow{AC}\leftrightarrow{BC}}}$\xspace}
\newcommand{\NPhard}{\texttt{NPhard}\xspace}
\newcommand{\APXcomplete}{\texttt{APXcomplete}\xspace}
\newcommand{\osrsucc}{{\footnotesize\textsf{OSRSucceed(\fdset)}}\xspace}
\begin{document}

% ****************** TITLE ****************************************

\title{Complexity and Efficient Algorithms for Data Inconsistency Evaluating and Repairing\titlenote{Supported by ~~NSFC xxxx,~~NSFC xxxx}}

\numberofauthors{3} %  in this sample file, there are a *total*
% of EIGHT authors. SIX appear on the 'first-page' (for formatting
% reasons) and the remaining two appear in the \additionalauthors section.
\author{
	%1st. author
	\alignauthor
	Dongjing Miao\\
	\affaddr{Harbin Institute of Technology}\\
	\affaddr{P.O. Box 321}\\
	\affaddr{92 Xidazhi Street}\\
	\affaddr{Harbin, China}\\
	\email{miaodongjing@hit.edu.cn}    
	%2nd. author   
	\alignauthor
	Zhipeng Cai\\
	\affaddr{Georgia State University}\\
	\affaddr{P.O. Box 5060}\\
	\affaddr{Atlanta, GA, USA}\\
	\email{zcai@gsu.edu}
	%3rd. author
	\alignauthor
	Jianzhong Li\\
	\affaddr{Harbin Institute of Technology}\\
	\affaddr{P.O. Box 321}\\
	\affaddr{92 Xidazhi Street}\\
	\affaddr{Harbin, China}\\
	\email{lijzh@hit.edu.cn}       
	\and  % use '\and' if you need 'another row' of author names
	% 4th. author
	\alignauthor
	Xiangyu Gao\\
	\affaddr{Harbin Institute of Technology}\\
	\affaddr{P.O. Box 321}\\
	\affaddr{92 Xidazhi Street}\\
	\affaddr{Harbin, China}\\
	\email{gaoxy@hit.edu.cn}       
	% 5th. author
	\alignauthor
	Xianmin Liu\\
	\affaddr{Harbin Institute of Technology}\\
	\affaddr{P.O. Box 321}\\
	\affaddr{92 Xidazhi Street}\\
	\affaddr{Harbin, China}\\
	\email{liuxianmin@hit.edu.cn}
	%\alignauthor Charles Palmer\\
	%       \affaddr{Palmer Research Laboratories}\\
	%       \affaddr{8600 Datapoint Drive}\\
	%       \affaddr{San Antonio, Texas 78229}\\
	%       \email{cpalmer@prl.com}
}
\maketitle
%---------------------------------------------
%\begin{table}
%\centering
%\caption{Frequency of Special Characters}
%\begin{tabular}{|c|c|l|} \hline
%Non-English or Math&Frequency&Comments\\ \hline
%\O & 1 in 1,000& For Swedish names\\ \hline
%$\pi$ & 1 in 5& Common in math\\ \hline
%\$ & 4 in 5 & Used in business\\ \hline
%$\Psi^2_1$ & 1 in 40,000& Unexplained usage\\
%\hline\end{tabular}
%\end{table}

%\begin{figure}
%\centering
%\includegraphics[width=1in,height=1in]{fly}
%\caption{A sample black and white graphic (.pdf format)
%that has been resized with the \texttt{includegraphics} command.}
%\label{fig:bigfly}
%\end{figure}

%---------------------------------------------
\begin{abstract}
Data inconsistency evaluating and repairing are major concerns in data quality management.
As the basic computing task, optimal subset repair is not only applied for cost estimation during the progress of database repairing, but also directly used to derive the evaluation of database inconsistency.
Computing an optimal subset repair is to find a minimum tuple set from an inconsistent database whose remove results in a consistent subset left.
Tight bound on the complexity and efficient algorithms are still unknown.
%Livshits \etal~\cite{ester:PODS} recently showed a dichotomy on the computation of OSR for functional dependencies (FD).
In this paper, we improve the existing complexity and algorithmic results, together with a fast estimation on the size of optimal subset repair. %which is developed for database FD-inconsistency evaluating.
We first strengthen the dichotomy for optimal subset repair computation problem,
we show that it is not only \APXcomplete, but also \NPhard to approximate an optimal subset repair with a factor better than $17/16$ for most cases.
We second show a $(2-0.5^{\tiny\sigma-1})$-approximation whenever given $\sigma$ functional dependencies,
and a $(2-\eta_k+\frac{\eta_k}{k})$-approximation when an $\eta_k$-portion of tuples have the $k$-quasi-Tur$\acute{\text{a}}$n property for some $k>1$.
We finally show a sublinear estimator on the size of optimal \textit{S}-repair for subset queries,
it outputs an estimation of a ratio $2n+\epsilon n$ with a high probability,
thus deriving an estimation of FD-inconsistency degree of a ratio $2+\epsilon$.
To support a variety of subset queries for FD-inconsistency evaluation, we unify them as the $\subseteq$-oracle which can answer membership-query, and return $p$ tuples uniformly sampled whenever given a number $p$.
%then show a sample-based method with an average sublinear query complexity producing a $(2, \epsilon)$-estimation of the subset repair based FD-inconsistency degree of any subset query result.
%the size of optimal subset repair of a large subset query result.
Experiments are conducted on range queries as an implementation of $\subseteq$-oracle, and results show the efficiency of our FD-inconsistency degree estimator.
\end{abstract}
%---------------------------------------------
\section{Introduction}
A database instance $I$ is said to be inconsistent if it violates some given integrity constraints,
that is, $I$ contains conflicts or inconsistencies.
Those database inconsistencies can occur in various scenarios due to many causes.
For example, a typical scenario is information integration,
where data are integrated from different sources,
some of them may be low-quality or imprecise,
so that conflicts or inconsistencies arise.

In the principled approach managing inconsistencies~\cite{arenas99cqa},
the notion of \emph{repair} was first introduced decades ago.
A repair of an inconsistent instance $I$ is a consistent instance $I^\prime$ obtained by performing a minimal set of operations on $I$ so as to satisfy all the given integrity constraints.
Repairs could be defined under different settings of \emph{operations} and \emph{integrity constraints}.
We follow the setting of~\cite{ester:PODS},
where we take functional dependencies, also a most typical one, as the integrity constraints, and deletions as the operations,
so that a repair of $I$ here is a subset of $I$ obtained by minimal tuple deletions,
and an optimal repair of $I$ is a subset of it obtained by deleting minimum tuples.
Computing an optimal subset repair with respect to functional dependencies is the major concern in this paper.
It is a fundamental problem of data inconsistency management and the motivation has been partially discussed in~\cite{ester:PODS}.
The significance of study on computing optimal repair is twofold.

Computing optimal repairs would be the basic task in data cleaning and repairing.
For data repairing, existing methods could be roughly categorized into two classes, fully automatic and semi automatic ways~\cite{10.1145/2882903.2912574}.
In fully automatic repairing methods, optimal subset repairs are always considered as optimization objectives~\cite{10.5555/1325851.1325890,10.1145/3299869.3319901,DBLP:conf/icdt/SaIKRR19}.
Given an inconsistent database, one needs automated methods to make it consistent, \textit{i.e.}, find a repair that satisfies the constraints and minimally differs from the originated input, optimal subset repair is right one of the choices~\cite{ester:PODS}.
On the other side, optimal subset repairs are also preferred candidates picked by automatic data cleaning or repairing system when dealing with inconsistency errors.
Instead of the fully automatic way, the human-in-loop semi-automatic repairing is another prevailing way~\cite{assadi2017dance, bergman2015qoco, dallachiesa2013nadeef, geerts2013llunatic},
and the complement of an optimal subset repair is an ideal lower bound of repairing cost which could be used as to estimate the amount of necessary effort to eliminate all the inconsistency,
sometimes even enlighten them how to choose specific operations.

Besides optimal repairs, measuring inconsistency motivates the computation on the size of optimal repairs.
Intuitively, for the same schema and with the same integrity constraints, given two databases instances, it is natural to know which one is more inconsistent than the other.
This comparison can be accomplished by assigning a measure of inconsistency to a database.
Hence, measuring database inconsistency has been revisited and generalized recently by data management community.
\cite{bertossi:LPNMR} argued that both the admissible repair actions and how close we want stay to the instance at hand should be taken into account when defining such measure.
To achieve this, database repairs~\cite{10.5555/2371212} could be applied to define degrees of inconsistency.
Among a series of numerical measurements proposed in~\cite{bertossi:LPNMR}, subset repair based inconsistency degree $\mathit{inc\text{-}deg}^S$ is the most typical one.
According to~\cite{bertossi:LPNMR}, subset repair based inconsistency degree is defined as the ratio of minimum number of tuple deleted in order to remove all inconsistencies,
\ie, the size of the complement of an optimal subset repair.
Therefore, computing optimal subset repair is right the fundamental of inconsistency degree computation.
Previous studies does not provide fine-grained complexity on this problem and efficient algorithm for large databases.
Thus, we in this paper give a careful analysis on the computational complexity and fast computation on the size of an optimal subset repair.
Contributions of this paper are detailed as follows.

%The study on its complexity also helps to figure out the hardness of measuring FD-inconsistency degree by subset repair based measurements.
We first study the data complexity of optimal subset repair problem including the lower and upper bounds in order to understand how hard the problem is and how good we could achieve.
The most recent work~\cite{ester:PODS} develops a simplification algorithm $\mathsf{OSRSucceeds}$ and establishes a dichotomy based on it to figure the complexity of this problem.
Simply speaking, they show that, for the space of combinations of  database schemas and functional dependency sets, 
(i) it is polynomial to compute an optimal subset repair, if  the given FD set can be simplified into an empty set;
(ii) the problem is APX-complete, otherwise.

As the computation accuracy of the size of an optimal subset is very crucial to our motivation,
we strengthen the dichotomy in this paper by improving the lower bound into concrete constants.
Specifically, we show that it is \NPhard to obtain a $(17/16-\epsilon)$-optimal subset repair for most input cases, and $(69246103/69246100-\epsilon)$--optimal subset repair for all the others.
We show that a simple reduction could unify most cases and improve the low bound
We then consider approximate repairing. For this long standing problem, it is always treated as a vertex cover problem equivalently, and admits the upper bound of ratio $2$.
However, we  take a step further, show that (i) an $(2-0.5^{\tiny\sigma-1})$-approximation of an optimal subset repair could be obtained for given $\sigma$ functional dependencies, more than that, (ii) it is also polynomial to find an $(2-\eta_k+\frac{\eta_k}{k})$-approximation, which is much better if an $\eta_k$-portion of tuples have the $k$-quasi-Tur$\acute{\text{a}}$n property for some $k>2$.

Then, we turn to the most related problem, to estimate the subset repair based FD-inconsistency degree efficiently.
For an integrated database instance, it is helpful to measure the inconsistency degree of any part of it locally, in order to let users know and understand well the quality of their data.
Consider an inconsistent database $I$ integrated by data from two organizations $A$ and $B$, we need to know the main cause of the conflicts.
If we know the inconsistency degree of some part $A$ is very low but that of $B$ is as high as the inconsistency degree of $I$,
then we could conclude that the cause of inconsistencies is mainly on the conflicts in between, but not in any single source.
That is when we find the inconsistency degree of some local part is approximately equal to that of the global one,
then it is reasonable to take this part as a primary cause of inconsistency, so that we may focus on investigating what happens in $B$.

To this motivation, in this paper, we focus on fast estimating the subset-repair based FD-inconsistency degree.
We here follow the definition of subset repair based measurement recently proposed by Bertossi~\cite{bertossi:LPNMR}, and develop an efficient method estimating FD-inconsistency degree of any part of the input database.
Concretely, it seems a same problem as computing an optimal repair itself, so that the complexity result of optimal subset repair computing indicates it is hard to be approximated within a better ratio polynomially, not to mention linear or even a sublinear running time.
However, we observe that, the value of inconsistency degree is a ratio to the size of input data, say $n$, hence, an $n$-fold accuracy loss of optimal subset repair size estimating is acceptable.
Therefore, we develop a sample-based method to estimate the size of an optimal subset repair with a error of $(2\pm\epsilon)n$ so as to break through the limitation of linear time complexity while achieving an approximation with an additive error $\epsilon$.
To support a variety of subset queries, especially for whose result is very large, we model those queries as the $\subseteq$-oracle which can answer membership-query, and return $k$ tuples uniformly sampled whenever given a number $k$.

The following parts of this paper is organized as follows.
Necessary notations and definitions are formally stated in Section 2.
Complexity results and LP-based approximations are shown in Section 3.
Sampling-based fast FD-inconsistency degree estimation is given in Section 4.
Experiment results are discussed in Section 5.
At last, we conclude our study in Section 6.

%---------------------------------------------

\begin{figure*}[!t]
	\begin{small}
		\centering
		\vspace{0mm}
		\subfigure[Example instance \textsf{order}]{
			\hspace{-4mm} \includegraphics[height=35mm]{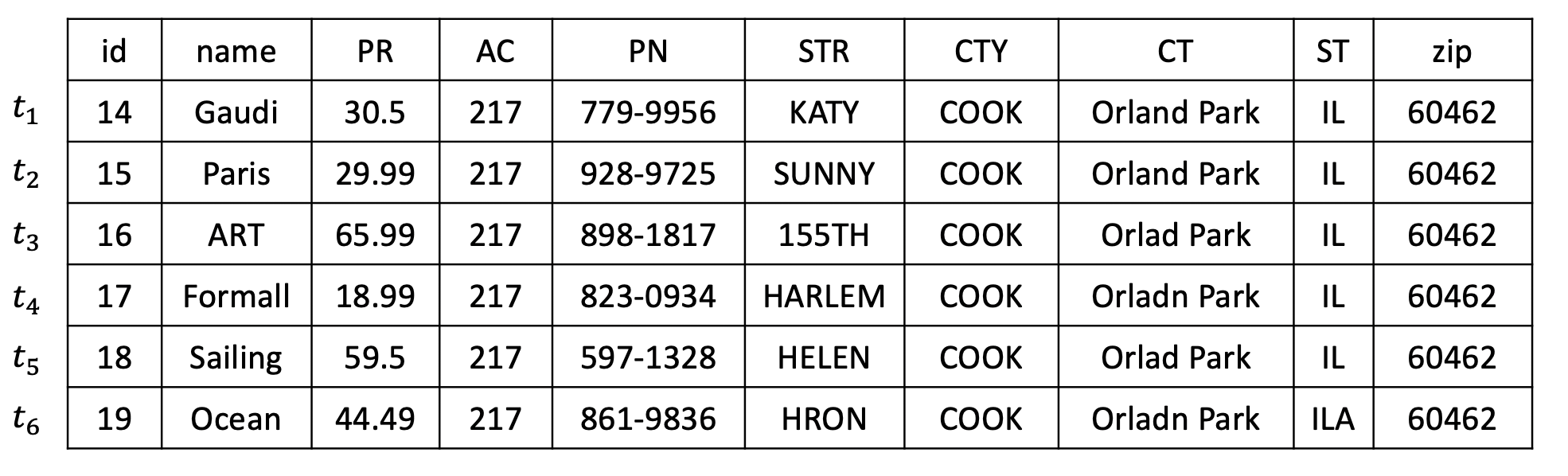}
			\label{fig:eg_ins}
		}
		\subfigure[All conflicts]{
			\includegraphics[height=40mm]{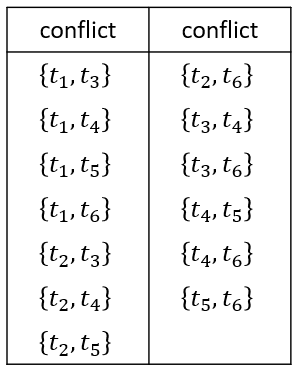}
			\label{fig:eg_cfs}
		}
		\vspace{-3mm}
		\caption{$dist$ with different $\rho$, $n$ and $\sigma$}
		\label{fig:eg}
		\vspace{0mm}
	\end{small}
\end{figure*}

\section{Problem Statement}
The necessary definitions, notations and problem definition are formally given in this section.

\textbf{{Schemas and Tables}}.
A $k$-ary relation schema is represented by $R(A_1,...,A_k)$, where $R$ is the relation name and $A_1,\dots,A_k$ are distinct attributes of $R$.
In the following part of this paper, we refer $R(A_1,...,A_k)$ to $R$ for simplicity. We customarily use capital letters from the beginning of the English alphabet to denote individual attribute, such as ``$A,B,C$'', and use capital letters from the end of the English alphabet individual attribute to denote a set of attributes, such as ``$\mathsf{X},\mathsf{Y},\mathsf{Z}$'', sometimes with subscripts.
A set of attributes are conventionally written without curly braces and commas, such as $\mathsf{X}$ can be written as $AB$ if $\mathsf{X}=\{A,B\}$.\\
We assume the domain of each attribute, $dom(A_i)$, is countably infinite,
then, any instance $I$ over relation $R$ is a collection of $k$-ary tuples $\{a_1,a_2,\dots,a_k\}$,
where each value $a_i$ are taken from the set $dom(A_i)$.
Let $\mathbf{t}.A_i$ refer to the value $a_i$ on attribute $A_i$, and $\mathbf{t}.\mathsf{X}$ refer to the sequence of attribute values $a_1,a_2,\dots,a_i$ when $\mathsf{X}=A_1A_2\dots{A_i}$.
We use $[\mathbf{t}.\mathsf{X}]$ to denote the set of all tuples from $I$ sharing the same value of $\mathsf{X}$.
The size of an instance is the number of tuples in it, denoted as $|I|$.
In this paper, any instance $I$ of a relation schema $R$ is a single table corresponding to $R$. \smallskip

\textbf{{Functional Dependencies}}.
Let $\mathsf{X}$ and $\mathsf{Y}$ be two arbitrary sets of attributes in a relation schema $R$, then $\mathsf{Y}$ is said to be functionally determined by $\mathsf{X}$, written as $\mathsf{X}\rightarrow\mathsf{Y}$, if and only if each $\mathsf{X}$-value in $R$ is associated with precisely one $\mathsf{Y}$-value in $R$.
Usually $\mathsf{X}$ is called the \emph{determinant} set and $\mathsf{Y}$ the \emph{dependent} set, but in this paper, for the sake of simple, we just call them \emph{determinant} and \emph{dependent} respectively.
A functional dependency $\mathsf{X}\rightarrow\mathsf{Y}$ is called \emph{trivial} if $\mathsf{Y}$ is a subset of $\mathsf{X}$.\par
Given a functional dependency $\varphi\colon\mathsf{X}\rightarrow\mathsf{Y}$ over $R$, any instance $I$ corresponding to $R$ is said to satisfy $\varphi$, denoted as $I\models\varphi$, such that for any two tuples $\mathbf{s},\mathbf{t}$ in $I$, $\mathbf{s}.\mathsf{Y}=\mathbf{t}.\mathsf{Y}$ if $\mathbf{s}.\mathsf{X}=\mathbf{t}.\mathsf{X}$.
That is, two tuples sharing the same values of $\mathsf{X}$ will necessarily have the same values of $\mathsf{Y}$.
Otherwise, $I$ does not satisfy $\varphi$, denoted as $I\nvDash\varphi$.
As a special case, any two-tuple subset $J$ of $I$ is called a $\varphi$-\emph{conflict} in $I$ with respect to $\varphi$ if $J\nvDash\varphi$.\par
Let \fdset be a set of functional dependencies, we usually use $\sigma$ to refer to the number of functional dependencies in \fdset, \ie $\sigma=|\mathsf{\Sigma}|$.
Given a set of functional dependency \fdset, an instance $I$ is said to be consistent with respect to \fdset if $I$ satisfies every functional dependencies in \fdset. Otherwise, $I$ is inconsistent, denoted as $I\nvDash\text{\fdset}$.
As a special case, any two-tuple subset $J$ of $I$ is called a \emph{conflict} in $I$ if there is some $\varphi$ such that $J$ is a $\varphi$-\emph{conflict}.
That is, $I$ contains one or more \emph{conflicts} if $I$ is inconsistent.
\begin{example}
\label{eg:sch&fd}
Our running example is around the schema \textsf{order(id, name, AC, PR, PN, STR, CTY, CT, ST, zip)}. Each tuple contains information about an item sold (a unique item \textsf{id}, \textsf{name} and price \textsf{PR}), and the phone number (area code \textsf{AC}, phone number \textsf{PN}) and the address of the customer who purchased the item (street \textsf{STR}, country \textsf{CTY}, city \textsf{CT}, state \textsf{ST}). An instance $I$ of the schema \textnormal{\textsf{order}} is shown in figure \ref{fig:eg_ins}. Some functional dependencies on the order database include:
\begin{equation*}
\begin{aligned}
&\mathsf{fd_1 : [AC, PN] \to [STR, CT, ST]} \quad &\mathsf{fd_2} &\mathsf{: [zip] \to [CT, ST]}\\
&\mathsf{fd_3 : [id] \to [name, PR]} &\mathsf{fd_4} &\mathsf{: [CT, STR] \to [zip]}
\end{aligned}
\end{equation*}
The database of figure \ref{fig:eg_ins} is inconsistent since there are 13 $\mathsf{fd_2}$-conflicts in total as listed in figure \ref{fig:eg_cfs}. The meaning of the number assigned to each conflict will be clarified later.
\end{example}
%For simplicity, in the following of this paper, we do not distinguish between functional dependency and its abbreviation FD.\smallskip

\textbf{{Equivalence Class}}.
Given an FD $\varphi$: $\mathsf{X}\rightarrow\mathsf{Y}$, an instance $I$ can be partitioned horizontally into several \emph{determinant equivalence classes} according to the $\mathsf{X}$-values, that is, tuples in each determinant equivalence class share the same value of $\mathsf{X}$.
Moreover, any {determinant equivalence class} can be further partitioned into several \emph{determinant-dependent equivalence classes} according to the $\mathsf{Y}$-values, denoted as $[xy]$, that is, tuples in each determinant equivalence class share the same value of $\mathsf{XY}$.
It is obviously that, for an FD $\varphi:\mathsf{X}\rightarrow\mathsf{Y}$ and any instance $I$, two tuples $\mathbf{s}$ and $\mathbf{t}$ not in any $\varphi$-conflict in $I$ must be in different {determinant equivalence classes} $[\mathbf{s}.\mathsf{X}]$ and $[\mathbf{t}.\mathsf{X}]$ respectively. %他们可能在一个相同的determinnant-dependent equivalence classes里面
\begin{example}
\label{eg:dec}
With respect to $\mathsf{fd_2 : [zip] \to [CT, ST]}$, instance $I$ can be partitioned into one determinant equivalence class $[\mathsf{60462}] = \{\mathbf{t}_1, \mathbf{t}_2, \mathbf{t}_3, \mathbf{t}_4, \mathbf{t}_5, \mathbf{t}_6\}$
and 4 determinant-dependent equivalence classes $\{\mathbf{t}_1, \mathbf{t}_2\}$, $\{\mathbf{t}_3, \mathbf{t}_5\}$, $\{\mathbf{t}_4\}$ and $\{\mathbf{t}_6\}$.
\end{example}

\textbf{{Repair}}.
Let $I$ be an instance over a relation schema $R$,
a subset of $I$ is an instance $J$ obtained from $I$ by eliminating some tuples.
If $J$ is a subset of $I$, then the distance from $J$ to $I$, denoted $dist_{\texttt{sub}}(J,I)$,
is the number of tuples missing from $I$, and it is for sure that $J\subseteq{I}$, thus,
\[dist_{\texttt{sub}}(J,I)=|I\setminus{J}|=|I|-|J|\]\quad
Let $I$ be an instance over schema $R$, and let \fdset be a set of FDs.
A consistent subset of $I$ with respect to \fdset is a subset $J$ of $I$ such that $J\models\text{\fdset}$.
A subset repair (\emph{s-repair}, for short) is a consistent subset that is not strictly contained in any other consistent subset.
An optimal subset repair of $I$ is a consistent subset $J$ of $I$ such that $dist_{\texttt{sub}}(J,I)$ is minimum among all consistent subsets of $I$.
Note that, each optimal subset repair is a repair, but not necessarily vice versa.
Clearly, any consistent subset can be polynomially transformed into a subset repair, with no increase of distance.
Unless explicitly stated otherwise, in this paper, we do not distinguish between a {subset repair} and a {consistent subset}.
\begin{example}
\label{eg:osr}
Both $S_1 = \{\mathbf{t}_4\}$ and $S_2 = \{\mathbf{t}_1, \mathbf{t}_2\}$ are s-repairs of $I$. It is easy to verify that $S_2$ is an optimal s-repair such that $dist_{\texttt{sub}}(S_2,I) = 4$.
\end{example}

Now, we formally define the first problem studies in this paper as follows,
\begin{definition}[OSR Computing]
	Input an instance $I$ over a relation schema $R$, a functional dependency set \fdset, OSR computing problem is to compute an optimal s-repair $J$ of $I$ with respect to \fdset.
\end{definition}

\textbf{Inconsistency Measurement}.
Computing an optimal s-repair helps estimating database FD-inconsistency degree.
As in literature~\cite{bertossi:LPNMR}, given a functional dependency set \fdset, one of subset repair based measurements on the FD-inconsistency degree of input database $I$ is defined as following,
\[incDeg(I,\mathsf{\Sigma}) = \min\limits_{\substack{J\subseteq{I},\\J\models{\mathsf{\Sigma}}}} \left\lbrace\frac{dist_{\texttt{sub}}(J,I)}{|I|}\right\rbrace = \frac{dist_{\texttt{sub}}(J_{opt},I)}{|I|}\]
Moreover, this measurement could be also applied for any part $H$ of the input database $I$ in order to evaluate its corresponding FD-inconsistency degree as following,
\[incDeg(I,H,\mathsf{\Sigma}) = \min\limits_{\substack{J\subseteq{H},\\J\models{\mathsf{\Sigma}}}} \left\lbrace\frac{dist_{sub}(J,H)}{|H|}\right\rbrace=incDeg(H,\mathsf{\Sigma})\]
The local degree does not depends on the whole of the input data, thus leads to the right equation.
Our FD-inconsistency degree of any part is defined locally, hence, we use notation $incDeg(H,\mathsf{\Sigma})$ instead of $incDeg(I,H,\mathsf{\Sigma})$ by omitting the first parameter.
Then, we here formally define the second problem studied in this paper as follows, 
\begin{definition}[FD-inconsistency Evaluation]
	Input a relation schema $R$, an FD set \fdset, an instance $I$ over $R$ and a subset query $Q$ on $I$, FD-inconsistency evaluation is to compute $incDeg\left(Q(I),\mathsf{\Sigma}\right)$ of the query result $Q(I)$ with respect to \fdset.
\end{definition}
\begin{example}
\label{eg:incDeg}
As mentioned in Example \ref{eg:osr}, $S_2 = \{\mathbf{t}_1, \mathbf{t}_2\}$ is an optimal s-repair of $I$, then $incDeg(I,\mathsf{\Sigma}) = \frac{dist_{\texttt{sub}}(S_2,I)}{|I|} = \frac{2}{3}$. Given a range query $Q = [15, 45]$ on attribute \textsf{PR} in \textsf{order}, the result set $Q(I) = \{\mathbf{t}_1, \mathbf{t}_2, \mathbf{t}_4, \mathbf{t}_6\}$. $S_2$ ia also an optimal s-repair of $Q(I)$, then $incDeg(Q(I),\mathsf{\Sigma}) = \frac{dist_{\texttt{sub}}(S_2,I)}{|Q(I)|} = \frac{1}{2}$.
\end{example}

\textbf{Approximation}.
We follow the convention of approximation definition, to define the approximation of optimal repairs explicitly.
For a constant $c\geq1$, a $c$-optimal \emph{s-repair} is an \emph{s-repair} $J$ of $I$ such that\[dist_{\texttt{sub}}(J,I) \leq c\cdot dist_{\texttt{sub}}(J^\prime,I)\]for all \emph{s-repairs} $J^\prime$ of $I$.
In particular, an optimal \emph{s-repair} is the same as a $1$-optimal \emph{s-repair}.

According to the definition of subset repair based FD-inconsistency degree,
for an arbitrary $0\leq\epsilon\leq1$ and a constant $c\geq1$, $\tilde{incDeg}(I,\mathsf{\Sigma})$ is
a $(c,\epsilon)$-approximation of $incDeg(I,\mathsf{\Sigma})$ such that
\[incDeg(I,\mathsf{\Sigma}) \leq \tilde{incDeg}(I,\mathsf{\Sigma}) \leq c\cdot{incDeg(I,\mathsf{\Sigma})} + \epsilon\]

\textbf{Complexity}.
The conventional measure of \emph{data complexity} are adopted to perform the computational complexity analysis of optimal subset repair computing problem in this paper.
That is, the relation schema $R(A_1,\dots,A_k)$ and the functional dependency set \fdset are fixed in advance, and the instance data $I$ over $R$ is the only input. Therefore, an polynomial running time may have an exponential dependency on $k$ and $|\mathsf{\Sigma}|$.
In such context of data complexity, each distinct setting of $R(A_1,\dots,A_k)$ and \fdset indicates a distinct problem of finding an optimal repair, so that different setting may indicate different complexities.
Recall that, in the measurement of combined complexity, the relation schema and the functional dependency set are considered as inputs, hence, the hardness of OSR computing problem equals to that of vertex cover problem. However, this is not the case under data complexity.\par
After showing the hardness, we still adopt data complexity to be the measurement on running times and approximation ratios, however, the difference is that we fix only the size of the functional dependency set \fdset, but not itself and the schema.
Note that, this is reasonable in practical, the input functional dependencies may vary with time, but the number of given functional dependencies are always much smaller than the size of input data $I$, so that we could consider it to be bounded within some constant.\par\smallskip

\section{Computing An Optimal S-Repair}
%TODO：回顾前面的dichotomy
In this section, we show the improved lower bound and upper bound of OSR.
\subsection{The Strengthened Dichotomy for OSR}
Livshits \etal gave a procedure \osrsucc~\cite{ester:PODS} to simplify a given functional dependency set \fdset.
Any functional dependency set can either be simplified polynomially into a set containing only \emph{trivial} functional dependencies, or not.
The procedure \osrsucc returns \texttt{\textit{true}} for the former case, otherwise \texttt{\textit{false}}.
OSR is polynomially tractable for functional dependency sets that can be simplified into trivial ones.
For all the other functional dependency sets, OSR computing problem is hard as in not only \NPhard but also \APXcomplete.\par
Specifically, any functional dependency set that cannot be simplified further can be classified into one of five certain classes of functional dependency sets.
And OSR is shown in \APXcomplete for any such functional dependency set by \emph{fact}-\emph{wise} reductions from one of the following four fixed schemas.\par
\begin{tabular}{lll}
	\fdseta & = & $\{A\rightarrow{B},B\rightarrow{C}\}$\\
	\fdsetb & = & $\{A\rightarrow{B},C\rightarrow{B}\}$\\
	\fdsetc & = & $\{AB\rightarrow{C},C\rightarrow{B}\}$\\
	\fdsetd & = & $\{AB\rightarrow{C},AC\rightarrow{B},BC\rightarrow{A}\}$
\end{tabular}\smallskip

By showing the inapproximability of such four schemas, the following dichotomy follows immediately.
\begin{theorem}[Dichotomy for OSR computing~\cite{ester:PODS}]
Let \fdset be a set of FDs, then
\begin{itemize}
\item An optimal subset repair can be computed polynomially, if \emph{\osrsucc} returns \texttt{true};
\item Computing an optimal subset repair is \APXcomplete, if \emph{\osrsucc} returns \texttt{false}.
\end{itemize}
\label{thm:d1}
\end{theorem}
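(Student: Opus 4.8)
The plan is to prove the two branches separately, disposing of the membership claims quickly and concentrating on the approximation-preserving reductions that drive the hard direction. As a preliminary for both halves of the APX-completeness claim, I would first show that OSR lies in APX for \emph{every} \fdset. The key observation is that every conflict is a two-tuple subset violating some FD in \fdset, so the conflicts induce an ordinary \emph{conflict graph} on the tuples of $I$ whose edges are exactly the conflicting pairs. A consistent subset is the complement of a vertex cover of this graph, and an optimal s-repair corresponds to a minimum vertex cover. The standard $2$-approximation for vertex cover therefore gives a $2$-approximation for OSR, placing it in APX irrespective of the outcome of \osrsucc.

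For the tractable branch I would argue by an invariant on the simplification procedure \osrsucc. Each simplification step must be shown to preserve the optimal repair distance $dist_{\texttt{sub}}$ and to be polynomially invertible, so that an optimal s-repair of the simplified instance lifts back to one of the original instance. When \osrsucc returns \texttt{true}, the residual FD set is trivial, the corresponding instance is vacuously consistent, and unwinding the chain of inversions reconstructs an exact optimal s-repair in polynomial time.

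For the hard branch I would first prove APX-hardness directly for the four fixed schemas \fdseta, \fdsetb, \fdsetc, and \fdsetd by L-reductions from an APX-hard anchor such as vertex cover on bounded-degree graphs. For the two-FD chain and the converging pair, the conflict graph mirrors the edge set of the input graph almost verbatim, so minimum tuple deletion equals minimum vertex cover up to linear constants; the composite-determinant schemas \fdsetc and \fdsetd require custom gadgets that force the same correspondence despite overlapping conflicts. Having anchored the four schemas, I would invoke the classification that every non-simplifiable \fdset falls into one of five classes, each admitting a \emph{fact-wise} reduction from one of the four anchors. Because a fact-wise reduction maps facts to facts while preserving satisfaction of the FDs, it transports the L-reduction forward, so APX-hardness is inherited by every \fdset for which \osrsucc returns \texttt{false}; combined with the APX membership above this completes the \APXcompleteness claim.

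The hard part will be the reductions for the composite-determinant schemas \fdsetc and \fdsetd. There the interaction of several FDs creates overlapping conflicts beyond those directly encoding the input edges --- for instance $C\rightarrow{B}$ in \fdsetc generates conflicts cutting across the equivalence classes induced by $AB\rightarrow{C}$ --- so the gadget must guarantee that these auxiliary conflicts neither admit cheaper spurious repairs nor inflate the optimal deletion count by more than a linear factor. Keeping that inflation linear, with matched additive and multiplicative constants, is precisely what an L-reduction demands, and verifying it for the three-FD symmetric schema \fdsetd, where all three attributes play interchangeable roles, is the most delicate bookkeeping in the argument.
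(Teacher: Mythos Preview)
This theorem is \emph{not proved} in the present paper: it is quoted verbatim as a known result from Livshits \etal~\cite{ester:PODS}, with no proof supplied here. The paper's own contribution begins only afterward, strengthening the hard branch by replacing bare \APXcompleteness with explicit inapproximability constants (Lemmas~\ref{lem:1.068} and~\ref{lem:1.000000048}, leading to Theorem~\ref{thm:d2}). Consequently there is no ``paper's own proof'' of Theorem~\ref{thm:d1} to compare your proposal against.

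That said, your outline is broadly faithful to the structure the paper \emph{describes} for the cited result: APX membership via the conflict-graph/vertex-cover correspondence, polynomial tractability by unwinding the simplification steps of \osrsucc, and APX-hardness by anchoring the four base schemas \fdseta--\fdsetd and then propagating hardness through fact-wise reductions to the five non-simplifiable classes. One point of divergence worth flagging: you propose anchoring all four schemas by L-reductions from bounded-degree vertex cover, whereas the original work (and this paper's refinement) route \fdsetd through a different chain --- here \maxThreeSatBTwentyNine $\to$ \texttt{3DM} $\to$ \texttt{MAX 3SC} $\to$ \texttt{Triangle} $\to$ \fdsetd --- precisely because the symmetric three-FD interaction you identify as ``the most delicate bookkeeping'' does not yield cleanly to a direct vertex-cover gadget. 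Your instinct that \fdsetd is the hard case is correct, but a direct L-reduction from vertex cover there would need substantially more justification than you sketch.
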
%TODO：插入规约图

In this paper, we give a more careful analysis to show a concrete constant for each of the four schemas, thus strengthening this dichotomy.
\begin{lemma}
For FD sets~\fdseta,~\fdsetb,~and~\fdsetc, there is no polynomial-time $(\frac{17}{16}-\epsilon)$-approximation algorithm for computing an optimal subset repair for any $\epsilon>0$, unless \emph{\texttt{NP=P}}.
\label{lem:1.068}
\end{lemma}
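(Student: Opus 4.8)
\emph{Proof plan.} The plan is to give a single gap-preserving, fact-wise reduction from a bounded-occurrence version of maximum satisfiability that applies uniformly to all three schemas \fdseta, \fdsetb, and \fdsetc. The source of hardness is H\aa stad's theorem in the form used for \maxethreesat: it is \NPhard to distinguish instances in which all $m$ clauses are simultaneously satisfiable from instances in which every assignment violates at least a $(\tfrac18-\epsilon)$-fraction of the clauses. Passing to a bounded-occurrence variant (each variable appearing in a constant number of clauses) is what keeps the reduction local and makes it valid under data complexity: the schema and the fixed FD set do not grow with the instance, only the tuple set does.

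First I would encode each clause $c_j=(\ell_1\vee\ell_2\vee\ell_3)$ by a small \emph{clause gadget} of tuples whose attribute values respect the FD at hand, so that a consistent subset of the gadget corresponds to a choice of which literal occurrences survive, hence to a local truth assignment. The target calibration is that each clause gadget forces exactly $2$ deletions when the encoded assignment satisfies the clause and exactly $3$ deletions when it does not. Cross-clause agreement of a variable $x_i$ is enforced not by a separate variable gadget but by conflicts placed between occurrences of $x_i$ in different clause gadgets, so that contradictory local choices incur extra deletions and never help the optimum. Because the chain $A\to B\to C$, the fork $A\to B\leftarrow C$, and $AB\to C,\ C\to B$ each admit the same underlying two-tuple conflict pattern, one gadget skeleton can be instantiated in all three schemas --- this is the ``simple reduction that unifies most cases'' the paper refers to.

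Next I would read off a two-sided bound on the optimal distance $dist_{\texttt{sub}}(J^\star,I)$. In the YES case a single global assignment satisfies every clause, so the optimum pays only the within-gadget cost and $dist_{\texttt{sub}}(J^\star,I)=2m$. In the NO case every assignment violates at least $(\tfrac18-\epsilon)m$ clauses, and since the tightness argument forbids any cheaper ``cheating'' subset, the optimum must pay the extra deletion on that many gadgets, giving $dist_{\texttt{sub}}(J^\star,I)\ge 2m+(\tfrac18-\epsilon)m=(\tfrac{17}{8}-\epsilon)m$. The ratio between the two regimes is $\tfrac{17/8}{16/8}-O(\epsilon)=\tfrac{17}{16}-O(\epsilon)$, so any polynomial-time $(\tfrac{17}{16}-\epsilon)$-approximation for OSR would separate YES from NO instances and decide the gap problem, which is impossible unless \texttt{NP=P}.

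The hard part will be the gadget design and, above all, its \emph{exact} accounting: I must realize the ``$2$-if-satisfied, $3$-if-unsatisfied'' cost profile inside the rigid conflict structure each FD imposes, and then prove tightness in the soundness direction --- that no consistent subset can undercut a clause gadget's forced cost, and that encoding a variable inconsistently across clauses never saves deletions. This cross-gadget consistency argument is precisely where the bounded occurrence of variables must be used, and establishing the clean $2$-versus-$3$ balance simultaneously for all three schemas (so that the gap is exactly $17/16$ rather than some diluted constant) is the step I expect to absorb the bulk of the technical work.
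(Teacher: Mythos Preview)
Your plan is essentially the paper's argument --- a gap reduction from a hard MAX-3SAT variant in which each clause gadget costs $2$ deletions when satisfied and $3$ when not, so that the $7/8$ inapproximability of the source translates into $17/16$ for OSR. Two points deserve correction, though.

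First, the bounded-occurrence restriction is a red herring. Data complexity only fixes the schema and \fdset; the instance (and hence the number of tuples per variable) may grow freely, so unbounded occurrence poses no obstacle. Nor is bounded occurrence needed for cross-clause consistency: in the paper's encoding the second FD ($B\!\to\!C$, resp.\ $C\!\to\!B$) already forces any consistent subset to pick a single truth value per variable, regardless of how many clauses mention it. Insisting on a bounded-occurrence source would only make you hunt for a $7/8$ gap in a more restricted problem. The paper instead reduces from \maxethreesat --- monotone E3SAT with no repeated variable in a clause --- and it is \emph{monotonicity}, not bounded degree, that does real work: for \fdsetc the three clause tuples $(c_i,b,x_j)$ share the same $AB$-value only when all literals have the same sign, which is exactly what makes $AB\!\to\!C$ force ``at most one survives per clause''.

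Second, the paper does not quite use a single unified gadget. For \fdseta\ and \fdsetb\ three tuples per clause suffice, as you describe; for \fdsetc\ the paper additionally inserts two variable tuples $(x_i,1,x_i),(x_i,0,x_i)$ per variable. The final accounting is also framed differently: rather than a YES/NO gap (which would need perfect completeness on the SAT side), the paper shows that an $r$-approximation for OSR yields a $(3-2r)$-approximation for \maxethreesat, and then invokes the $7/8$ hardness to conclude $r\ge 17/16$. Your direct gap computation gives the same constant, but the ratio-transfer phrasing sidesteps the completeness issue.
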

\begin{proof}
We here show three similar gap-preserved reductions (\ie, $\prec_\mathcal{G}$) from \maxethreesat to show the lower bound.
Note that, 
(i) any variable $x$ do not occur more than once in any clause, and
(ii) each clause is \textit{monotone} as either ($x$+$y$+$z$) or ($\bar{x}$+$\bar{y}$+$\bar{z}$).
Each of the following reductions generates a corresponding table instance $I$ with schema $\left\langle{A,B,C}\right\rangle$.\smallskip

{\small \textbf{\maxethreesat$\prec_\mathcal{G}$ \fdseta}}.
For each clause $c_i$, build three tuples.
If $c_i$ contains a positive literal of variable $x_j$, then build $(i,j,j)$.
If $c_i$ contains a negative literal of variable $x_j$, then build $(i,j,\bar{j})$.
Intuitively, $A\rightarrow{B}$ guarantees that exactly one of the three tuples survives once the corresponding clause is satisfied, and $B\rightarrow{C}$ will ensure the consistent assignment of each variable.\smallskip
	
{\small \textbf{\maxethreesat$\prec_\mathcal{G}$ \fdsetb}}.
By simply exchange the column B and C, we get the second reduction.
Concretely, for each clause $c_i$, build three tuples.
If $c_i$ contains a positive literal of variable $x_j$, then build $(i,j,j)$.
If $c_i$ contains a negative literal of variable $x_j$, then build $(i,\bar{j},j)$.
Intuitively, $A\rightarrow{B}$ guarantees that exactly one of the three tuples survives once the corresponding clause is satisfied, and $C\rightarrow{B}$ will ensure the consistent assignment of each variable.\smallskip

{\small \textbf{\maxethreesat$\prec_\mathcal{G}$ \fdsetc}}.
By slightly modify the way of tuple generation, we get the third reduction.\par
Concretely,
(i) for each variable $x_i$, build two $x$-tuples $(x_i,1,x_i)$ and $(x_i,0,x_i)$,
(ii) for each clause $c_i$, build three $c$-tuples, if $c_i$ contains a positive literal of variable $x_j$, then build $(c_i,1,x_j)$,if $c_i$ contains a negative literal of variable $x_j$, then build $(c_i,0,x_j)$.

Intuitively, there are $2n+3m$ tuples created, $AB\rightarrow{C}$ guarantees that exactly one of the three tuples survives once the corresponding clause is satisfied, and $C\rightarrow{B}$ will ensure the consistent assignment of each variable.\smallskip
	
\textbf{\texttt{Lower bound}}.
We here show the proof for \fdsetc, and the other two are similar.
For any instance $\phi$ of \maxethreesat problem, we denote the corresponding table built by reduction as $I_\phi$.
Let $\#\tau(\phi)$ be the number of clauses satisfied by an assignment $\tau(\phi)$ on $\phi$,
and $\#\tau_{\max}(\phi)$ be the number of clauses satisfied by an optimal assignment $\tau_{\max}(\phi)$ on $\phi$.
\begin{claim}
Any tuple deletion $\Delta$ should contain at least two of the three tuples having the same value $i$ on the attribute $A$ for any $1\leq i\leq m$.
\end{claim}
\begin{claim}
 any tuple deletion $\Delta$ should contain either the set of tuples $(c_i,1,x_j)$ or the set of tuples $(c_i,0,x_j)$ for any $1\leq i\leq m, 1\leq j\leq n$.
\end{claim}
FD $AB\rightarrow{C}$ guarantees the first claim, and FD $C\rightarrow{B}$ ensures that there is always an assignment $\tau$ can be derived from $I\setminus\Delta$,
\begin{equation*}
\st\qquad\tau\left(x_i\right) = \left\{
\begin{array}{ll}
{0}, &\text{if}\quad{(x_i,1,x_i)\in{I\setminus\Delta}},\\
{1}, &\text{otherwise}.
\end{array}
\right.
\end{equation*}
\begin{claim}
Let $\Delta_{\min}$ be an minimum tuple deletion, then any optimal solution $\Delta_{\min}$ does not contain $(x_i,1,x_i)$ and $(x_i,0,x_i)$ simultaneously for each variable $x_i$.
\end{claim}
Proof by contradiction. Suppose if not, there must exist another solution $\Delta^\prime$ obtained by returning tuple $(x_i,1,x_i)$ or $(x_i,0,x_i)$ from $\Delta_{\min}$ into $I\setminus{\Delta_{\min}}$ without producing any inconsistency, thus resulting in a solution $\Delta^\prime$ smaller than the optimal one. Based on the three claims, we have
\[\#\tau_{\max}(\phi) = |I_\phi| - |\Delta_{\min}|- n\]
and for any solution $\Delta$ of $I_\phi$,
\[\#\tau(\phi) \geq |I_\phi| - |\Delta|- n\]
additionally, we have the fact that \[|I_\phi|=2n+3m\]
Now, suppose $\Delta$ is an $r$-approximation ($r>1$) of $\Delta_{\min}$ such that $|\Delta|\leq r\cdot|\Delta_{\min}|$, then
\begin{eqnarray}
\frac{\#\tau{(\phi)}}{\#\tau_{\max}(\phi)}
&\geq&
\frac{|I_\phi|-|\Delta|-n}{|I_\phi|-|\Delta_{\min}|-n}\nonumber\\
&\geq&
\frac{|I_\phi|-r\cdot|\Delta_{\min}|-n}{|I_\phi|-|\Delta_{\min}|-n}\nonumber\\
&=&
1+\frac{\left(1-r\right)\cdot|\Delta_{\min}|}{|I_\phi|-|\Delta_{\min}|-n}
\label{eqn:e1}
\end{eqnarray}
since each clause has exactly 3 literals, we have \[|\Delta_{\min}|\geq n + 2\cdot\frac{|I_\phi|-2n}{3}\]
apply this fact in the right hand of inequality~(\ref{eqn:e1}),
it is\[\frac{|\Delta_{\min}|}{|I_\phi|-|\Delta_{\min}|-n}\geq\frac{2|I_\phi|-n}{|I_\phi|-2n}=2+\frac{3}{\frac{|I_\phi|}{n}-2}\]
since $|I_\phi|=2n+3m>2n$, therefore we get \[\frac{|\Delta_{\min}|}{|I_\phi|-|\Delta_{\min}|-n}>2\]
apply this into inequality~(\ref{eqn:e1}), then \[\frac{\#\tau(\phi)}{\#\tau_{\max}(\phi)}>3-2r\]

That is, if there is an $r$-approximation of OSR, then \maxethreesat can be approximated within $3-2r$.
However, if \texttt{\small OSR} can be polynomially approximated within $\frac{17}{16}$, then there exists a polynomial approximation better than $\frac{7}{8}$ for \maxethreesat problem, but it contraries to the hardness result shown in~\cite{guruswami:CCC}.

One can verify the lower bound of \fdseta and \fdsetb in the same way, then the lemma follows immediately. One can refer to appendix for more detail.
\end{proof}

To deal with the last case, by carefully merging the four $\mathcal{L}_{\alpha,\beta}$-reductions $\{$
\begin{tabular}{rcl}
	\maxThreeSatBTwentyNine & $\prec_{\mathcal{L}_{529,1}}$ & \texttt{3DM} ~\cite{kann1991maximum}\\
	\texttt{3DM} & $\prec_{\mathcal{L}_{1,1}}$ & \texttt{MAX 3SC} ~\cite{kann1991maximum}\\
	\texttt{MAX 3SC} & $\prec_{\mathcal{L}_{55,1}}$ & \texttt{Triangle} ~\cite{amini2009hardness}\\
	\texttt{Triangle} & $\prec_{\mathcal{L}_{\frac{7}{6},1}}$ & \fdsetd~\cite{ester:PODS}
\end{tabular}
$\}$, we have the following lemma.
\begin{lemma}
For FD set~\fdsetd, there is no polynomial time $(\frac{69246103}{69246100} - \epsilon)$-approximation algorithm for computing an optimal subset repair for any $\epsilon>0$, unless \emph{\texttt{NP=P}}.
\label{lem:1.000000048}
\end{lemma}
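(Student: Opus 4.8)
The plan is to turn the displayed four‑step chain into a single approximation‑preserving reduction from \maxThreeSatBTwentyNine to OSR for \fdsetd, and then push the known inapproximability gap of \maxThreeSatBTwentyNine through it. First I would recall the composition property of $\mathcal{L}$-reductions: if $\Pi_1\prec_{\mathcal{L}_{\alpha_1,\beta_1}}\Pi_2$ and $\Pi_2\prec_{\mathcal{L}_{\alpha_2,\beta_2}}\Pi_3$, then $\Pi_1\prec_{\mathcal{L}_{\alpha_1\alpha_2,\,\beta_1\beta_2}}\Pi_3$, because the instance maps and the solution maps compose and the two defining inequalities simply multiply. Applying this to \maxThreeSatBTwentyNine $\prec_{\mathcal{L}_{529,1}}$ \texttt{3DM} $\prec_{\mathcal{L}_{1,1}}$ \texttt{MAX 3SC} $\prec_{\mathcal{L}_{55,1}}$ \texttt{Triangle} $\prec_{\mathcal{L}_{\frac{7}{6},1}}$ \fdsetd yields one $\mathcal{L}$-reduction with $\alpha = 529\cdot 1\cdot 55\cdot \frac{7}{6} = \frac{203665}{6}$ and $\beta = 1\cdot 1\cdot 1\cdot 1 = 1$.

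Second, I would invoke the standard consequence that an $\mathcal{L}$-reduction with parameters $(\alpha,\beta)$ magnifies relative error by at most $\alpha\beta$: writing $E(\cdot)$ for relative error against the respective optimum, the defining inequalities give $\mathrm{OPT}_{\text{src}}\cdot E_{\text{src}}(g(y))\le \beta\,\mathrm{OPT}_{\text{tgt}}\cdot E_{\text{tgt}}(y)$ and then $\mathrm{OPT}_{\text{tgt}}\le \alpha\,\mathrm{OPT}_{\text{src}}$, so $E_{\text{src}}(g(y))\le \alpha\beta\,E_{\text{tgt}}(y)$; one checks this derivation is insensitive to whether each problem is a maximization or a minimization, provided the two optima are positive. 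Since OSR is a minimization problem, a $\rho$-approximate s-repair ($\rho\ge 1$) has relative error at most $\rho-1$, so a polynomial $\rho$-approximation for OSR on \fdsetd would produce a \maxThreeSatBTwentyNine solution of relative error at most $\alpha\beta(\rho-1)$. Letting $\gamma$ be the relative‑error threshold below which \maxThreeSatBTwentyNine is \NPhard to approximate and substituting the known value $\gamma=\frac{1}{680}$, every polynomial algorithm must satisfy $\alpha\beta(\rho-1)\ge\gamma$, i.e. $\rho\ge 1+\frac{\gamma}{\alpha\beta}=1+\frac{1/680}{203665/6}=1+\frac{6}{138492200}=1+\frac{3}{69246100}=\frac{69246103}{69246100}$, which is exactly the claimed bound.

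The purely routine part is this arithmetic together with the check $69246100 = 2^2\cdot 5^2\cdot 7\cdot 11\cdot 17\cdot 23^2$, so that the constant factors out cleanly. The substantive points that I expect to require care are, first, that the final link \texttt{Triangle} $\prec_{\mathcal{L}_{\frac{7}{6},1}}$ \fdsetd crosses from a maximization source to the minimization OSR objective, so I must verify that the cited $\mathcal{L}$-reduction inequalities still hold with the target cost read as the deletion distance $dist_{\texttt{sub}}$, and that the orientations of all intermediate problems are mutually consistent so the chain composes at all; and second, that the large multiplier $\alpha=529$ on the first link is genuinely an $\mathcal{L}$-reduction constant, which rests on using the bounded‑occurrence structure encoded by ``B29'' to bound $\mathrm{OPT}_{\texttt{3DM}}$ linearly in the optimum of the \maxThreeSatBTwentyNine instance. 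Once these are confirmed, a $\bigl(\frac{69246103}{69246100}-\epsilon\bigr)$-approximation for OSR on \fdsetd would yield a \maxThreeSatBTwentyNine approximation beating the threshold $\gamma$, and the lemma follows unless \texttt{NP}=\texttt{P}.
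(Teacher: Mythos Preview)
Your proposal is correct and follows essentially the same approach as the paper: you compose the four cited $\mathcal{L}$-reductions, multiply the parameters to obtain $\alpha\beta = \tfrac{203665}{6}$, and push the $\tfrac{1}{680}$ relative-error hardness of \maxThreeSatBTwentyNine through the chain to get the constant $\tfrac{69246103}{69246100}$. The paper's own proof is a one-sentence sketch invoking exactly this merging and the $\tfrac{680}{679}$ hardness from \cite{crescenzi1997short}; you have simply spelled out the composition law, the max/min orientation check, and the arithmetic that the paper leaves implicit.
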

\begin{proof}
By merging the $\mathcal{L}_{\alpha, \beta}$-reduction mentioned above, if computing an OSR for FD set~\fdsetd can be approximated within $\frac{69246103}{69246100}$, then there exists a polynomial approximation better than $\frac{680}{679}$ for \maxThreeSatBTwentyNine problem which is contrary to the hardness result shown in \cite{crescenzi1997short}.
\end{proof}
Based on Lemma~\ref{lem:1.068},\ref{lem:1.000000048} and Theorem~\ref{thm:d1}, a strengthened dichotomy for OSR computing can be stated as follows.
\begin{theorem}[A strengthened dichotomy for OSR]
Let \fdset be a set of FDs, then
\begin{itemize}
\item An optimal subset repair can be computed polynomially, if \emph{\osrsucc} returns \texttt{true};
\item There is no poly-time ($\frac{69246103}{69246100}-\epsilon$)-approximation to compute an optimal subset repair, if \emph{\osrsucc} returns \texttt{false} and $\Sigma$ can be classified into the class having a fact-wise reduction from \fdsetd to itself;
\item There is no poly-time ($\frac{17}{16}-\epsilon$)-approximation to compute an optimal subset repair, otherwise.
\end{itemize}
\label{thm:d2}
\end{theorem}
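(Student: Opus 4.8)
The plan is to assemble three ingredients that are already in place: the polynomial-tractability half of Theorem~\ref{thm:d1}, the $(\frac{17}{16}-\epsilon)$ bound of Lemma~\ref{lem:1.068} for the base schemas \fdseta, \fdsetb, \fdsetc, and the $(\frac{69246103}{69246100}-\epsilon)$ bound of Lemma~\ref{lem:1.000000048} for \fdsetd. The first bullet is immediate: when \osrsucc returns \texttt{true}, Theorem~\ref{thm:d1} already supplies a polynomial algorithm, so nothing further is required there and the whole content of the theorem lies in transferring the two inapproximability lemmas to arbitrary hard $\Sigma$.

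For the two inapproximability bullets, I would start from the structural content of the \APXcomplete half of Theorem~\ref{thm:d1}: whenever \osrsucc returns \texttt{false}, the set $\Sigma$ falls into one of the five classes isolated by Livshits \etal, and each such class comes equipped with a \emph{fact-wise} reduction from one of the four base schemas \fdseta, \fdsetb, \fdsetc, \fdsetd to a representative of that class. The key observation I would make explicit is that such a reduction is \emph{exact} at the level of repairs: being a tuple-level injection that preserves consistency, it maps an instance over the base schema to an instance over $\Sigma$'s schema under which the $s$-repairs correspond one-to-one with identical distance $dist_{\texttt{sub}}$, and the correspondence is polynomially invertible. Hence each reduction is approximation-preserving with ratio exactly $1$, so any polynomial $c$-approximation of OSR for $\Sigma$ pulls back, for the same constant $c$, to a polynomial $c$-approximation for the source base schema.

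With this transfer step in hand, the remainder is a case split on the source schema. If $\Sigma$'s class admits a fact-wise reduction from \fdsetd, then a poly-time $(\frac{69246103}{69246100}-\epsilon)$-approximation for $\Sigma$ would, after the pullback, contradict Lemma~\ref{lem:1.000000048} unless \texttt{NP=P}; this yields the second bullet. Otherwise $\Sigma$'s class reduces from one of \fdseta, \fdsetb, \fdsetc, and a poly-time $(\frac{17}{16}-\epsilon)$-approximation would likewise contradict Lemma~\ref{lem:1.068}, giving the third bullet.

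I expect the main obstacle to be justifying that the fact-wise reductions are genuinely ratio-$1$ approximation-preserving, rather than merely preserving \APXcompleteness, which is all that the original dichotomy required. Concretely, for each of the five classes I would revisit the construction and verify that any auxiliary (gadget) tuples it introduces are forced into, or out of, every optimal $s$-repair in a fixed way, so that the distance to consistency is reproduced verbatim and no additive slack is created; only then does the exact constant survive the pullback. A secondary point to nail down is that the five-class partition splits cleanly, so that exactly the classes reducible from \fdsetd land in the second bullet and every other class in the third, with no $\Sigma$ satisfying both hypotheses simultaneously.
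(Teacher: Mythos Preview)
Your proposal is correct and follows essentially the same route as the paper: the theorem is presented there as an immediate consequence of Lemma~\ref{lem:1.068}, Lemma~\ref{lem:1.000000048}, and Theorem~\ref{thm:d1}, with the transfer to general $\Sigma$ going through the fact-wise reductions of Livshits~\etal. Your write-up is in fact more explicit than the paper's one-line justification, and the point you flag as the main obstacle---that fact-wise reductions preserve $dist_{\texttt{sub}}$ exactly and hence carry the concrete inapproximability constants, not merely \APXcompleteness---is precisely the step the paper leaves implicit.
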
%TODO：插入规约图

For the polynomial-intractable side, one can simply verify that if the size of FD set is unbounded, then the OSR computing is as hard as classical vertex cover problem on general inputs which is \NPhard to be approximate within $2-\epsilon$ for any $\epsilon>0$.
A simple approximation algorithm can provide a ratio of 2 when the input FD set is unbounded.

However, in practical, the size of FD set is usually much smaller than the size of data, so that it can be treated as fixed, especially in the context of big data.
Unfortunately, it is still unclear how good we could arrive when the size of FD set is bounded. Therefore, to study the upper bound of its data complexity, we next give a carefully designed approximation to archive a ratio of $2-0.5^{\tiny\sigma-1}$ when the number of given FDs is $\sigma$, or even better sometimes.

\subsection{Approximation}
To investigate the upper bound of optimal s-repair computing problem, we start from a basic linear programming to provide a ratio of $2-0.5^{\tiny\mathcal{X}(n)}$, for an input instance $I$ over a given relation schema $R$ and an input FD set \fdset, where $\mathcal{X}(I)$ is the number of all possible \emph{determinant-dependent equivalence classes} of an input instance $I$ with respect to the input FD set \fdset.
Then, an improved the approximation ratio $2-0.5^{\tiny\sigma-1}$ could be derived by means of triad elimination.
Finally, we find another $(2-\eta_k+\frac{\eta_k}{k})$-approximation which is sometimes, but not always, better than $2-0.5^{\tiny\sigma-1}$, based on a $k$-quasi-Tur$\acute{\text{a}}$n characterization of the input inconsistent instance with respect to the input \fdset.

\subsubsection{A basic approximation algorithm}
%We next combine this property and linear program relaxation to give a better approximation.
We start from the basic linear programming model which is equivalent to the classical one solving  minimum vertex cover problem.\par
Let $x_i$ be a 0-1 variable indicating the elimination of tuple $\mathbf{t}_i$ such that, $x_i = 1$ if eliminate $\mathbf{t}_i$; $x_i = 0$ otherwise.
Then we formulate the OSR computing problem as followings,
\begin{eqnarray}
minimize & \sum_{\mathbf{t}_i\in I} x_i&\\
s.~t. &  x_i+x_j\geq 1,& \forall~\{\mathbf{t}_i,\mathbf{t}_j\}\nvDash\mathsf{\Sigma},\\
&x_i\geq 0, &\forall~\mathbf{t}_i\in I
\end{eqnarray}
It is well-known that every extreme point of this model takes value of 0 or 0.5 or 1, hence, we can relax it with condition: \[x_i\in\{0,0.5,1\}\] thus getting \[OPT^{relax}\leq{OPT}\]
A trivial rounding derives a ratio of $2$ immediately.
However, based on a partition of instance $I$ with respect to FD set \fdset, a better ratio depending on the size of partition could be obtained.

%画图
Obviously, for any FD $\varphi_i:\mathsf{X}_i\rightarrow\mathsf{Y}_i$ of \fdset with a size of $\sigma$, each tuple $\mathbf{t}$ belongs to one and only one distinct {determinant-dependent equivalence class} with respect to $\varphi_i$, say $[\mathbf{t}.\mathsf{X}_i\mathsf{Y}_i]$, then we have \[\mathbf{t}\in[\mathbf{t}.\mathsf{X}_1\mathsf{Y}_1]\cap\dots\cap[\mathbf{t}.\mathsf{X}_\sigma\mathsf{Y}_\sigma]=[\mathbf{t}.\mathsf{Z}],\]
where $\mathsf{Z}=\mathsf{X}_1\cup\dots\cup\mathsf{X}_\sigma\cup\mathsf{Y}_1\cup\dots\cup\mathsf{Y}_\sigma.$
Hence, we observe that if any two tuples $\mathbf{s}$ and $\mathbf{t}$ are in some conflict, then there must be
\[\mathbf{s}\notin[\mathbf{t}.\mathsf{Z}],\mathbf{t}\notin[\mathbf{s}.\mathsf{Z}],\]
and vice versa, since they disagree on at least one attribute in some $\mathsf{Y}_i$ but agree on all the attributes in $\mathsf{X}$.

Further more, another observation is that all the tuples in conflict with $\mathbf{t}$ are included in the determinant equivalence classes
$$[\mathbf{t}.\mathsf{X}]=[\mathbf{t}.\mathsf{X}_1]\cup\dots\cup[\mathbf{t}.\mathsf{X}_\sigma]$$
Because all tuples in each $[\mathbf{t}.\mathsf{X}_i]$ may be inconsistent with each other at worst, hence,
every tuple in $[\mathbf{t}.\mathsf{X}]$ may be inconsistent with at most $|[\mathbf{t}.\mathsf{X}_1]|\times\dots\times|[\mathbf{t}.\mathsf{X}_\sigma]|-1$ tuples.

Let $\mathcal{X}(\mathbf{t})$ be the numbers of tuples who are in conflict with $\mathbf{t}$,
and $\mathcal{X}(I)$ be the numbers of consistent classes that $I$ could be partitioned into,
such that each class is consistent.
This observation implies the following claims immediately,
\begin{claim}
\small$\mathcal{X}(I)\leq\max\limits_{\mathbf{t}\in{I}}\{\mathcal{X}(\mathbf{t})\}\leq\max\limits_{\mathbf{t}\in{I}}\{|[\mathbf{t}.\mathsf{X}_1]|\times\dots\times|[\mathbf{t}.\mathsf{X}_\sigma]|\}$
\end{claim}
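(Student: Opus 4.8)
The plan is to read the claim through the \emph{conflict graph} $G_I$, whose vertices are the tuples of $I$ and whose edges join pairs of tuples that form a conflict. With this view a partition of $I$ into consistent classes is exactly a proper colouring of $G_I$ (each class being an independent set), so $\mathcal{X}(I)$ is the number of colours used, while $\mathcal{X}(\mathbf{t})$ is the degree of $\mathbf{t}$ in $G_I$. I would prove the two inequalities separately, establishing the right-hand one first since it carries the structural content that the left-hand one reuses.

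For the right inequality, fix a tuple $\mathbf{t}$ and recall the observation preceding the claim: any tuple in conflict with $\mathbf{t}$ must agree with $\mathbf{t}$ on the determinant of at least one $\varphi_i$, hence lies in $[\mathbf{t}.\mathsf{X}_i]$ for some $i$. Therefore every conflicting tuple sits inside $\bigcup_{i=1}^{\sigma}[\mathbf{t}.\mathsf{X}_i]$, giving $\mathcal{X}(\mathbf{t}) \leq \bigl|\bigcup_{i=1}^{\sigma}[\mathbf{t}.\mathsf{X}_i]\bigr| - 1$. The remaining step is purely arithmetic: since $\mathbf{t}$ itself lies in every $[\mathbf{t}.\mathsf{X}_i]$ we have $\bigl|\bigcup_i[\mathbf{t}.\mathsf{X}_i]\bigr| \leq \sum_i |[\mathbf{t}.\mathsf{X}_i]| - (\sigma-1)$, and because each factor satisfies $|[\mathbf{t}.\mathsf{X}_i]| \geq 1$ the elementary bound $\prod_i a_i \geq \sum_i a_i - (\sigma-1)$ (an easy induction on $\sigma$ from $(a-1)(b-1)\geq 0$) yields $\bigl|\bigcup_i[\mathbf{t}.\mathsf{X}_i]\bigr| \leq \prod_i |[\mathbf{t}.\mathsf{X}_i]|$. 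Taking the maximum over $\mathbf{t}$ then gives $\max_{\mathbf{t}}\mathcal{X}(\mathbf{t}) \leq \max_{\mathbf{t}}\prod_i|[\mathbf{t}.\mathsf{X}_i]|$.

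For the left inequality I would exhibit a concrete consistent partition and count its classes. The natural candidate refines $I$ by the combined value $\mathsf{Z} = \mathsf{X}_1\cup\dots\cup\mathsf{X}_\sigma\cup\mathsf{Y}_1\cup\dots\cup\mathsf{Y}_\sigma$: two tuples sharing $\mathbf{t}.\mathsf{Z}$ agree on every determinant and every dependent, so each $\mathsf{Z}$-class is consistent, and within a single determinant region $[\mathbf{t}.\mathsf{X}]$ the number of distinct $\mathsf{Z}$-classes is at most $\prod_i|[\mathbf{t}.\mathsf{X}_i]|$. Combining this with the degree bound through a colouring of $G_I$ in which each tuple is separated only from the $\mathcal{X}(\mathbf{t})$ tuples it actually conflicts with shows that no more than $\max_{\mathbf{t}}\mathcal{X}(\mathbf{t})$ classes are ever required, which is the desired inequality.

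The step I expect to be delicate is precisely this last one. A black-box greedy colouring of an arbitrary graph only guarantees $\Delta+1$ colours, and for FD conflict graphs the per-determinant-class pieces are complete multipartite (a clique in the worst case, when all $\mathsf{Y}_i$-values in a determinant class are distinct), so obtaining the tight bound $\mathcal{X}(I)\leq\max_{\mathbf{t}}\mathcal{X}(\mathbf{t})$ rather than $\max_{\mathbf{t}}\mathcal{X}(\mathbf{t})+1$ forces me to exploit the special multipartite structure of $G_I$ — colouring coordinate-wise across the $\sigma$ dependencies rather than sequentially — instead of a generic chromatic-versus-degree estimate. Everything else in the chain reduces to the routine union-versus-product computation above.
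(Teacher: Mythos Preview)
Your treatment of the right inequality is correct and in fact more careful than the paper's, which simply asserts that the preceding observation ``implies the following claims immediately'' and gives no further argument.

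The left inequality, however, cannot be rescued by the multipartite structure you are counting on --- that structure is precisely what breaks it. Take a single FD $A\to B$ and the three tuples $(1,1),(1,2),(1,3)$: the conflict graph is $K_3$, so $\mathcal{X}(I)=3$ while every $\mathcal{X}(\mathbf t)=2$. Within one determinant class the graph is complete multipartite, and when each part is a singleton this \emph{is} a clique, giving $\chi=\Delta+1$ exactly. Colouring ``coordinate-wise across the $\sigma$ dependencies'' assigns one colour per determinant--dependent class, which here still uses three colours, not two. So the inequality $\mathcal{X}(I)\le\max_{\mathbf t}\mathcal{X}(\mathbf t)$ is false as written, and no proof strategy will close that gap.

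What is actually true --- and what Lemma~3 and Algorithm~1 need --- is the chain
\[
\mathcal{X}(I)\;\le\;\max_{\mathbf t}\mathcal{X}(\mathbf t)+1\;\le\;\max_{\mathbf t}\prod_{i=1}^{\sigma}\bigl|[\mathbf t.\mathsf{X}_i]\bigr|,
\]
where the first step is ordinary greedy colouring and the second absorbs the~$+1$ using the paper's own observation that $\mathcal{X}(\mathbf t)\le\prod_i|[\mathbf t.\mathsf{X}_i]|-1$. The middle term in the paper's claim is off by one; the outer inequality $\mathcal{X}(I)\le\max_{\mathbf t}\prod_i|[\mathbf t.\mathsf{X}_i]|$ survives and is all the downstream argument uses.
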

This claim implies that all the tuples in $I$ could be partitioned into at most $\mathcal{X}(I)$ classes such that tuples in each class are consistent with each other.

Then, we improve the ratio by using the $\mathcal{X}(I)$ partitions of the input instance.
Based on the rounding technique similar with~\cite{Nemhauser1975}, an improved approximated algorithm could be stated as follows.
\begin{algorithm}
	\caption{\textsf{Baseline LP-OSR}}  
	\hspace*{0.02in}{\bf Input:}
	\quad $n$-tuple instance $I$ over schema $R$, FD set \fdset\\
	\hspace*{0.02in}{\bf Output:} 
	optimal subset repair $J$ of $I$ with respect to \fdset\\
	\begin{algorithmic}[1]
		\STATE{Solve the linear programming (2)-(4) to obtain a solution $x_1\dots x_n$ such that $x_i\in\left\lbrace0,0.5,1\right\rbrace$ for all $1\leq i\leq n$.}
		\STATE{Let $P_j$ is the set of tuples of some consistent partition of $I$ with respect to \fdset}
		\STATE{$j\leftarrow \arg\max_j \left| \left\lbrace x_i|\hspace{0.5ex} \mathbf{t}_i\in P_j \wedge x_i=0.5\right\rbrace\right|$}
		\FOR{each $\mathbf{t}_i\in I$}
		\IF{$x_i=1$ or ($x_i=0.5$ and $\mathbf{t}_i\notin{P_j}$)}
		\STATE{add $\mathbf{t}_i$ into $\hat{\Delta}$}
		\ENDIF
		\ENDFOR
		\STATE{$\hat{J}\rightarrow I\setminus\hat{\Delta}$}
		\RETURN{$\hat{J}$}
	\end{algorithmic}
\label{algo:baseline-lp-osr} 
\end{algorithm}

Obviously, $OPT^{relax}$ can be returned in polynomial time as shown in~\cite{williamson2011design}, and it is easy to see that $\hat{J}$ is a s-repair.
In fact, if an $x_i=0.5$ and not be picked into deletion $\hat{\Delta}$, then all the tuples in conflicts with $\mathbf{t}_i$ must be added into $\hat{\Delta}$
because they are not in partition $P_j$, and the sum of two variables of tuples in any conflict should be no less than $1$.
Therefore, we claim that the approximation ratio is $2-\frac{2}{\mathcal{X}(I)}$.
\begin{lemma}
	Algorithm~\ref{algo:baseline-lp-osr} returns a $\left(2-\frac{2}{\mathcal{X}(I)}\right)$-optimal subset repair.
	\label{lem:2-k}
\end{lemma}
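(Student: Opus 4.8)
The plan is to establish the two properties packed into the statement: that the output $\hat{J}$ of Algorithm~\ref{algo:baseline-lp-osr} is a genuine s-repair, and that its deletion cost obeys the stated factor. I would first fix notation, writing $n_1=|\{i:x_i=1\}|$ and $n_{1/2}=|\{i:x_i=1/2\}|$ for the solution found in line~1. By the half-integrality already invoked in the text, $OPT^{relax}=n_1+\tfrac12 n_{1/2}$, and since $OPT^{relax}\le OPT$ we obtain the lower bound $OPT\ge n_1+\tfrac12 n_{1/2}$ on which the whole ratio argument will lean.

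For feasibility I would argue conflict by conflict. Given any conflict $\{\mathbf{t}_i,\mathbf{t}_j\}$, the LP constraint $x_i+x_j\ge 1$ together with $x_i,x_j\in\{0,\tfrac12,1\}$ leaves only two situations. Either one variable equals $1$, in which case that tuple enters $\hat{\Delta}$ and the conflict is destroyed; or both equal $\tfrac12$, in which case I invoke the defining property of the partition: two conflicting tuples can never lie in a single consistent class $P_\ell$, so at most one of $\mathbf{t}_i,\mathbf{t}_j$ lies in the selected class $P_j$ and the other is deleted. (A pair carrying values $0$ and $\tfrac12$ cannot occur, its sum being below $1$.) Hence no conflict survives in $\hat{J}=I\setminus\hat{\Delta}$, so $\hat{J}$ is consistent and therefore an s-repair.

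For the cost, let $p_\ell=|\{i:x_i=\tfrac12,\ \mathbf{t}_i\in P_\ell\}|$. Each half-valued tuple falls into exactly one of the $\mathcal{X}(I)$ consistent classes, so $\sum_\ell p_\ell=n_{1/2}$ and the maximizing class chosen in line~3 satisfies $p_j\ge n_{1/2}/\mathcal{X}(I)$ by averaging. Since $\hat{\Delta}$ comprises exactly the $x_i=1$ tuples and the half-valued tuples outside $P_j$, I get $|\hat{\Delta}|=n_1+(n_{1/2}-p_j)\le n_1+n_{1/2}\bigl(1-\tfrac{1}{\mathcal{X}(I)}\bigr)$. Writing $K=\mathcal{X}(I)$ and comparing with $(2-\tfrac2K)\,OPT\ge (2-\tfrac2K)(n_1+\tfrac12 n_{1/2})=(2-\tfrac2K)n_1+(1-\tfrac1K)n_{1/2}$, the $n_{1/2}$ terms cancel and the desired inequality collapses to $n_1\le(2-\tfrac2K)n_1$, valid for every $K\ge 2$; the degenerate $K=1$ is an already consistent instance with $\hat{\Delta}=\emptyset$ and $OPT=0$.

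The step I expect to be the real obstacle is the feasibility claim, not the arithmetic: the argument that keeping an entire half-valued class while rounding all other halves up still destroys every conflict rests precisely on each $P_\ell$ being internally conflict-free, which is what guarantees that two conflicting half-valued tuples land in distinct classes. Everything downstream is the pigeonhole bound on $p_j$ and the elementary inequality above.
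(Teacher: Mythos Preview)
Your proposal is correct and follows essentially the same route as the paper: the paper likewise sets $S_1,S_{0.5},S_{P_j}$, uses half-integrality to get $OPT^{relax}=|S_1|+\tfrac12|S_{0.5}|\le OPT$, applies the averaging bound $|S_{P_j}|\ge |S_{0.5}|/\mathcal{X}(I)$, and chains the same inequalities. Your treatment is in fact slightly more thorough, since you spell out the feasibility argument (handled by the paper in the paragraph preceding the lemma) and dispose of the degenerate $K=1$ case explicitly.
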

\begin{proof}
	Define notations $S_1$, $S_{0.5}$ and $S_{P_j}$ as follows,
	\begin{eqnarray*}
	S_1&=&\left\lbrace x_i|~\mathbf{t}_i\in I, x_i=1\right\rbrace\\
	S_{0.5}&=&\left\lbrace x_i|~\mathbf{t}_i\in I, x_i=0.5\right\rbrace\\
	S_{P_j}&=&\left\lbrace x_i|~\mathbf{t}_i\in P_j, x_i=0.5\right\rbrace
	\end{eqnarray*}
	Then, the following holds obviously,
	\[OPT^{relax}\coloneqq\sum_{\mathbf{t}_i\in S_{1}\cup S_{0.5}} x_i \leq OPT \leq |\Delta_{min}| = dist_{\texttt{sub}}(J_{opt},I)\]
	%\[|\hat{\Delta}|=dist_{sub}(\hat{J},I)=\sum_{\mathbf{t}_i\in S_{1}\cup S_{0.5}} x_i\leq dist_{sub}(J_{OPT},I)\]
	Second, We have that
	\begin{eqnarray*}
	dist_{\texttt{sub}}(\hat{J},I)&=&|\hat{\Delta}|\\
	&\leq&
	|S_1| + |S_{0.5}| - |S_{P_j}|\\
	&=&
	\sum_{i\in S_{1}} x_i +
	2\sum_{i\in S_{0.5}} x_i- 2\sum_{i\in S_{P_j}} x_i\\
	&\leq&
	\sum_{i\in S_{1}} x_i +
	2\sum_{i\in S_{0.5}} x_i- 2\cdot\frac{1}{{\mathcal{X}(I)}}\sum_{i\in S_{0.5}} x_i\\
	&\leq&
	\sum_{i\in S_{1}} x_i + \left(2-\frac{2}{\mathcal{X}(I)}\right)\sum_{i\in S_{0.5}} x_i\\
	&\leq&
	\left(2-\frac{2}{\mathcal{X}(I)}\right)\sum_{i\in S_{1}\cup S_{0.5}} x_i\\
	&\leq&
	\left(2-\frac{2}{\mathcal{X}(I)}\right)\cdot OPT\\
	&\leq&
	\left(2-\frac{2}{\mathcal{X}(I)}\right)\cdot dist_{\texttt{sub}}(J_{opt},I)
	\label{eqn:2-1C}
	\end{eqnarray*}
	Therefore, $\hat{J}$ is a $\left(2-\frac{2}{\mathcal{X}(I)}\right)$-optimal subset repair of $I$.
\end{proof}
The number $\mathcal{X}(I)$ is unbounded, in the worst case, could be as large as $|I|$ so that it is a factor depending on the size of input.
\subsubsection{Improved ratio by triad eliminating}
Reducing the number of consistent partitions will improve the approximation.
We introduce triad elimination in this section to decrease the partition number into a factor which is independent with the size of input but only depending on the number $\sigma=|\mathsf{\Sigma}|$ of input functional dependencies.

\emph{Data reduction}.
Let $\mathbf{r},\mathbf{s},\mathbf{t}$ be three tuples in $I$, then they are called a \emph{triad} if any two of them are in a conflict with respect to \fdset.
An important observation is that any s-repair contains at most one tuple of a triad in $I$, especially in an optimal s-repair,
hence, any triad elimination yields a $1.5$-optimal s-repair of itself.
Therefore, we could preform a data reduction by eliminating all the disjoint triads without the loss of an approximation ratio $1.5$.

Based on the data reduction, the improved algorithm can be shown as follow.
\begin{algorithm}
	\caption{\textsf{TE LP-OSR}}  
	\hspace*{0.02in}{\bf Input:}
	\quad $n$-tuple instance $I$ over schema $R$, FD set \fdset\\
	\hspace*{0.02in}{\bf Output:} 
	optimal subset repair $J$ of $I$ with respect to \fdset\\
	\begin{algorithmic}[1]
		\STATE{Find a maximal tuple set of disjoint triads $\Delta$ from $I$}
		\STATE{$I^\prime\leftarrow{I\setminus{\Delta}}$}
		\STATE{$\hat{J}\leftarrow\text{\textsf{Baseline LP-OSR}}(I^\prime)$}
		\RETURN{$\hat{J}$}
	\end{algorithmic}
	\label{algo:te-lp-osr} 
\end{algorithm}

%TODO: analysis here
Let $\sigma$ be the number of functional dependencies in \fdset, then we claim that \textsf{TE LP-OSR} will return a better approximation as the following theorem.
\begin{theorem}
Algorithm~\textsf{\emph{TE LP-OSR}} returns a $\left(2-0.5^{\sigma-1}\right)$-optimal subset repair.
\label{thm:2-0.5}
\end{theorem}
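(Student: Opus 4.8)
The plan is to treat the returned repair $\hat J$ as deleting two disjoint groups of tuples — the triad tuples $\Delta$ and the tuples removed by \textsf{Baseline LP-OSR} inside $I'=I\setminus\Delta$ — and to charge each group separately against the optimum. Concretely, since $\hat J\subseteq I'$, the deleted set splits as a disjoint union and $dist_{\texttt{sub}}(\hat J,I)=|\Delta|+dist_{\texttt{sub}}(\hat J,I')$, so the whole argument reduces to two facts: (i) a structural bound $\mathcal{X}(I')\le 2^{\sigma}$, which upgrades Lemma~\ref{lem:2-k} applied to $I'$ into a $\bigl(2-0.5^{\sigma-1}\bigr)$ guarantee; and (ii) an accounting step that merges the $1.5$ factor from triad elimination with the $I'$ guarantee without the two interfering.

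First I would argue that $I'$ is triad-free: since $\Delta$ is a \emph{maximal} union of vertex-disjoint triads, any triad surviving in $I'$ would be disjoint from $\Delta$ and could be added, contradicting maximality. The heart of the proof is then the claim $\mathcal{X}(I')\le 2^{\sigma}$. Fix one FD $\varphi_i:\mathsf{X}_i\to\mathsf{Y}_i$. If some determinant class $[\mathbf t.\mathsf{X}_i]$ split into three or more determinant-dependent classes, one tuple from each of three of them would be pairwise $\varphi_i$-conflicting, that is, a triad — impossible in $I'$. So every determinant class has at most two determinant-dependent classes, the $\varphi_i$-conflict graph is a disjoint union of complete bipartite pieces, and it admits an explicit proper $2$-coloring $c_i\colon I'\to\{0,1\}$ (color the two dependent classes inside each determinant class $0$ and $1$; distinct determinant classes never $\varphi_i$-conflict). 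I would then take the product coloring $c(\mathbf t)=\bigl(c_1(\mathbf t),\dots,c_\sigma(\mathbf t)\bigr)\in\{0,1\}^{\sigma}$: whenever $\mathbf s,\mathbf t$ conflict they conflict through some $\varphi_i$, forcing $c_i(\mathbf s)\ne c_i(\mathbf t)$ and hence $c(\mathbf s)\ne c(\mathbf t)$. Thus $c$ partitions $I'$ into at most $2^{\sigma}$ consistent classes, so $\mathcal{X}(I')\le 2^{\sigma}$, and Lemma~\ref{lem:2-k} on $I'$ yields $dist_{\texttt{sub}}(\hat J,I')\le\bigl(2-\tfrac{2}{2^{\sigma}}\bigr)OPT_{I'}=\bigl(2-0.5^{\sigma-1}\bigr)OPT_{I'}$, where $OPT_{I'}$ is the optimum deletion for $I'$.

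For the accounting step, write $OPT=dist_{\texttt{sub}}(J_{opt},I)$ and split it as $a+b$ with $a=|\Delta\setminus J_{opt}|$ and $b=|I'\setminus J_{opt}|$. Every s-repair keeps at most one tuple of each triad, so the optimum deletes at least two of the three tuples in each of the $|\Delta|/3$ disjoint triads, giving $a\ge\tfrac{2}{3}|\Delta|$; and since consistency is hereditary, $J_{opt}\cap I'$ is a feasible repair of $I'$, so $OPT_{I'}\le b$. Adding these yields $OPT\ge\tfrac{2}{3}|\Delta|+OPT_{I'}$. Finally I would bound the algorithm's cost by $|\Delta|+\bigl(2-0.5^{\sigma-1}\bigr)OPT_{I'}$ and, using $\tfrac{3}{2}\le 2-0.5^{\sigma-1}$ (valid for $\sigma\ge 2$; the degenerate $\sigma=1$ case reads as ratio $1$ and is solved exactly), rewrite $|\Delta|=\tfrac{3}{2}\cdot\tfrac{2}{3}|\Delta|\le\bigl(2-0.5^{\sigma-1}\bigr)\tfrac{2}{3}|\Delta|$ to fold everything into $\bigl(2-0.5^{\sigma-1}\bigr)\bigl(\tfrac{2}{3}|\Delta|+OPT_{I'}\bigr)\le\bigl(2-0.5^{\sigma-1}\bigr)OPT$. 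The main obstacle is establishing $\mathcal{X}(I')\le 2^{\sigma}$ — specifically seeing that triad-freeness forces each single FD to be bipartite and that the product of the per-FD $2$-colorings remains proper; the rest is routine charging.
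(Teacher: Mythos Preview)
Your proposal is correct and follows essentially the same route as the paper: remove a maximal set of disjoint triads, observe that in the triad-free remainder every single FD's determinant class has at most two dependent classes so the conflict structure is $2^{\sigma}$-colorable, apply Lemma~\ref{lem:2-k} on $I'$, and combine the $3/2$ factor on $\Delta$ with the $2-0.5^{\sigma-1}$ factor on $I'$ via $\max\{3/2,\,2-0.5^{\sigma-1}\}=2-0.5^{\sigma-1}$ for $\sigma\ge 2$. Your accounting (splitting $OPT$ as $a+b$ and your explicit product coloring) is in fact more carefully spelled out than the paper's somewhat terse final step, but the ideas are identical.
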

\begin{proof}
Algorithm~\ref{algo:te-lp-osr} does find an s-repair, because all the triads are eliminated from $I$, hence conflicts involving any tuples in $\Delta$ are removed from $I$,
and all conflicts in the reduced data $I\setminus{\Delta}$ are removed by \textsf{Baseline-LP-OSR}.\par
Moreover, any optimal s-repair of $I$ contains at most one tuples of a triad which yields a $1.5$-approximation for the subset $\Delta$ of $I$, formally, we have,
\[dist_{\texttt{sub}}(\emptyset,\Delta) \leq 1.5\cdot dist_{\texttt{sub}}(\Delta_{opt},\Delta)\]
Additionally, consider each \emph{determinant equivalence class} with respect to any single FD,
no triad in it, that is, $I$ could be partitioned into 2 consistent classes with respect to this FD.
%it is at most a complete bipartite graph, therefore it has a chromatic number at most $2$.
It implies $I$ could be partitioned into $2^\sigma$ consistent classes with respect to \fdset.
Due to lemma~\ref{lem:2-k}, $J^\prime$ is a $\left(2-0.5^{\sigma-1}\right)$-optimal s-repair of $I\setminus{\Delta}$.\par\smallskip
\noindent Let $J^\prime_{opt}$ be the optimal s-repair of $I\setminus{\Delta}$, then,
\[dist_{\texttt{sub}}(\hat{J},I\setminus{\Delta})\leq\left(2-0.5^{\sigma-1}\right)\cdot dist_{\texttt{sub}}(J^\prime_{opt},I\setminus{\Delta})\]
And $\hat{J}\cup\emptyset$ is an s-repair of $(I\setminus\Delta)\cup\Delta = I$, hence,
\[dist_{\texttt{sub}}(\hat{J}\cup\emptyset,I) \leq \max\{1.5,2-0.5^{\sigma-1}\}\cdot dist_{\texttt{sub}}(J_{opt},I)\]
Without loss of generality, we have $\sigma\geq2$, then Algorithm~\ref{algo:te-lp-osr} returns a $\left(2-0.5^{\sigma-1}\right)$-approximation.
\end{proof}
\noindent\textbf{{\normalsize \textit{Remarks}}}.
Note that this ratio depends on only the size of functional dependency set other than the scale of input data.
Therefore, a simple corollary implies a ratio of 1.5 for \fdseta, \fdsetb, and \fdsetc, and 1.75 for \fdsetd, no matter how large of the input data.\par\smallskip
A naive enumeration of triad is time wasting.
In our algorithm, as in the proof of theorem~\ref{thm:2-0.5}, it is not necessary to eliminate all disjoint triads as possible.
Instead, to obtain a good ratio, it needs only eliminate all disjoint triads with respect to each single functional dependency.
Then, for each single functional dependency, sorting or hashing techniques could be utilized to speed up the triad eliminating,
and skip the finding of triads across different functional dependencies.

\subsubsection{Improved ratio by k-quasi-Tur$\acute{\text{\large \textit{a}}}$n property}
Triad elimination based \textsf{TE LP-OSR} does not capture the characteristic of input data instance.
We next give another approximation algorithm \textsf{QT LP-OSR}.
In fact, we found that constraints could be derived to strengthen LP formula.
Intuitively, for each functional dependency, each determinant equivalence class contains several determinant-dependent equivalence classes, say $k$, hence, tuples in at least $k-1$ classes should be eliminated from $I$ to obtain an s-repair.
Therefore, constraints could be invented to limit the lower bound of variables taking value $1$ according to the $k-1$ classes, so that a better ratio could be obtained for some featured cases.

Formally, consider a determinant equivalence class $[p]$ containing $m$ determinant-dependent equivalence classes $[pq_1]$, ..., $[pq_m]$, hence, \[\left|[p]\right|=\left|[pq_1]\right|+\dots+\left|[pq_m]\right|\]

\noindent\textbf{\emph{$k$-quasi-Tur$\acute{\text{a}}$n}}.
Given $k>1$, a tuple $\mathbf{t}\in{I}$ is of \emph{$k$-quasi-Tur$\acute{\text{a}}$n} property if and only if there is some functional dependency $\varphi$ and a determinant equivalence class $[p]$ with respect to $\varphi$ such that \[\mathbf{t}\in[p], {m\geq3}, \forall i, 1\le i\le m, \left|[p]\right|-\left|[pq_i]\right|\geq k\left|[pq_i]\right|\]
\begin{example}
As mentioned in Example \ref{eg:dec}, given the functional dependency $\mathsf{fd_2 : [zip] \rightarrow [CT, ST]}$, the determinant equivalence class $\mathsf{[60462]}$ is partitioned into $m = 4$ determinant-dependent equivalence classes. It is easy to verify that $\mathsf{[60462]}$ is a \emph{$2$-quasi-Tur$\acute{\text{a}}$n}
\end{example}
Then we characterize the data with parameter $\eta_{k}$ which is the portion of $k$-quasi-Tur$\acute{\text{a}}$n tuples in $I$.
A strengthened LP could be formulated as follows,
\begin{eqnarray*}
minimize & \sum_{\mathbf{t}_i\in I} x_i&\\
s.~t. &x_i\geq 0, &\forall~\mathbf{t}_i\in I\\
&  x_i+x_j\geq 1,& \forall~\{\mathbf{t}_i,\mathbf{t}_j\}\nvDash\mathsf{\Sigma},\\
&\sum\limits_{\mathbf{t}_i\in [p]} x_i> \left|[p]\right| - \max\limits_{j}{\left|[pq_{j}]\right|} - \epsilon, & \text{for every } [p].
\end{eqnarray*}

In this model, pick a small enough $\epsilon>0$, the inequality guarantees that in any integral solution, at least $|[p]|-\max_{j}{\left|[pq_{j}]\right|}$ variables taking value of $1$. However, in the fractional solution, we could not limit the number of $1$-variables, for example, a slop line cannot distinguish points $(0.5,0.5)$, $(0,1)$ and $(1,0)$.
However, even so, we will show that this number could still be limited to improve the approximation ratio.
\begin{claim}
	Every extreme point of any solution to the linear programming is in $\{0,0.5,1\}$.
\end{claim}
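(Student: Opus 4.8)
The plan is to reduce the statement to the classical half-integrality of the vertex-cover LP, as in~\cite{Nemhauser1975}, and to argue that the new cardinality constraints $\sum_{\mathbf{t}_i\in[p]}x_i > |[p]|-\max_j|[pq_j]|-\epsilon$ do not destroy this property precisely because they are \emph{strict} inequalities. The crucial observation I would establish first is that, at \emph{any} feasible point $x$, every cardinality constraint holds with a strictly positive slack
\[
s_{[p]} \;=\; \sum_{\mathbf{t}_i\in[p]}x_i - \left(|[p]|-\max_{j}|[pq_j]|-\epsilon\right) \;>\; 0 ,
\]
simply because the inequality is written with ``$>$''. Consequently no cardinality constraint is ever tight, so it can never pin a coordinate, and an extreme point is cut out locally only by the vertex-cover constraints $x_i\ge 0$ and $x_i+x_j\ge 1$ for $\{\mathbf{t}_i,\mathbf{t}_j\}\nvDash\mathsf{\Sigma}$.

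Given this, I would run the standard perturbation argument. Suppose, toward a contradiction, that an extreme point $x$ has some coordinate outside $\{0,\tfrac12,1\}$. I first note that $x_i\le 1$ at any extreme point: if $x_i>1$, then $x_i$ can be moved both up and down by a small amount while keeping all of $x_i\ge 0$, $x_i+x_j\ge 1$, and the (slack) cardinality constraints satisfied, so $x$ would lie strictly inside a segment. Hence the fractional non-half coordinates split into $P=\{i:\tfrac12<x_i<1\}$ and $M=\{i:0<x_i<\tfrac12\}$ with $P\cup M\neq\emptyset$. I then define $x^{+}$ by adding $\delta$ to every coordinate in $P$ and subtracting $\delta$ from every coordinate in $M$, and $x^{-}$ by the opposite moves, leaving all other coordinates fixed, so that $x=\tfrac12(x^{+}+x^{-})$ and $x^{+}\neq x^{-}$.

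It remains to choose $\delta>0$ small enough that both $x^{+}$ and $x^{-}$ stay feasible, which contradicts extremality. I would pick $\delta$ below (i) $\min_{i\in P\cup M}\min(x_i,1-x_i,|x_i-\tfrac12|)$, so perturbed coordinates stay inside $(0,1)$ on the correct side of $\tfrac12$; (ii) the slack of every non-tight edge constraint; and (iii) $\min_{[p]} s_{[p]}/|[p]|$, the bound guaranteeing that each cardinality left-hand side, which changes by at most $\delta|[p]|$, keeps its positive slack. The one genuinely delicate case is a \emph{tight} edge $x_i+x_j=1$: a short check shows it can only have one endpoint in $P$ and the other in $M$ (two endpoints above $\tfrac12$ force a sum $>1$, and two below force an infeasible sum $<1$), so the two coordinates move in opposite directions and the constrained sum is exactly preserved. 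The main obstacle -- and the real content beyond the textbook argument -- is precisely step (iii): one must be sure the new constraints can never become binding, and this is exactly what the strict ``$>$'' together with the fixed $\epsilon$ buys us, since every feasible point then carries a genuine positive slack $s_{[p]}$ that a sufficiently small perturbation cannot exhaust.
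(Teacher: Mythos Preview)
Your proposal is correct and follows the contradiction approach the paper explicitly hints at but omits (``One can simply verify the correctness and prove it by contradiction, we omit the proof here''). Your key observation---that the strict ``$>$'' in the cardinality constraints guarantees positive slack at every feasible point, so these constraints are never active and the Nemhauser--Trotter perturbation goes through unchanged---is exactly the right way to fill in the gap, and your case analysis for tight edge constraints is complete.
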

One can simply verify the correctness and prove it by contradiction, we omit the proof here.

Every solution of this strengthened linear programming still admits the half-integral property,
hence, we take the basic rounding strategy such that
\begin{equation*}
x_i = \left\{
\begin{array}{ll}
{0}, &\text{if}\quad{x_i=0},\\
{1}, &\text{if}\quad{x_i=0.5},\\
{1}, &\text{if}\quad{x_i=1}.
\end{array}
\right.
\end{equation*}
then, for tuples in each determinant equivalence class $[p]$, at most $\max_{j}{|[pq_{j}]|}$ variables will be rounded as $1$ wrongly.
Formally, for each determinant equivalence class $[p]$, define $S_1^{[p]}$ and $S_{0.5}^{[p]}$ as follows,
\[S_1^{[p]}\coloneqq\left\lbrace x_i|~\mathbf{t}_i\in [p], x_i=1\right\rbrace,\quad S_{0.5}^{[p]}\coloneqq\left\lbrace x_i|~\mathbf{t}_i\in [p], x_i=0.5\right\rbrace\]
then we have the following lemma,
\begin{lemma}
$\left|S_1^{[p]}\right|\geq |[p]|-2\max\limits_{j}{|[pq_{j}]|}$,\quad
$\left|S_{0.5}^{[p]}\right|\leq 2\max\limits_{j}{|[pq_{j}]|}$
\end{lemma}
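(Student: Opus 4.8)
The plan is to derive both inequalities from the single class-wise constraint $\sum_{\mathbf{t}_i\in[p]} x_i > |[p]| - \max_j|[pq_j]| - \epsilon$, together with half-integrality. First I would partition $[p]$ by introducing $S_0^{[p]} := \{x_i \mid \mathbf{t}_i\in[p],\, x_i=0\}$, so that $|[p]| = |S_0^{[p]}| + |S_{0.5}^{[p]}| + |S_1^{[p]}|$. Since the preceding claim guarantees we are at an extreme point with every $x_i\in\{0,0.5,1\}$, the class sum is exactly $\sum_{\mathbf{t}_i\in[p]} x_i = |S_1^{[p]}| + \tfrac12|S_{0.5}^{[p]}| = |[p]| - |S_0^{[p]}| - \tfrac12|S_{0.5}^{[p]}|$.

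Substituting this into the class constraint and writing $M := \max_j|[pq_j]|$ yields $|S_0^{[p]}| + \tfrac12|S_{0.5}^{[p]}| < M + \epsilon$. The left-hand side is a nonnegative multiple of $\tfrac12$ and $M$ is an integer, so choosing $\epsilon<\tfrac12$ (permitted, as the model only requires a small enough $\epsilon>0$) upgrades this strict inequality to the exact bound $|S_0^{[p]}| + \tfrac12|S_{0.5}^{[p]}| \le M$. Both claims are then immediate: dropping the nonnegative term $|S_0^{[p]}|$ gives $|S_{0.5}^{[p]}| \le 2M$, and since $|S_0^{[p]}|\ge 0$ we get $|S_0^{[p]}| + |S_{0.5}^{[p]}| \le 2\bigl(|S_0^{[p]}| + \tfrac12|S_{0.5}^{[p]}|\bigr) \le 2M$, hence $|S_1^{[p]}| = |[p]| - |S_0^{[p]}| - |S_{0.5}^{[p]}| \ge |[p]| - 2M$.

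The step I would treat most carefully is the passage from the $\epsilon$-slack strict inequality to the clean integer bound; this is exactly why the strengthened model writes $>\,\cdots-\epsilon$ rather than $\ge$, and I would state the hypothesis $0<\epsilon<\tfrac12$ explicitly. I would also emphasize that the class constraint is genuinely doing the work here: on the plain vertex-cover LP, a determinant class split into three or more equally sized dependent classes admits the all-$\tfrac12$ extreme point, for which $|S_{0.5}^{[p]}| = |[p]|$, far exceeding $2M$. It is precisely the constraint $\sum_{\mathbf{t}_i\in[p]} x_i > |[p]| - M - \epsilon$ that excludes such points and makes the bound possible.
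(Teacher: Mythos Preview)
Your proposal is correct and takes essentially the same approach as the paper: both arguments combine the class constraint $\sum_{\mathbf{t}_i\in[p]} x_i > |[p]|-\max_j|[pq_j]|-\epsilon$ with half-integrality and the choice $\epsilon<\tfrac12$ to pass from a strict to a clean integer bound. Your version is in fact a bit tidier---by explicitly introducing $S_0^{[p]}$ you derive both inequalities directly, whereas the paper only spells out the bound on $|S_1^{[p]}|$ (via the ``worst case'' substitution $|S_{0.5}^{[p]}|\le |[p]|-|S_1^{[p]}|$, which is exactly your $|S_0^{[p]}|\ge 0$) and leaves the second inequality implicit.
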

\begin{proof}
	Due to the constraint \[\sum_{\mathbf{t}_i\in [p]} x_i> |[p]| - \max_{j}{|[pq_{j}]|} - \epsilon\]
	hence,
	\[\left|S_1^{[p]}\right|+0.5\left|S_{0.5}^{[p]}\right|>|[p]|-\max_{j}{|[pq_{j}]|}-\epsilon\]
	The worst case is that $|S_{0.5}^{[p]}|\leq|[p]|-|S_1^{[p]}|$, then we have,
	\[\left|S_1^{[p]}\right|+0.5(|[p]|-\left|S_{1}^{[p]}\right|)>|[p]|-\max_{j}{|[pq_{j}]|}-\epsilon,\]
	that is
	\[\left|S_1^{[p]}\right|>|[p]|-2\max_{j}{|[pq_{j}]|}-2\epsilon\]
	Pick small $\epsilon$ such that $\epsilon<0.5$, then this lemma follows.
\end{proof}

This lemma derives the a ratio depending on the portion of $k$-quasi-Tur$\acute{\text{a}}$n tuples $\eta_{k}$ where $0<\eta_{k}\leq 1$ for any $k\geq2$.

\begin{theorem}
	\textsf{\emph{QT LP-OSR}} returns a $(2-\eta_{k}+\frac{\eta_{k}}{k})$-optimal subset repair.
\end{theorem}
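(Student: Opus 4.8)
The plan is to charge the rounding loss of $\hat J$ against $OPT$ one determinant equivalence class at a time, extracting a genuine saving precisely on the $k$-quasi-Tur\'an classes. First I would record what the rounding produces: since every tuple with $x_i\in\{0.5,1\}$ is deleted, the deletion set is $\hat\Delta=S_1\cup S_{0.5}$, so $dist_{\texttt{sub}}(\hat J,I)=|S_1|+|S_{0.5}|$. Because the program is a relaxation, $OPT^{relax}=\sum_i x_i=|S_1|+\tfrac12|S_{0.5}|\le OPT$, and a one-line rearrangement gives the identity $dist_{\texttt{sub}}(\hat J,I)=2\,OPT^{relax}-|S_1|\le 2\,OPT-|S_1|$. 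Hence it suffices to prove a lower bound $|S_1|\ge \eta_k\!\left(1-\tfrac1k\right)OPT$, which immediately yields the claimed ratio. The only role of the half-integrality (already established) and of the strengthened constraint is to make this lower bound on the integral part $S_1$ available.

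The second step harvests that lower bound from the quasi-Tur\'an classes using the preceding lemma. Fix a determinant equivalence class $[p]$ witnessing the $k$-quasi-Tur\'an property. Applying the defining inequality $|[p]|-|[pq_i]|\ge k\,|[pq_i]|$ to the largest dependent class rewrites as $\max_j|[pq_j]|\le\frac1k\bigl(|[p]|-\max_j|[pq_j]|\bigr)=\frac1k\,opt_{[p]}$, where $opt_{[p]}=|[p]|-\max_j|[pq_j]|$ is exactly the number of deletions any repair must make inside $[p]$ to satisfy the witnessing FD. Combining this with the lemma's bound $|S_1^{[p]}|\ge|[p]|-2\max_j|[pq_j]|=opt_{[p]}-\max_j|[pq_j]|$ gives the per-class estimate $|S_1^{[p]}|\ge\left(1-\tfrac1k\right)opt_{[p]}$: on a quasi-Tur\'an class the program already commits a $\left(1-\tfrac1k\right)$-fraction of the mandatory deletions to integral value $1$, with no rounding loss.

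The third step is aggregation. The determinant equivalence classes of a single FD partition $I$, so the quasi-Tur\'an ones among them are pairwise disjoint; summing the per-class estimate over them counts no tuple twice and lower bounds $|S_1|$ by $\left(1-\tfrac1k\right)\sum_{[p]}opt_{[p]}$, the saving sitting on the quasi-Tur\'an share of $OPT$. Feeding this into $dist_{\texttt{sub}}(\hat J,I)\le 2\,OPT-|S_1|$ produces a ratio of the shape $2-\left(1-\tfrac1k\right)\cdot(\text{quasi-Tur\'an share of }OPT)$. To reach the stated $\eta_k$ I would then convert between the quasi-Tur\'an \emph{tuple} fraction and this cost share via $|[p]|=opt_{[p]}+\max_j|[pq_j]|\le\left(1+\tfrac1k\right)opt_{[p]}$, which turns a count of quasi-Tur\'an tuples into the corresponding sum of $opt_{[p]}$.

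I expect the aggregation to be the main obstacle, for two reasons. First, the determinant equivalence classes of \emph{different} FDs overlap, so a tuple may be quasi-Tur\'an via several FDs at once; summing $|S_1^{[p]}|$ over all quasi-Tur\'an classes would double-count, and I must instead charge each quasi-Tur\'an tuple to a single witnessing class (or work inside one FD's partition) to keep the bound on $|S_1|$ honest. Second, the conversion from the cost share to the tuple fraction $\eta_k$ is where the arithmetic is delicate: the crude estimates $opt_{[p]}\ge\frac{k}{k+1}|[p]|$ together with $OPT\le n$ only deliver $2-\frac{k-1}{k+1}\eta_k$, so obtaining the sharper $2-\eta_k\!\left(1-\tfrac1k\right)$ requires charging each quasi-Tur\'an tuple's saving directly against its own contribution to $OPT$ rather than routing it through $n$. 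Everything else — the rounding identity, the relaxation bound, and the per-class inequality — is routine once the lemma is in hand.
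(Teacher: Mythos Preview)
Your route differs from the paper's. You work globally through the identity $|\hat\Delta|=2\,OPT^{relax}-|S_1|$ and then seek a lower bound on $|S_1|$ from the quasi-Tur\'an classes. The paper instead partitions the tuples into $H$ (the $k$-quasi-Tur\'an ones) and $I\setminus H$, bounds the deletion ratio on each piece separately against $J_{opt}$'s deletions on that same piece---getting $1+\tfrac1k$ on $H$ via the lemma (since $0.5\,|S_{0.5}^{[p]}|\le \max_j|[pq_j]|\le\tfrac1k\bigl(|[p]|-\max_j|[pq_j]|\bigr)\le\tfrac1k\,OPT_{[p]}$) and the trivial $2$ on $I\setminus H$---and then simply takes the convex combination with weights $\eta_k$ and $1-\eta_k$.

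The obstacle you isolate is exactly the paper's combination step, seen from a different angle. Merging two ratios $a_1/b_1\le 1+1/k$ and $a_2/b_2\le 2$ gives $(a_1+a_2)/(b_1+b_2)=2-(1-1/k)\cdot b_1/(b_1+b_2)$, so to reach $2-\eta_k+\eta_k/k$ one needs the \emph{cost} share $\tfrac{|(I\setminus J_{opt})\cap H|}{|I\setminus J_{opt}|}$ to be at least the \emph{tuple} share $\eta_k=|H|/|I|$. Your ``crude estimate'' $b_1\ge\tfrac{k}{k+1}|H|$ and $b_1+b_2\le n$ yields only $b_1/(b_1+b_2)\ge\tfrac{k}{k+1}\eta_k$, hence $2-\tfrac{k-1}{k+1}\eta_k$; the paper does not supply the additional argument either, it asserts the $\eta_k$-weighted combination directly. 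So your diagnosis of where the difficulty sits is accurate, and the paper's own proof resolves it only by fiat. If you want to match the paper, replace your $|S_1|$-accounting by the two-piece ratio split and the final convex combination; just be aware that the passage from cost fraction to tuple fraction $\eta_k$ is taken for granted there rather than proved.
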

\begin{proof}
Let $OPT$ be the fractional optimal solution of the strengthened LP, thus
$OPT = \left|S_1\right|+0.5\left|S_{0.5}\right|$
Consider the subset $H$ of all the $k$-quasi-Tur$\acute{\text{a}}$n tuples, let the solution intersecting with $H$ is
$OPT_H=\left|S_1^{H}\right|+0.5\left|S_{0.5}^{H}\right|$.
Let $\hat{J}$ be the approximated s-repair, then in $H$, the number of tuples rounded out of the approximated s-repair,
\[(I\setminus\hat{J})\cap{H} = OPT_H + 0.5\left|S_{0.5}^{H}\right|\]
and
\[|(I\setminus{J_{opt}})\cap{H}| \geq OPT_H\]
then we derive the ratio as follows
\begin{eqnarray*}
\frac{|(I\setminus\hat{J})\cap{H}|}{|(I\setminus{J_{opt}})\cap{H}|}
&\leq&
\frac{OPT_H+0.5\left|S_{0.5}^{H}\right|}{OPT_H}\\
&\leq&
1+\min\limits_{[p]}\left\lbrace0.5\cdot \frac{2[pq_{\max}]}{|[p]|-|[pq_{\max}]|}\right\rbrace\\
&\leq&
1+\frac{1}{k}\\
\end{eqnarray*}
Then for the other part,
\[\frac{(I\setminus\hat{J})\cap I\setminus{H}}{(I\setminus{J_{opt}})\cap I\setminus{H}}\leq 2\]
Therefore we have $\frac{dist_{\texttt{sub}}(\hat{J},I)}{dist_{\texttt{sub}}(J_{opt},I)}\leq\eta_{k}(1+\frac{1}{k}) + 2(1-\eta_{k}) = 2 - \eta_{k} + \frac{\eta_{k}}{k}$
\end{proof}

Combine the approximations based on the strengthened LP and the triad elimination, a better approximation is provided.
Note that, it is polynomial-time to find a best pair $(k,\eta_k)$ to capture the data characteristic as possible, so as to improve the ratio as much as possible.

\section{Fast Estimate FD-Inconsistency Degree}
The hardness of OSR computing implies FD-inconsistency degree evaluation is also hard.
Therefore, we take effort to find an approximation of such degree.
Fortunately, an observation is that we aim to compute the ratio, but not any OSR itself, hence, to achieve a constant relative ratio, a relaxation of approximation ratio with an $O(n)$ factor is allowed.
In this section, we show a fast FD-inconsistency evaluation of subset query result.
To obtain a good approximation in sublinear complexity, we allow a relative ratio $2$ and an additional additive error $\epsilon$ where $0 < \epsilon < 1$, \ie, given an FD set \fdset and a subset query $Q$ on an instance $I$, the algorithm computes an estimation $\tilde{incDeg}(Q(I), \mathsf{\Sigma})$ such that with high constant probability such that \[incDeg(Q(I), \mathsf{\Sigma}) \le \tilde{incDeg}(Q(I), \mathsf{\Sigma}) \le 2 \cdot incDeg(Q(I), \mathsf{\Sigma}) + \epsilon.\]

\subsection{Subset Query Oracle}
As the diversity of subset queries, we model them as a $\subseteq$-oracle, such that, query complexity of the algorithm can be analyzed in terms of operations supported by the oracle.
The rest work is to find out the way of implementing the $\subseteq$-oracle for a specific subset query.
The time complexity of FD-Inconsistency evaluation for this kind of subset query then can be derived by combining query complexity and time complexity of the oracle.\par
Given an instance $I$ of a relation schema $R$ and a subset query $Q$, the corresponding $\subseteq$-oracle $O(I, Q)$ is required to answer three  queries about the result $Q(I)$:\smallskip

$\mathsf{O(I, Q)\textsf{.sample\_tuple()}}$.
Since the algorithm introduced later is sample-based, the oracle has to provide a uniform sample on the result set $Q(I)$. But sampling after the evaluation of $Q$ is incompetent to obtain a sublinear approximation, since the retrieval of $Q(I)$ will take at least linear time. A novel method of sampling is essential to implement the oracle.\smallskip

$\mathsf{{O(I, Q).in\_result(\mathbf{t})}}$.
It is to check the membership of a tuple $\mathbf{t}$ of $Q(I)$, such that, it returns true if the input tuple $\mathbf{t}$ belongs to $Q(I)$, otherwise , it returns \emph{false}.
As we shown in the next subsection, it is mostly used to check if  $Q(I)$ contains the conflict $\{\mathbf{t},\mathbf{s}\}$.\smallskip

$\mathsf{O(I, Q)\textsf{.size()}}$.
Recall the definition of $incDeg(Q(I), \mathsf{\Sigma})$, the result size is in the denominator.
It only returns the number of tuples in $Q(I)$.
Obviously, it is intolerable to compute the size by evaluating the query. \smallskip

As an example, we next show a concrete implementation of $\subseteq$-oracle for range queries.\par
\noindent\textbf{An implement of $\subseteq$-oracle.} In the following, we present an indexing-based implement of $\subseteq$-oracle for range queries. %As we all know, range query is a common database operation. 
Without the loss of generality, let $[low, high]$ be the query range of attribute $A$, hence, the corresponding query result consists of all tuples $\mathbf{t}$ such that $low \le  \mathbf{t}.A \le high$.
Then only $B^+$-tree index is sufficient to implement the $\subseteq$-oracle.%$B^+$-tree which is commonly used in databases and file systems supports fast range query processing.
As mentioned before, the most challenge is to implement the three operations in sublinear time.

Recall the general structure of $B^+$-tree,
each node maintains a list of key-pointer pairs.
And every node, for each key in it, say $k$, records two counters:
the number $N_{<}^k$ of tuples whose keys are less than $k$,
and the number $N_{=}^k$ of tuples whose keys are equal to $k$.
Given a range query $Q$, say $[l,h]$, on an instance $I$ with $n$ tuples,
$\subseteq$-oracle first queries the boundary (leaf) nodes $l$ and $h$
to get the size of $Q(I)$ such that $N_{=}^{h}+N_{<}^{h}-N_{<}^{l}$.
Then $\subseteq$-oracle could sample an integer $d$ uniformly from $[N_{<}^{l},N_{=}^{h}+N_{<}^{h}]$,
then fetch the tuple in $B^+$-tree by performing a binary search of $d$ where the offsets of $d$ to counters are taken as the keys.
As for verifying whether a tuple $\mathbf{t}$ is in $Q(I)$,
it is easy to make a comparison with the boundary of $Q$.
At last, the size of $Q(I)$ can be easily calculated,
since it is equal to the length of the integer interval.
Therefore, all the three operations is tractable in a logarithmic time.

In fact, another implementation is much more straightforward.
Note that tuples are arranged in specific order as a list, and the result of a range query is always a consecutive part of it.
Then, for each tuple, label it with a distinct $id$, so that each label represents its corresponding tuple.
All the $id$s are consecutive and sorted in the order induced by the selection condition.
Then it is easy to verify that all the three kinds of queries can be answered in $O(1)$ time.

Nevertheless, based on our model, one could also be free from the consideration on materialization of query result,
such as we shown for range queries, the materialization of query result could be avoided.
%Besides, query size estimation techniques still help to avoid the query result materialization in practice for a variety of subset queries.
%Moreover, any implementation with error guarantee also works in our framework.

\subsection{Ranking and $(2,\epsilon)$-Estimation}
%---------------------------------------------
%To introduce our methodology of FD-Inconsistency Evaluation, we first define the \textit{conflict set} of a tuple.
%Given a set of functional dependencies $\mathsf{\Sigma}$,
%a conflict set $C$ of a tuple $\mathbf{t}$ is the union of $[\mathbf{t}.\mathsf{X}]\setminus[\mathbf{t}.\mathsf{XY}]$
%for each $\varphi \colon \mathsf{X} \rightarrow \mathsf{Y} \in \mathsf{\Sigma}$.

Recall that, a two-tuple subset $J = \{ \mathbf{t}, \mathbf{s} \}$ is a conflict if and only if $\mathbf{s} \in [\mathbf{t}.\mathsf{X}] \setminus [\mathbf{t}.\mathsf{XY}]$ with respect to some FD: $\mathsf{X} \rightarrow \mathsf{Y}$ in $\mathsf{\Sigma}$.
Let $C$ be the set of all conflicts in an instance $I$, then an \textit{s-repair} $S$ of $I$ can be derived in the following way:
ranking all the conflicts of $C$ in an ascendant order $\Pi$,
\emph{pick} the current first conflict $J=\{\mathbf{t},\mathbf{s}\}$ and \emph{remove} tuples $\mathbf{t}$ and $\mathbf{s}$ from $I$,
then \emph{eliminate} all the conflicts containing $\mathbf{t}$ or $\mathbf{s}$ from $C$,
repeat the \emph{pick-remove-eliminate} procedure until no any conflict left in $C$,
then the $I$ left is a repair $S$.

We claim that $S$ is a 2-optimal s-repair of $I$.
The proof is quite straightforward, observe that,
any repair has to eliminate at least one tuple of the conflicts picked in such procedure,
then we have
\begin{equation*}
\frac{1}{2} dist_{\texttt{sub}}(S, I) = \frac{1}{2}(|I| - |S|) \le dist_{\texttt{sub}}(S_{opt}, I),
\end{equation*}
thus achieving a 2-optimal s-repair.

%Consider an optimal s-repair $S$ of $I$ with $n$ tuples, if we uniformly sample $p$ tuples,
%and count the number of tuple not in $S$, say $q$, then \[\frac{p-q}{p}\cdot n\] is a $(2n,\epsilon)$-approximation of the size of $S$ with a high probability by Chernoff bound.
%And observe each ranking $\Pi$ decides a 2-optimal s-repair $S$,
%so that we could scan the ranking once, \emph{verify} the membership of each tuple in $S$, and count the number $q$,
%however, in such a trivial way, it takes a linear time complexity.
%In the next subsection, we will give a sublinear time implementation of the verification step.

By applying Chernoff bound, if we uniformly sample $p = \Theta(\frac{1}{\epsilon^2})$ tuples, and count the number of tuple not in $S$, say $q$, then with high probability \[|S| - \frac{\epsilon}{2} n \le \frac{p - q}{p} \cdot n \le |S| + \frac{\epsilon}{2} n.\] Hence we can obtain a $(2, \epsilon)$-approximation of $incDeg(I,\mathsf{\Sigma})$ as defined previously. And observe each ranking $\Pi$ decides a 2-optimal s-repair $S$ so that we could scan the ranking once, \emph{verify} the membership of each tuple in $S$, and count the number $q$, however, in such a trivial way, it takes a linear time complexity. In the next subsection, we will give a sublinear time implementation of the verification step.

We argue that the sampling method mentioned above still works for the 2-optimal s-repair of $Q(I)$ with the same probabilistic error bound.
\begin{figure}[h]
\begin{small}
\centering
\vspace{0mm}
\includegraphics[height=40mm]{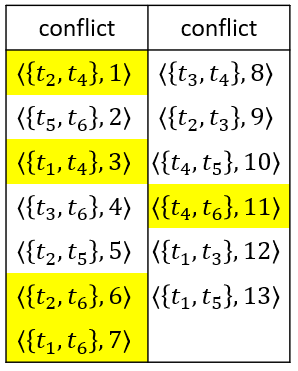}
\vspace{-3mm}
\caption{Ranking of conflicts in $I$ and Ranking of conflicts in $Q(I)$ (yellow shaded)}
\label{fig:eg_qicfs}
\vspace{0mm}
\end{small}
\end{figure}
%上面是整个数据集的，下面说一下可继承性
Consider a subset query $Q$ on $I$,
%and a set of functional dependencies $\mathsf{\Sigma}$, a two-tuple subset $J$ of $I$ is called a conflict in $I$ if $J \nvDash\varphi$ for some $\varphi \in \mathsf{\Sigma}$.
a conflict $J$ in any $Q(I)$ is still a conflict in $I$, and the ranking induced by any $Q(I)$ from $\Pi$ is still ascendant.
Continue with Example \ref{eg:incDeg}, given a range query $Q = [15, 45]$ on \textsf{PR}, the result set $Q(I)$ is $\{\mathbf{t_1},\mathbf{t_2},\mathbf{t_4},\mathbf{t_6}\}$. As shown in the figure \ref{fig:eg_qicfs}, all conflicts in $Q(I)$ are yellowed faded and the ranking induced by them from $\Pi$ is still ascendant.
Let $S$ be an optimal s-repair of $Q(I)$, then ranking $\Pi$ could be reused to compute a $(2, \epsilon)$-approximation of $incDeg(Q(I),\mathsf{\Sigma})$.%$(2n,\epsilon)$-approximation of the size of $S$.

\subsection{Fast Estimate FD-inconsistency Degree}
We first settle the preprocessing method, then show an efficient implementation of the \emph{sample-and-verify} method for subset queries.
\subsubsection{Preprocessing}
Ranking-based method mentioned above implies a pre-defined rank could be reused for the FD-inconsistency degree evaluation of every subset query result.
Therefore, we discover all the conflicts in $I$ with respect to $\mathsf{\Sigma}$ in the preprocessing step, and assign a unified rank to $C$ in advance.
That is a distinct rank $r$ is assigned to each conflict $J=\{ \mathbf{t}, \mathbf{s}\}$ as shown in figure \ref{fig:eg_cfs}.\par
%As a result, $I$ is reorganized as a set of tuples $\langle J, r \rangle$ which is called \emph{conflict set} in the rest of this paper.
\begin{algorithm}  
	\caption{\textsf{Preprocessing Procedure}}  
	\label{alg:D}  
	\hspace*{0.02in}{\bf Input:}
	An instance $I$ of a relation schema $R$, and a set of functional dependencies $\mathsf{\Sigma}$\\
	\hspace*{0.02in}{\bf Output:} 
	Set $C$ of conflicts in $I$ and a ascendant ranking $\Pi$ on $C$
	
	\begin{algorithmic}[1] 
		\FOR {each two tuples subset $J = \{ \mathbf{t}, \mathbf{s} \}$ of $I$}
		\IF {$J$ is a $\varphi$-\emph{conflict} for some $\varphi \in \mathsf{\Sigma}$ in $I$}
		\STATE {Generate a unique and no duplicated ranking $r$;}
		\STATE {Append $\langle \{ \mathbf{t}, \mathbf{s} \}, r \rangle$ to $C$;}
		\ENDIF
		\ENDFOR
		\STATE {Sort $C$ according to $r$;}
	\end{algorithmic}  
\end{algorithm}
Algorithm \ref{alg:D} illustrates the preprocessing procedure.
Let $n$ be the size of $I$.
The running time of Preprocessing is at worst $O(n^2\log{n})$.
Because there are at most $O(n^2)$ tuple pairs of $I$ and we consider data complexity in this paper,
it takes $O(n^2)$ time to find out all conflicts of $I$.
And Step 7 may take $O(n^2\log{n})$ time.
Note that, the number of conflicts are usually not that large in practice, techniques like hash-based partition could be taken as a tool to find all possible conflicts,
so that the time cost of preprocessing could be further lower,
but we do not emphasis them in this paper.

\subsubsection{Verification Locally}
Recall the \emph{sampling-and-verification} procedure,
for any tuple $\mathbf{t}\in{Q(I)}$ sampled uniformly,
it is to check if $\mathbf{t}$ is in the 2-optimal s-repair of $Q(I)$ derived by given ranking $\Pi$.
During this procedure, every conflict $J$ in $Q(I)$ eliminated in the checking procedure has a lowest rank when we turn to check it.
That is, any conflict $J^\prime$ in $Q(I)$ intersecting with $J$ are either already eliminated or having a rank higher than $J$.
It is easy for the sequential implementation if we scan the ranking from its beginning.
However, it is difficult without scanning the entire ranking, since the sampled tuple may not locate in the beginning.

To enable a sublinear evaluation, we need a \emph{start-from-anywhere} implement method.
Fortunately, the locality of a ranking can be utilized to avoid scanning the entire ranking.
We employ a recursive verification starting from the conflicts involving current sampled tuple.

Basically, we begin with the sampled tuple $\mathbf{t}$ and check the conflicts in $Q(I)$ caused by $\mathbf{t}$ in turn from rank lowest to highest. For each conflict $J$ we currently considering, $J$ should be eliminated if one of the following conditions holds,
\begin{enumerate}
\item[-] $J$ is lowest among all the conflicts in $Q(I)$ intersecting with it,
\item[-] Every conflict $J^\prime$ in $Q(I)$ intersecting with $J$ and lower than $J$ are already known to be eliminated.
\end{enumerate}
Otherwise, recursively check $J^\prime$ by the same procedure.
At last, if none of conflicts in $Q(I)$ containing $\mathbf{t}$ is decided to be eliminated, then $\mathbf{t}$ should stay in $S$, otherwise not, and count it into $q$.
The correctness of this method is obviously,
the only concern is the running time which depends on the number of recursive calls.
We next formally describe it and bound the number of recursive calls.
\subsubsection{Sublinear Estimation}
Algorithm \ref{alg:E} \textsf{Fast-IncDeg} performs $s$ sampling-and-verifying operations and counts the number of tuples sampled but not in $S$ by calling \textbf{function} NotInSR$(\mathbf{t}, C)$.
Since $S$ is a 2-optimal s-repair of $I$, \textsf{Fast-IncDeg} outcomes an $(2,\epsilon)$-estimation of FD-inconsistency degree of $Q(I)$.
\begin{algorithm}  
	\caption{\textsf{Fast-IncDeg}}
	\label{alg:E}  
	\hspace*{0.02in}{\bf Input:}
	Set $C$ of conflicts in $I$, subset query $Q$, error $\epsilon$\\
	\hspace*{0.02in}{\bf Output:} 
	$\tilde{incDeg}(Q(I),\mathsf{\mathsf{\Sigma}})$
	
	\begin{algorithmic}[1] 
		\STATE {$\tilde{dist}_{\texttt{sub}} := 0$;}
		\FOR {$i := 1$ to $8 / \epsilon^2$}
		\STATE {$\mathbf{t} := O(I, Q)\mathsf{.sample\_tuple()}$;}
		\IF {NotInSR$(\mathbf{t}, C)$}
		\STATE {$\tilde{dist}_{\texttt{sub}} := \tilde{dist}_{\texttt{sub}} + 1$;}
		\ENDIF
		\ENDFOR
		\RETURN {$\frac{\tilde{dist}_{\texttt{sub}}}{O(I, Q)\mathsf{.size()}} + \frac{\epsilon}{2}$;}
	\end{algorithmic}  
\end{algorithm}
\floatname{algorithm}{Subroutine}
\setcounter{algorithm}{0}
\begin{algorithm}  
	\caption{NotInSR$(\mathbf{t}, C)$}
	\label{sub:A}
	\begin{algorithmic}[1] 
		\STATE {Let $\langle \{ \mathbf{t}, \mathbf{t_1}\}, r_1 \rangle, \cdots, \langle \{ \mathbf{t}, \mathbf{t}_l \}, r_l \rangle$ be the tuples in $C$ including $\mathbf{t}$ in order of increasing $r$;}
		\FOR {$i := 1$ to $l$}
		\IF {$O(I, Q)\mathsf{.in\_result}(\mathbf{t}_i)$ \AND \textsf{Eliminate}$(\{ \mathbf{t}, \mathbf{t}_i \})$}
		\RETURN {\TRUE;}
		\ENDIF
		\ENDFOR
		\RETURN {\FALSE;} \smallskip
		
		\STATE {\textbf{function} \textsf{Eliminate}$(\{\mathbf{t}, \mathbf{s}\})$}
		\STATE {Let $\langle \{ \mathbf{t}_1, \mathbf{s}_1 \}, r_1, \rangle, \cdots, \langle \{ \mathbf{t}_l, \mathbf{s}_l \}, r_l \rangle$ be the tuples in $C$ such that $\mathbf{t}_i \in \{\mathbf{t}, \mathbf{s}\}$ in order of increasing $r$;}
		\WHILE {$r_i < r$}
		\IF {$O(I, Q)\mathsf{.in\_result}(\mathbf{s}_i)$ \AND \textsf{Eliminate}$(\{ \mathbf{t}_i, \mathbf{s}_i \})$}
		\RETURN {\FALSE;}
		\ENDIF
		\STATE {$i := i + 1$;}
		\ENDWHILE
		\RETURN {\TRUE;}
		\STATE {\textbf{end function}}
	\end{algorithmic}  
\end{algorithm}

Subroutine~\ref{sub:A} implements the \emph{verification} by calling \textsf{Eliminate} recursively.
Namely, as introduced in subsection-\emph{4.3.2}, give a conflict $J = \{\mathbf{t}, \mathbf{s}\}$, it considers all conflicts which include $\mathbf{t}$ or $\mathbf{s}$ with lower ranking.
If there are no such conflicts, it returns \emph{true}.
Otherwise, it performs recursive calls to these conflicts in the order of their ranking.
If any one-step recursion returns \emph{true}, it returns \emph{false}; Otherwise, it returns \emph{true}.
With the help of \textsf{Eliminate}, the subroutine \ref{sub:A} checks if a tuple $\mathbf{t}$ belongs to $S$.
Concretely, it performs \textsf{Eliminate} on all conflicts in $Q(I)$ including $\mathbf{t}$ in the order of their rankings,
and if there exists a conflict such that \textsf{Eliminate} returns \emph{true}, it returns \emph{true}; otherwise, it returns \emph{false}.

Now, we bound the number of recursive calls of \textsf{Eliminate}.
First, we derive an important corollary from \cite{onak2012near}.
For a ranking injection $\pi : J \rightarrow r$ of all conflicts of an instance $I$ and a tuple $\mathbf{t} \in I$, let $N(\pi, \mathbf{t})$ denote the number of conflicts that a call $\mathrm{Eliminate}(J)$ was made on in the course of the computation of $\mathrm{NotInSR}(\mathbf{t})$.
Let $\Pi$ denote the set of all ranking injections $\pi$ over the conflicts of $I$.
Given a tuple $\mathbf{t}$, let  $\delta_\mathbf{t}$ be the number of conflicts containing $\mathbf{t}$.
Then we define the maximum conflict number of $I$ as $\delta_I = \max\limits_{\mathbf{t}\in I} \{\delta_\mathbf{t}\}$
The average value of $N(\pi, \mathbf{t})$ taken over all ranking injections $\pi$ and tuples $\mathbf{t}$ is $O(\delta_I^2)$, \ie,
\begin{equation}
\label{eq:onk}
\frac{1}{m!}\cdot\frac{1}{n!}\cdot\sum\limits_{\pi \in \Pi}\sum\limits_{\mathbf{t} \in I} N(\pi, \mathbf{t}) = O(\delta_I^2)
\end{equation}

\begin{theorem}
	\label{thm:alg}
	Algorithm \textsf{Fast-IncDeg} returns an estimate $\tilde{incDeg}(Q(I), \mathsf{\Sigma})$ with a probability at least $2/3$ such that,
	\begin{equation*}
	incDeg(Q(I), \mathsf{\Sigma}) \le \tilde{incDeg}(Q(I), \mathsf{\Sigma}) \le 2 \cdot incDeg(Q(I), \mathsf{\Sigma}) + \epsilon.
	\end{equation*}
	The average query complexity taken over all rankings $\pi$, subset queries $Q$ and tuples $\mathbf{t}$ of $Q(I)$ is $O(\frac{\delta_{I}^2}{\epsilon^2})$, where the algorithm uses only queries supported by the $\subseteq$-oracle.
\end{theorem}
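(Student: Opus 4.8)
The plan is to prove the two assertions of Theorem~\ref{thm:alg} separately: first the correctness of the $(2,\epsilon)$-estimate, and second the bound on the expected query complexity. The correctness part I would dispatch quickly by assembling the ingredients already laid out in the excerpt. The ranking-based argument shows that the $S$ induced by $\Pi$ is a $2$-optimal s-repair of $Q(I)$, so $incDeg(Q(I),\mathsf{\Sigma})\le \tfrac{1}{2}\cdot\tfrac{|Q(I)|-|S|}{|Q(I)|}$ cannot hold — rather $\tfrac12 dist_{\texttt{sub}}(S,Q(I))\le dist_{\texttt{sub}}(S_{opt},Q(I))$. First I would argue that \textsf{NotInSR}$(\mathbf{t},C)$ correctly decides membership of $\mathbf{t}$ in $S$: the recursive \textsf{Eliminate} faithfully reconstructs, locally, the greedy \emph{pick-remove-eliminate} decision for each conflict by resolving only strictly-lower-ranked intersecting conflicts, and the injectivity of $\pi$ guarantees the recursion is well-founded (ranks strictly decrease along any call chain). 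Then each of the $8/\epsilon^2$ samples is an independent Bernoulli indicator of ``$\mathbf{t}\notin S$,'' whose mean is exactly $dist_{\texttt{sub}}(S,Q(I))/|Q(I)|$; a Chernoff/Hoeffding bound with $p=\Theta(1/\epsilon^2)$ samples yields additive error $\epsilon/2$ with probability $\ge 2/3$. Adding the $+\epsilon/2$ shift in the return value converts the two-sided additive-$\epsilon/2$ estimate of $dist_{\texttt{sub}}(S,Q(I))/|Q(I)|$ into the claimed one-sided sandwich around $incDeg(Q(I),\mathsf{\Sigma})$, using that $S$ is $2$-optimal so the true $incDeg$ lies between half and all of the estimated quantity.

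The query-complexity part is where the real work lies, and I would build it on Equation~(\ref{eq:onk}), the corollary imported from~\cite{onak2012near}. The target is $O(\delta_I^2/\epsilon^2)$ \emph{averaged over all rankings $\pi$, subset queries $Q$, and sampled tuples $\mathbf{t}$}. The number of \textsf{Eliminate} calls triggered by one invocation of \textsf{NotInSR}$(\mathbf{t})$ is exactly $N(\pi,\mathbf{t})$ in the full instance $I$; the key observation to exploit is that restricting attention to $Q(I)$ via the \texttt{in\_result} guard can only \emph{prune} recursive calls (a conflict not wholly inside $Q(I)$ is skipped), so $N_{Q}(\pi,\mathbf{t})\le N(\pi,\mathbf{t})$ for every $\pi$, $Q$, and $\mathbf{t}\in Q(I)$. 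Hence averaging the per-query cost over $\pi$ and over $\mathbf{t}$ is bounded by the right-hand side of~(\ref{eq:onk}), namely $O(\delta_I^2)$; multiplying by the $8/\epsilon^2$ independent samples gives $O(\delta_I^2/\epsilon^2)$ expected oracle operations. Each oracle call is charged as a single unit in the query-complexity model, which is precisely why the theorem phrases the bound ``in queries supported by the $\subseteq$-oracle.''

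The main obstacle I anticipate is making the reduction to~(\ref{eq:onk}) airtight, because that corollary is stated for the \emph{full} conflict hypergraph of $I$ under a \emph{uniform random} ranking, whereas here we run on the sub-instance $Q(I)$ and the ranking $\Pi$ is fixed in preprocessing. Two subtleties must be handled. First, the sampled tuple $\mathbf{t}$ is uniform over $Q(I)$, not over $I$, so the per-tuple average in the theorem is taken over $Q(I)$ while~(\ref{eq:onk}) averages over all of $I$; I would argue that the monotonicity $N_Q\le N$ lets us dominate the $Q(I)$-average by the $I$-average up to the constant implicit in $O(\cdot)$, or alternatively invoke~(\ref{eq:onk}) on the sub-instance $Q(I)$ directly, noting $\delta_{Q(I)}\le\delta_I$. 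Second, the theorem averages over rankings, so I would treat the preprocessed rank assignment as a uniformly random injection (which is consistent with Step~3 of Algorithm~\ref{alg:D} generating ranks freshly), thereby matching the hypothesis of~(\ref{eq:onk}). Once these alignment issues are resolved, the bound follows by linearity of expectation across the $8/\epsilon^2$ samples, and I would close by reiterating that the $\subseteq$-oracle answers each \texttt{sample\_tuple}, \texttt{in\_result}, and \texttt{size} request in the per-query unit cost assumed throughout, so the stated average query complexity $O(\delta_I^2/\epsilon^2)$ is exactly what the recursion and sampling together incur.
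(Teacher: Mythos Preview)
Your correctness argument is fine and matches the paper's. The gap is in the query-complexity half, specifically the monotonicity claim $N_Q(\pi,\mathbf{t})\le N(\pi,\mathbf{t})$ ``for every $\pi,Q,\mathbf{t}$.'' Skipping a conflict via \texttt{in\_result} does not only prune: if the skipped conflict would have made \textsf{Eliminate} return \textsc{false} early, then skipping it forces the loop to proceed to the \emph{next} lower-ranked neighbor, which may in turn spawn a deeper recursion that the full-$I$ run never reached. Concretely, take conflicts $J_1=\{x,a\}$ (rank~1, $x\notin Q(I)$), $J_2=\{b,c\}$ (rank~2), $J_3=\{a,b\}$ (rank~3), $J_4=\{a,t\}$ (rank~4), all other tuples in $Q(I)$. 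On the full instance, \textsf{NotInSR}$(t)$ calls \textsf{Eliminate}$(J_4)$, which immediately calls \textsf{Eliminate}$(J_1)$, gets \textsc{true}, and stops: two calls. On $Q(I)$, $J_1$ is skipped, so \textsf{Eliminate}$(J_4)$ calls \textsf{Eliminate}$(J_3)$, which skips $J_1$ and calls \textsf{Eliminate}$(J_2)$: three calls. So the pointwise inequality fails. A second issue: even if $N_Q\le N$ did hold, ``dominate the $Q(I)$-average by the $I$-average'' is not automatic, since averaging $N(\pi,\cdot)$ over a subset $Q(I)$ can exceed the average over all of $I$.

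Your alternative---invoke~(\ref{eq:onk}) on $Q(I)$ itself and use $\delta_{Q(I)}\le\delta_I$---is the correct route and is exactly what the paper does. What you are missing to make it go through is the bridge between averaging over rankings of $I$'s $m$ conflicts (what the theorem statement demands) and averaging over rankings of $Q(I)$'s $m'$ conflicts (what~(\ref{eq:onk}) on $Q(I)$ gives). The paper's argument is a counting one: a uniformly random injection $\pi$ on the $m$ conflicts of $I$ induces, by restriction, a uniformly random ranking on the $m'$ conflicts of $Q(I)$, because each induced ranking $\pi'$ is hit by exactly $m!/m'!$ full rankings $\pi$. Hence the $\Pi$-average equals the $\Pi'$-average, which~(\ref{eq:onk}) bounds by $O(\delta_{Q(I)}^2)\le O(\delta_I^2)$; averaging over $Q$ and multiplying by the $\Theta(1/\epsilon^2)$ samples finishes. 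Drop the monotonicity line entirely and develop this counting step explicitly.
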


\begin{proof}
	By applying an additive Chernoff bound, suppose that it is sampled uniformly and independently $s = \Theta(\frac{1}{\epsilon^2})$ tuples $\mathbf{t}$ from $Q(I)$, with probability more than $2/3$,
	\begin{equation}
	\frac{dist_{\texttt{sub}}(S, Q(I))}{|Q(I)|} - \frac{\epsilon}{2} \le \frac{\tilde{dist}_{\texttt{sub}}}{|Q(I)|} \le \frac{dist_{\texttt{sub}}(S, Q(I))}{|Q(I)|} + \frac{\epsilon}{2}.
	\end{equation}
	And with the fact that $S$ is a 2-optimal \textit{s-repair}, it is obtained that,
	\begin{equation}
	incDeg(Q(I), \mathsf{\Sigma}) \le \tilde{incDeg}(Q(I), \mathsf{\Sigma}) \le 2 \cdot incDeg(Q(I), \mathsf{\Sigma}) + \epsilon.
	\end{equation}
	
	For query complexity, we first bound the number of calls of $\mathsf{Eliminate()}$.
	Given the result $Q(I)$ of a subset query $Q$, let $n^\prime$ be the number of tuples in $Q(I)$, and $m^\prime$ be the number of conflicts contained in $Q(I)$, and  the maximum conflict number of $Q(I)$.
	Now, consider the ranking $\Pi^\prime$ induced by $Q(I)$ from $\Pi$, then equation \ref{eq:onk} implies,
	\begin{equation*}
	\frac{1}{m'!}\cdot\frac{1}{n'}\cdot\sum\limits_{\pi\in\Pi'}\sum\limits_{\mathbf{t} \in Q(I)} N(\pi, \mathbf{t}) = O(\delta_{Q(I)}^2)
	\end{equation*}
	Notice that since the conflicts in $Q(I)$ is a subset of the conflicts in $I$, for each $\pi' \in \Pi'$, there are $\frac{m!}{m'!}$ number of $\pi \in \Pi$ can produce the same ranking on the conflicts of $Q(I)$. Group $\Pi$ into $m'!$ groups $\{\Pi_1, \cdots, \Pi_{m'!}\}$, and for each $\pi\in\Pi_i$ and a fixed $\mathbf{t} \in Q(I)$, $N(\pi, \mathbf{t})$ has the same value.
	So we have,
	\begin{equation*}
	\begin{split}
	&\frac{1}{m!}\sum\limits_{\pi\in\Pi}\frac{1}{|Q(I)|}\sum\limits_{\mathbf{t}\in Q(I)}N(\pi, \mathbf{t})
	\\= & \frac{1}{m!}\sum\limits_{\pi \in \{\Pi_1, \cdots, \Pi_{m'!}\}}\frac{1}{|Q(I)|}\sum\limits_{\mathbf{t}\in Q(I)}N(\pi, \mathbf{t})
	\\= & \frac{1}{m!}\cdot\frac{m!}{m'!}\sum\limits_{\pi\in\Pi'}\frac{1}{Q(I)}\sum\limits_{\mathbf{t}\in Q(I)}N(\pi, \mathbf{t})
	\\= & O(\delta_{Q(I)}^2)
	\end{split}
	\end{equation*}
	
	The we could derive the query complexity.
	Let $\mathcal{Q}$ be the space of queries, then for any $Q(I)$, we have $\delta_{Q(I)} \le \delta_{I}$, so that the \emph{average query complexity} is that
	\begin{equation*}
	\begin{split}
	&\frac{1}{m!}\sum\limits_{\pi\in\Pi}\frac{1}{|\mathcal{Q}|}\sum\limits_{Q \in \mathcal{Q}}\frac{1}{|Q(I)|}\sum\limits_{\mathbf{t}\in Q(I)}N(\pi, \mathbf{t})
	\\= &\frac{1}{|\mathcal{Q}|}\sum\limits_{Q\in \mathcal{Q}}\frac{1}{m!}\sum\limits_{\pi\in\Pi}\frac{1}{|Q(I)|}\sum\limits_{\mathbf{t}\in Q(I)}N(\pi, \mathbf{t})
	\\= &\frac{1}{|\mathcal{Q}|}\sum\limits_{Q\in \mathcal{Q}}O(\delta_{Q(I)}^2)
	\\ \le &O(\delta_I^2)
	\end{split}
	\end{equation*}
	
	Obviously, there are $O(\frac{1}{\epsilon^2})$ calls to sample a tuple from the result set. For each sampled tuple $\mathbf{t}$, the average number of calls to $\mathsf{Eliminate}$ is $O(\delta_{I}^2)$. So the number of calls to $\mathsf{in\_result()}$ is $O(\frac{\delta_{I}^2}{\epsilon^2})$.
\end{proof}

In addition, inspired by the methodology proposed in \cite{onak2012near}, a pre-defined ranking in the preprocessing can be saved by ranking on the fly, that is, we only need to discover all conflicts in the preprocessing step and ranking whenever it is required.
Since the basic idea is similar with our method, we omit the detail here, instead, we compare the two different implements in our experiments to show the efficiency of our method.

\eat{
\section{Local Inconsistency Computing}
\subsection{Subset Query Oracle}
As the diversity of subset queries, we model them as a $\subseteq$-oracle, such that,
\subsection{Sublinear Estimation}
}
%---------------------------------------------
\section{Experiments}

This section experimentally evaluates the performace of our algorithms for OSR computing and FD-inconsistency evaluation.

\subsection{Experimental Settings}

All experiments are conducted on a machine with eight 16-core Intel Xeon processors and 3072GB of memory.\smallskip

\noindent\textbf{Dataset.}
We used two datasets to evaluate the performance of algorithms for OSR computing and FD-Inconsistency evaluaction experimentally. \smallskip

\textit{Dataset 1}: \textsc{order} \emph{data} is an instance of the schema \textsf{order} shown in Example \ref{eg:sch&fd}. Our set $\mathsf{\Sigma}$ consists of 4 FDs taken from Example \ref{eg:sch&fd}. To populate the relation we scraped product informations from \textsc{amazon} and collected real-life data: the zip and area codes for major cities and twons for all US states\footnote{http://www.geonames.org/} and street informations for all the United States\footnote{http://results.openaddresses.io/}. We generated datasets of various size, ranging from 10M to 100M tuples.\smallskip

\textit{Dataset 2}: \textsc{dblp} \emph{data} was extracted from \textsc{dblp} Bibliography\footnote{https://dblp.org/xml/}. It consists of 40M tuples and the format is as follows:
\vspace{-4mm}
\begin{equation*}
\mathsf{dblp(title, authors, year, publication, pages, ee, url)}
\vspace{-1mm}
\end{equation*}

Each \textsc{dblp} tuple contains the title of an article, the authors and the information of publication (year, publication venue, pages, electronic edition and, url)
We designed 4 FDs for \textsc{dblp}.
\begin{equation*}
\begin{aligned}
&\mathsf{fd_1 : [title] \to [author]} \quad &\mathsf{fd_2} &\mathsf{: [ee] \to [title]}\\
&\mathsf{fd_3 : [year, publication, pages] \to [title, ee, url]} &\mathsf{fd_4} &\mathsf{: [url] \to [title]} \\
\end{aligned}
\vspace{-1mm}
\end{equation*}

To add noise to a dataset, we randomly selected an attribute of a "correct" tuple and changed it either to a close value or to an existing value taken from another tuple. We appended such "dirty" tuples which violate at least one or more functional dependencies to the dataset. We set a parameter $\rho$ ranging from $1\%$ to $10\%$ to control the noise rate.\smallskip

\noindent\textbf{Methods.}
We implemented the following algorithms: (a) the basic approximation algorithm \textsf{BL LP-OSR} and the improved approximation algorithm by triad elimination \textsf{TE LP-OSR} for OSR computing; (b) the sublinear estimation algorithm \textsf{Fast-IncDeg} based on two implements of $\subseteq$-oracle for range query with $O(1)$ and $O(\log{n})$ time complexity respectively, and its variation mentioned in the subsection-\emph{4.3.3} \textsf{Fast-IncDeg\_ol} for FD-Inconsistency evaluation. Hence, there are totally 4 implements of Algorithm \ref{alg:E} for range query denoted by $\mathsf{\textsf{Fast-IncDeg}_{c}}$, $\mathsf{\textsf{Fast-IncDeg}_{log}}$, $\mathsf{\textsf{Fast-IncDeg}_{c}\_ol}$ and $\mathsf{\textsf{Fast-IncDeg}_{log}\_ol}$ respectively.\smallskip

\noindent\textbf{Metrics.} Since a dataset $I$ with $n$ tuples is polluted by appending $\rho n$ dirty tuples, where $\rho$ is noise rate,  the number of tuples in the optimal repair $S_{opt}$ must be larger than $n$, \ie, $dist_{\texttt{sub}}(S_{opt}, I) \le \rho n$. Hence, we calculate $dist_{\texttt{sub}}(\hat{J}, I)$ of \textsf{BL LP-OSR} and \textsf{TE LP-OSR} and use $2\rho n$ to evaluate the approximation ratio of them. What's more, according to the definition of FD-inconsistency degree, we treat $2 \rho + \epsilon$ as the upper bound of FD-inconsistency degree to ensure the correctness of the Algorithm \ref{alg:E}. To evaluate the efficiency of Algorithm \ref{alg:E}, we issue 300 queries for each algorithm and each parameter set, and record the average of the query time.

\subsection{Experimental Results}

We report our findings concerning about the accuracy and efficiency of our algorithms.

\begin{figure*}[!t]
	\begin{small}
		\centering
		\vspace{0mm}
		\subfigure[\textsc{order} $\rho  = 0.03$]{
			\hspace{-3mm}\includegraphics[height=35mm]{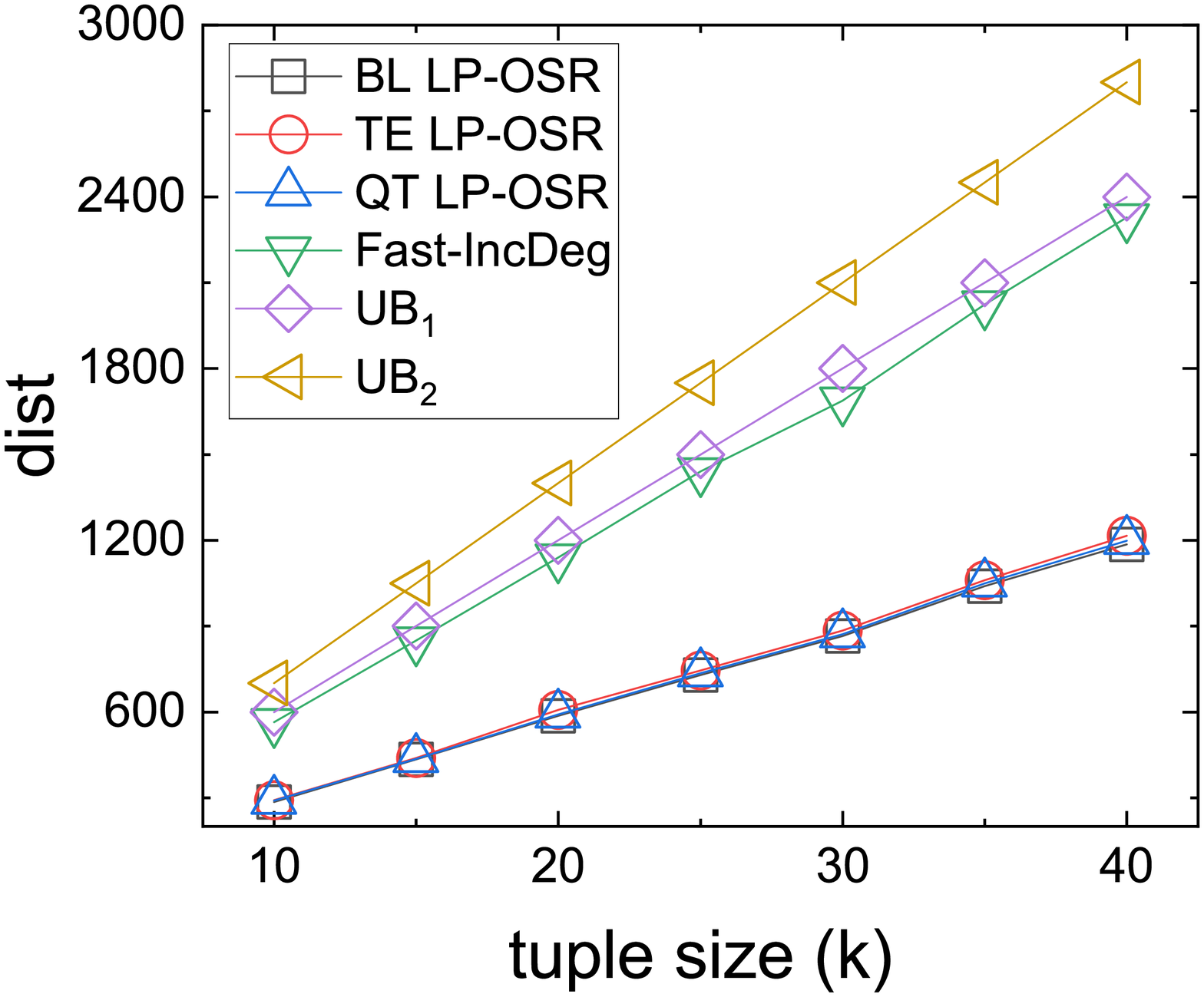} \hspace{-3mm}
			\label{fig:lp-ord-rho1}
		}
		\subfigure[\textsc{order} $\rho = 0.08$]{
			\hspace{-3mm}	\includegraphics[height=35mm]{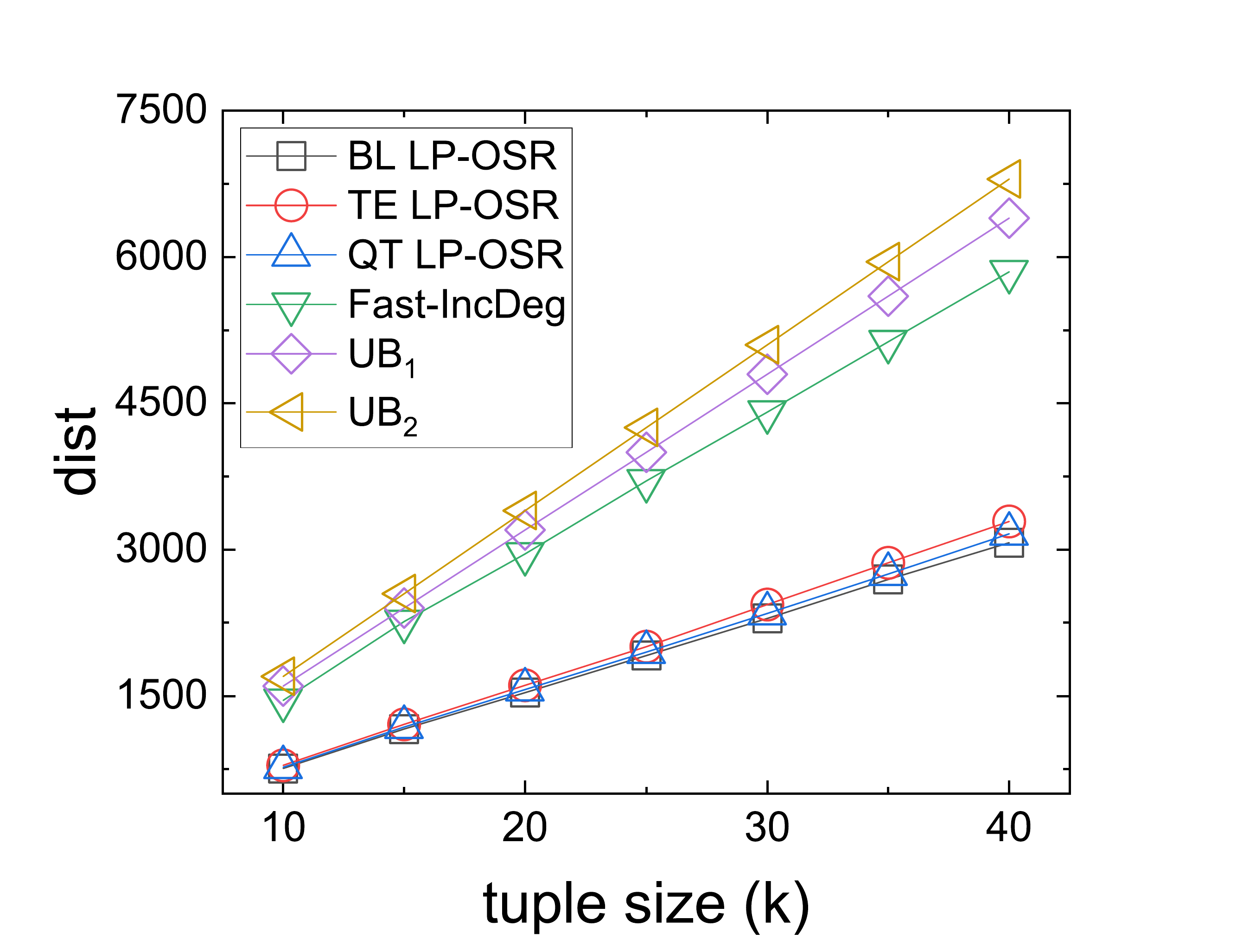} \hspace{-3mm}
			\label{fig:lp-ord-rho2}
		}
		\subfigure[\textsc{order} $n = 10$K]{
			\hspace{-3mm}	\includegraphics[height=35mm]{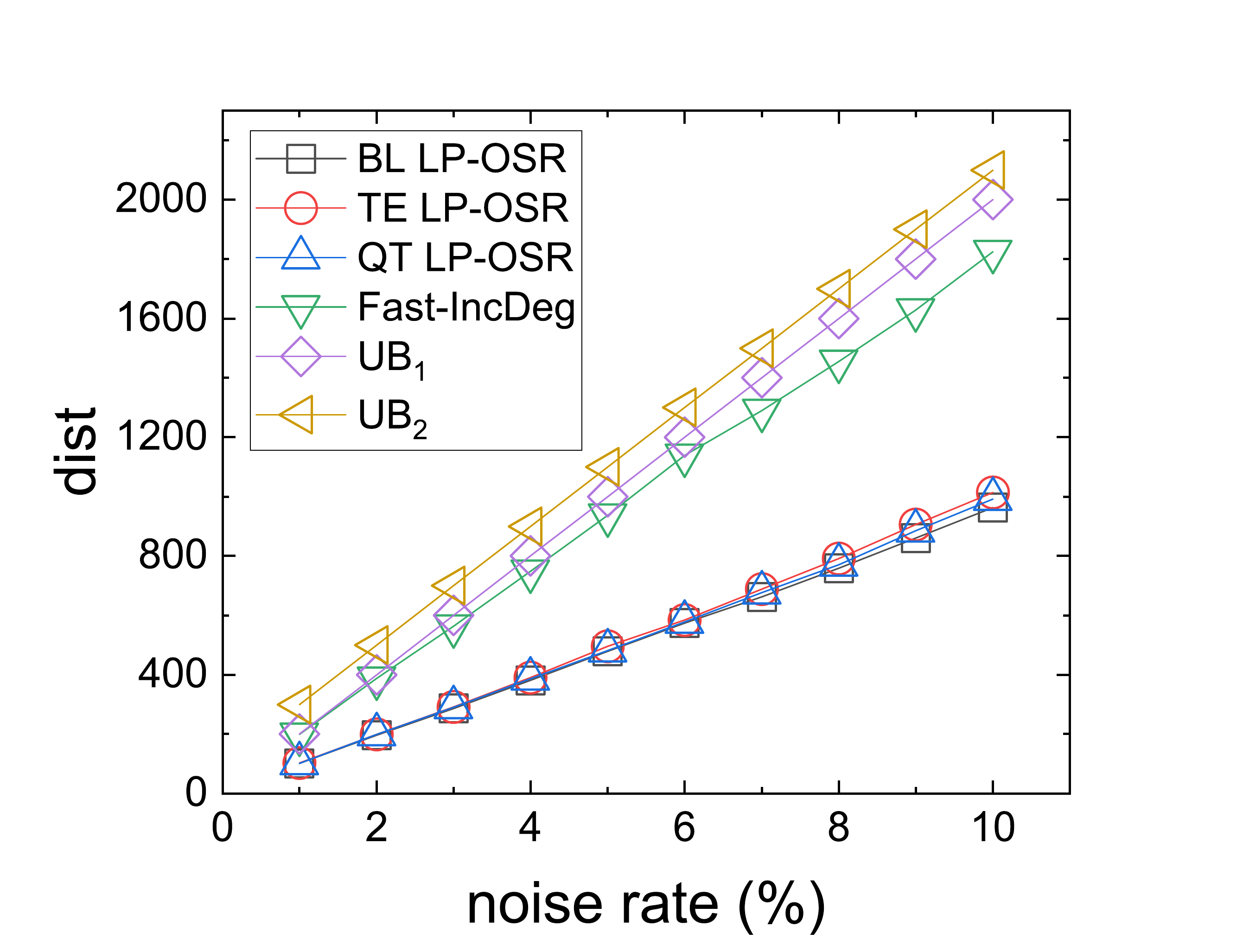} \hspace{-3mm}
			\label{fig:lp-ord-n1}
		}
		\subfigure[\textsc{order} $n = 40$K]{
			\hspace{-3mm}	\includegraphics[height=35mm]{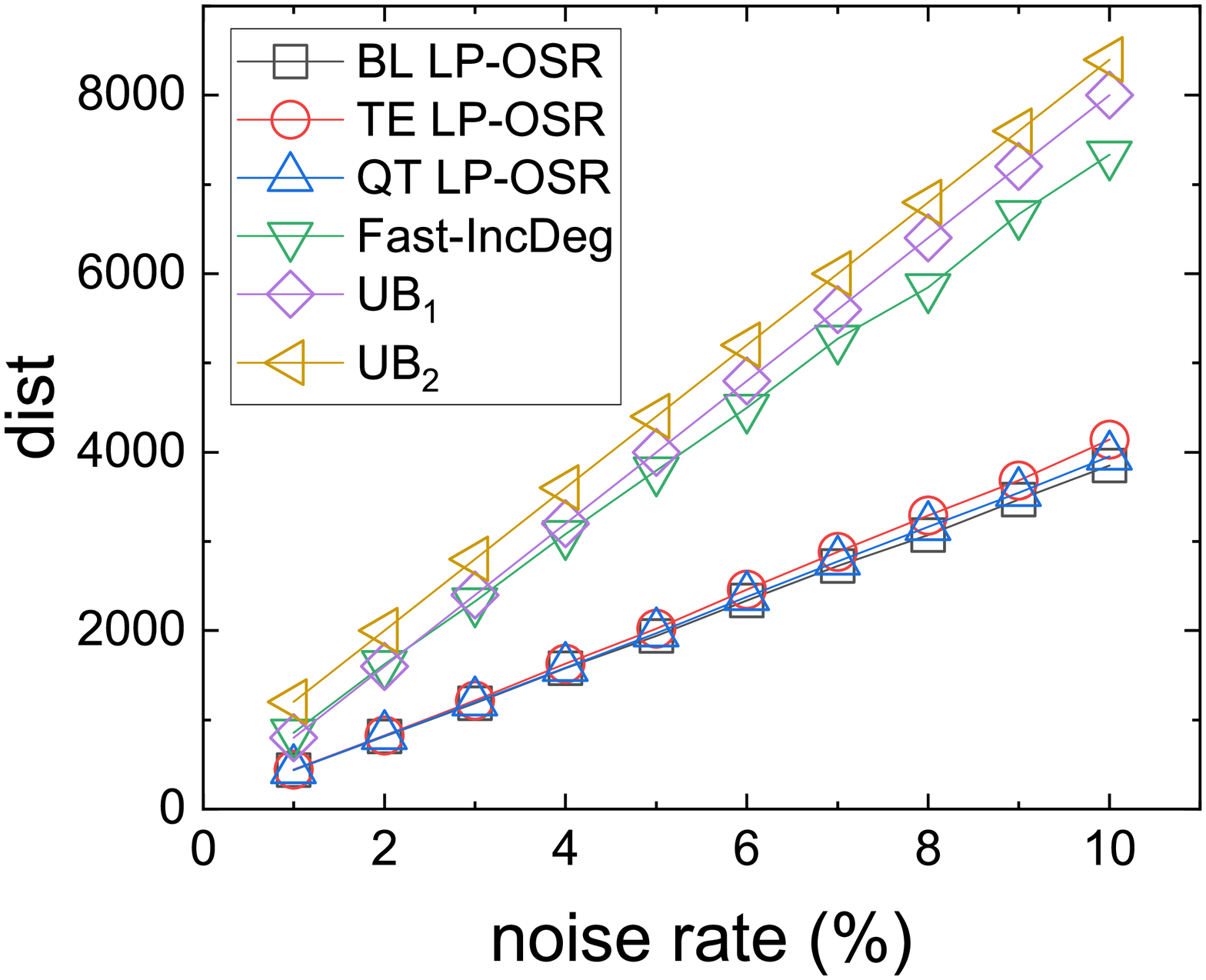} \hspace{-3mm}
			\label{fig:lp-ord-n2}
		}
		\vspace{-4mm}
		
		\subfigure[\textsc{dblp} $\rho  = 0.03$]{
			\hspace{-3mm}\includegraphics[height=35mm]{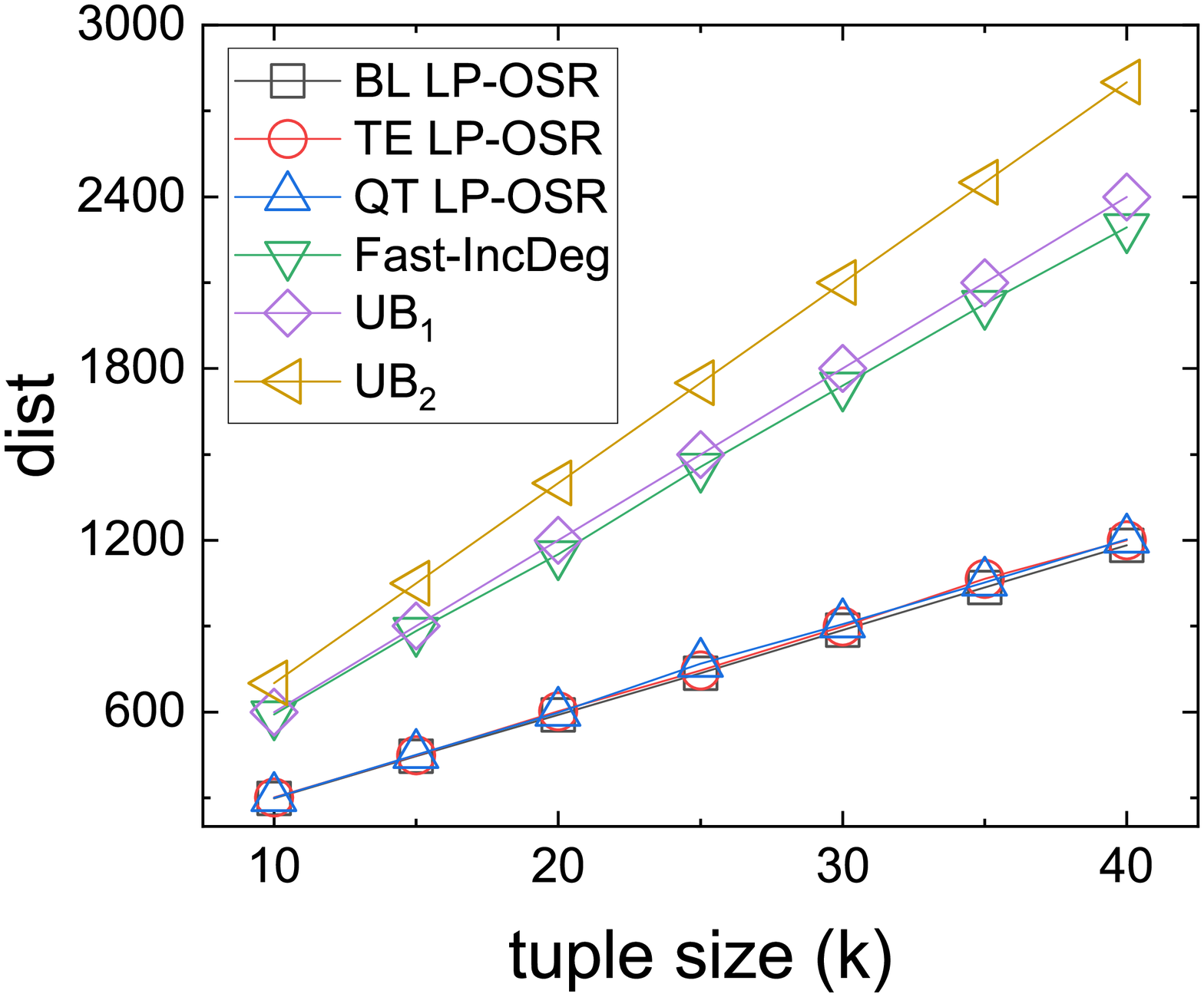} \hspace{-3mm}
			\label{fig:lp-dblp-rho1}
		}
		\subfigure[\textsc{dblp} $\rho = 0.08$]{
			\hspace{-3mm}	\includegraphics[height=35mm]{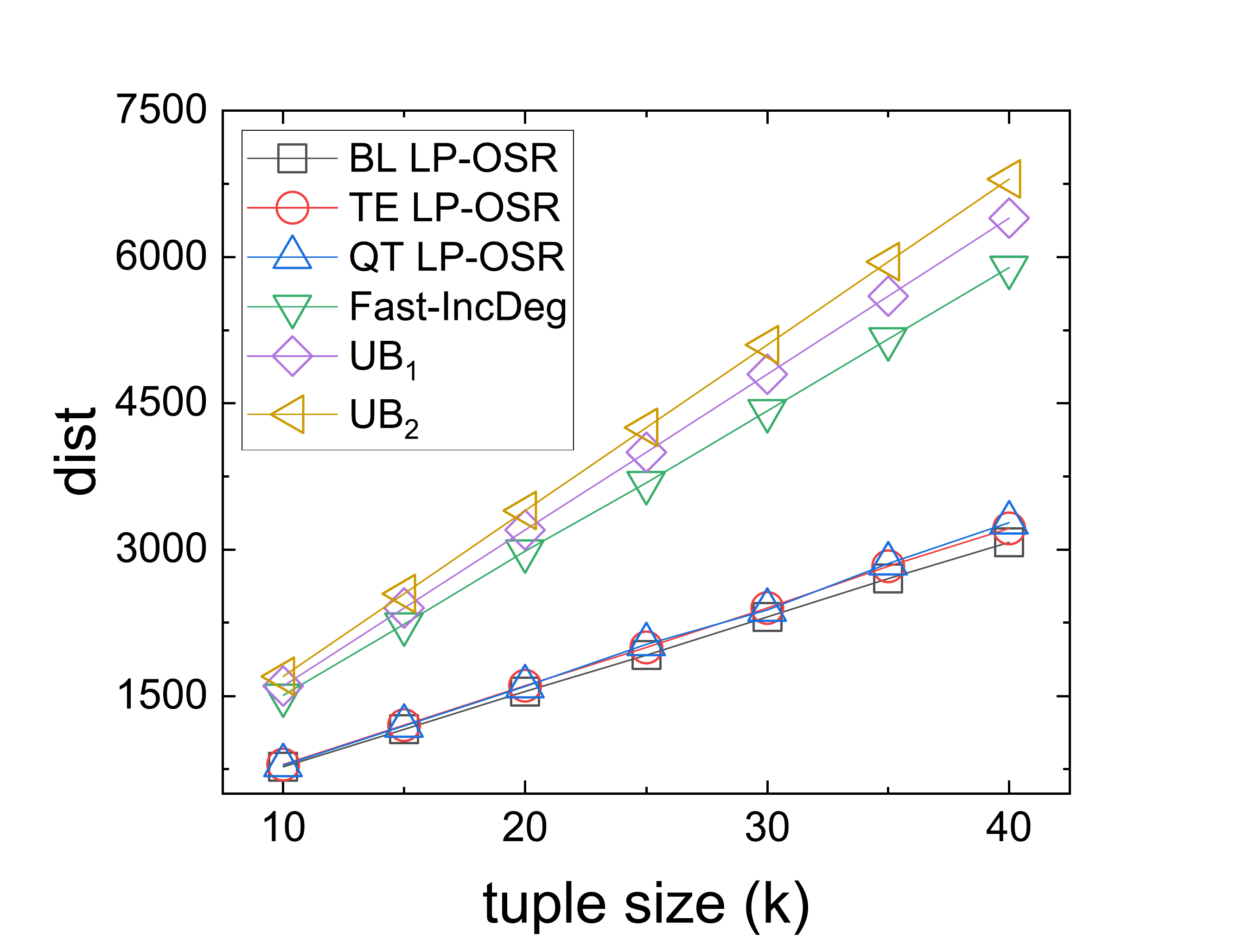} \hspace{-3mm}
			\label{fig:lp-dblp-rho2}
		}
		\subfigure[\textsc{dblp} $n = 10$K]{
			\hspace{-3mm}	\includegraphics[height=35mm]{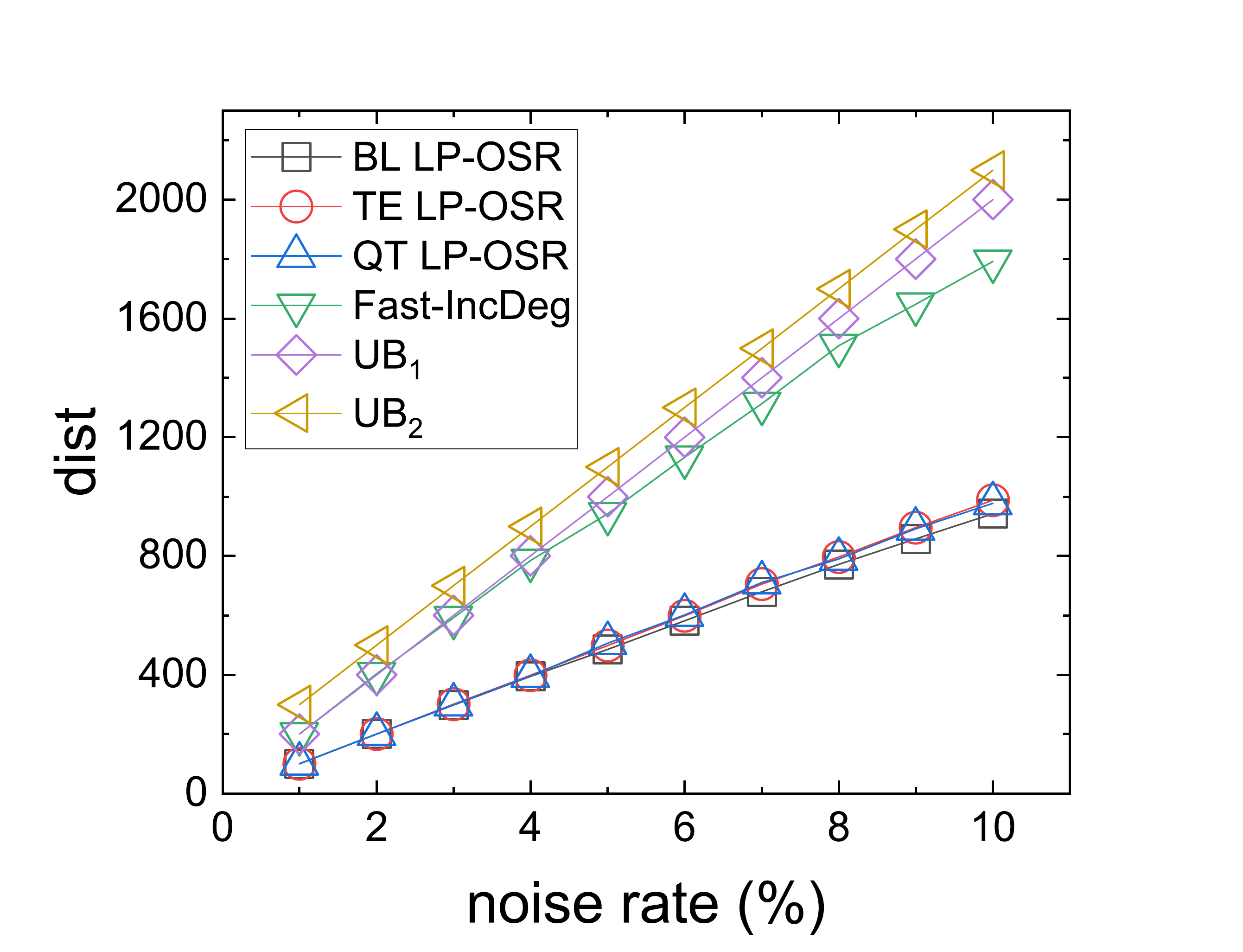} \hspace{-3mm}
			\label{fig:lp-dblp-n1}
		}
		\subfigure[\textsc{dblp} $n = 40$K]{
			\hspace{-3mm}	\includegraphics[height=35mm]{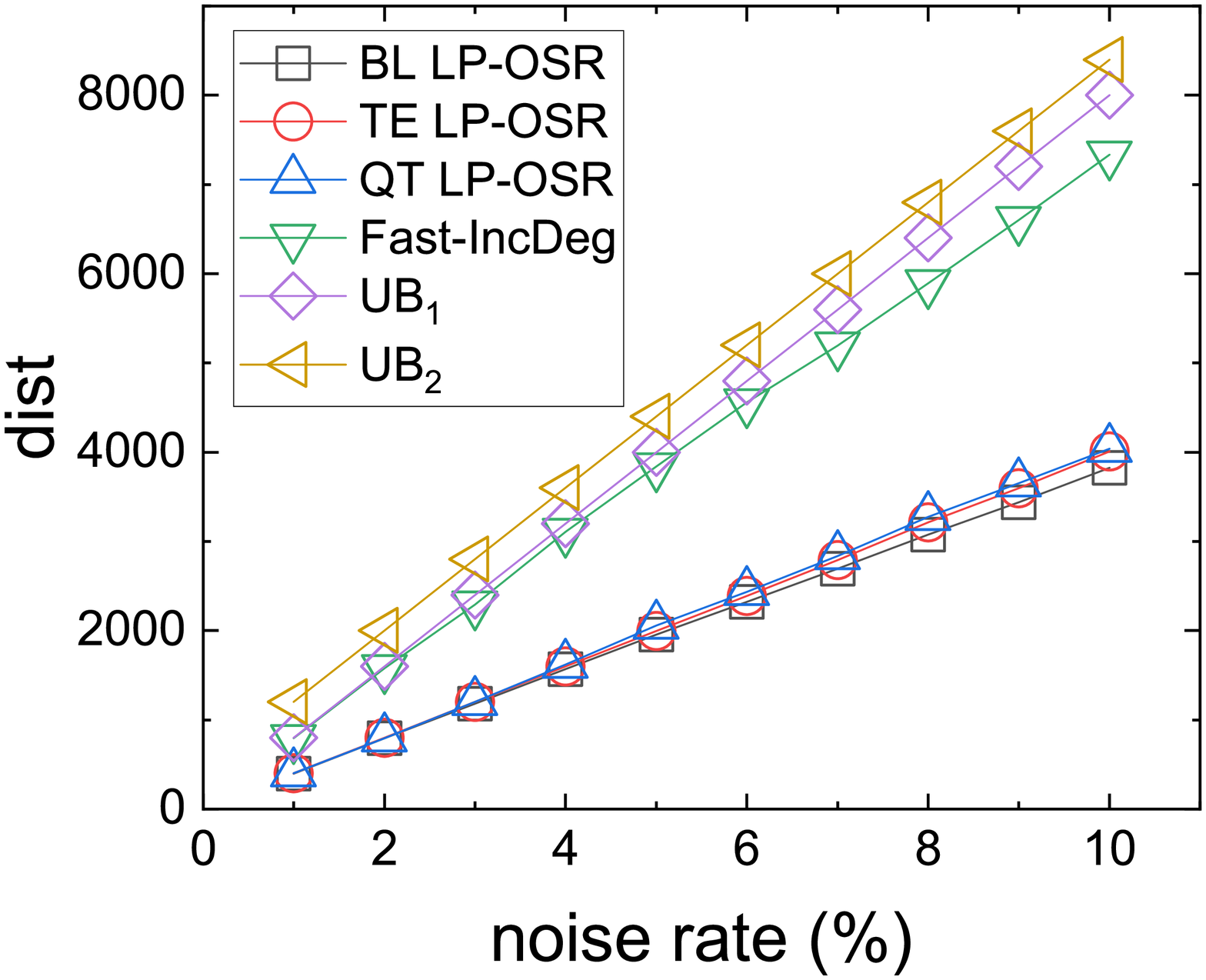} \hspace{-3mm}
			\label{fig:lp-dblp-n2}
		}
		\vspace{-3mm}
		\caption{$dist$ with different $\rho$, $n$ and $\sigma$}
		\label{fig:lp-exp}
		\vspace{0mm}
	\end{small}
\end{figure*}

\begin{figure*}[!t]
	\begin{small}
		\centering
		\vspace{0mm}
		\subfigure[\textsc{order} $\rho  = 0.05$]{
			\hspace{-3mm}\includegraphics[height=35mm]{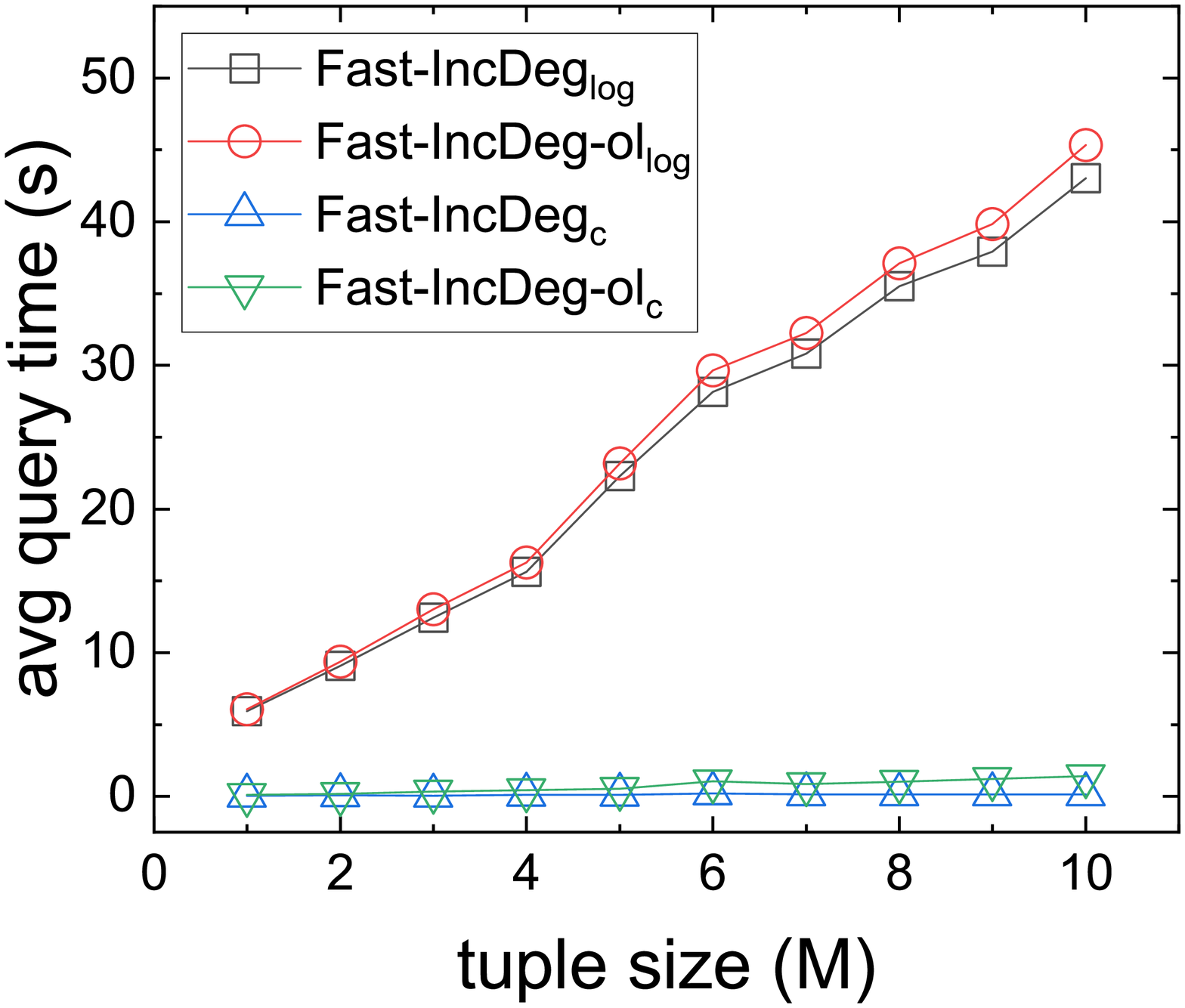} \hspace{-3mm}
			\label{fig:fi-ord-rho}
		}
		\subfigure[\textsc{order} $n = 10$M]{
			\hspace{-3mm}	\includegraphics[height=35mm]{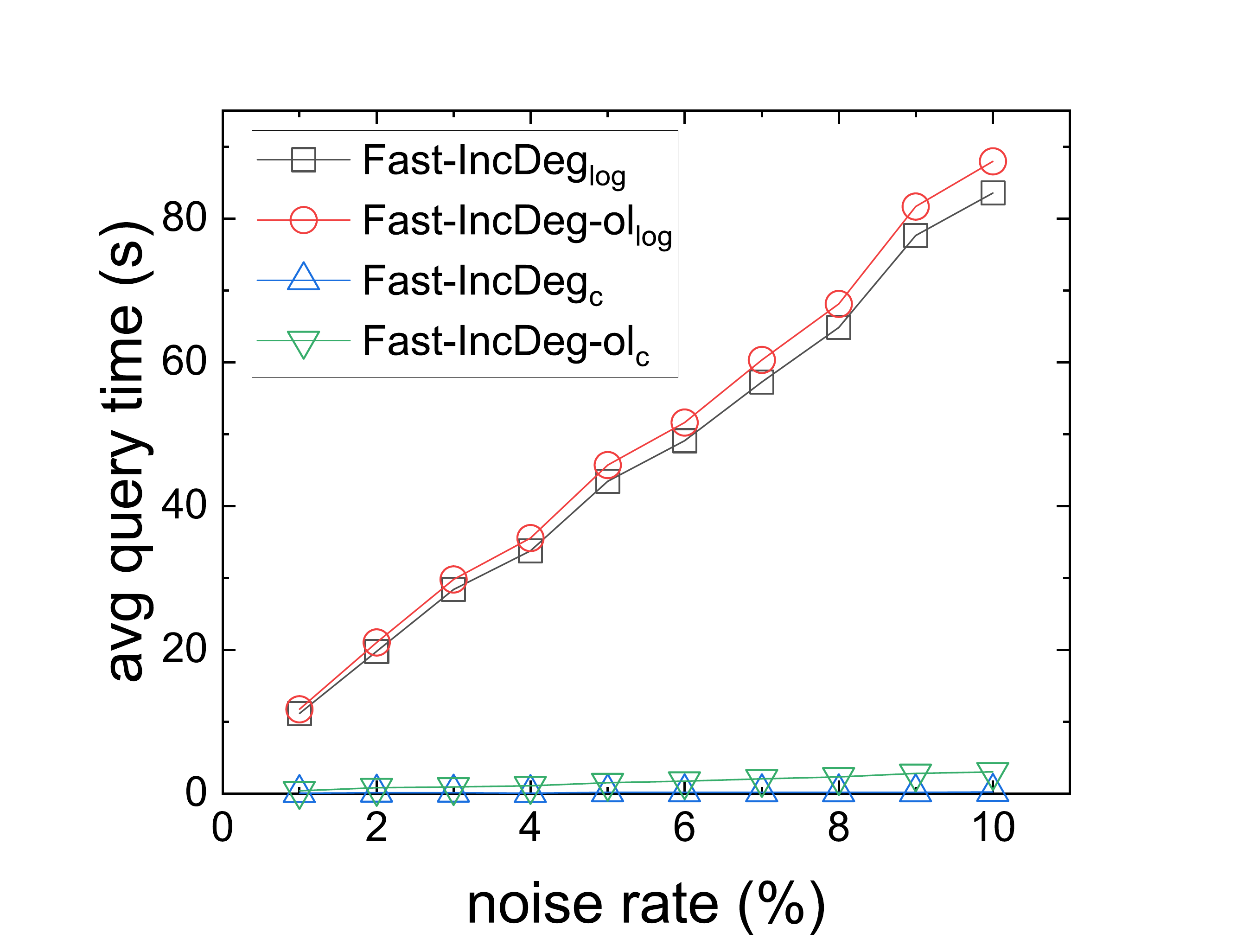} \hspace{-3mm}
			\label{fig:fi-ord-n}
		}
		\subfigure[\textsc{order} $d = 50, \rho = 0.03$]{
			\hspace{-3mm}	\includegraphics[height=35mm]{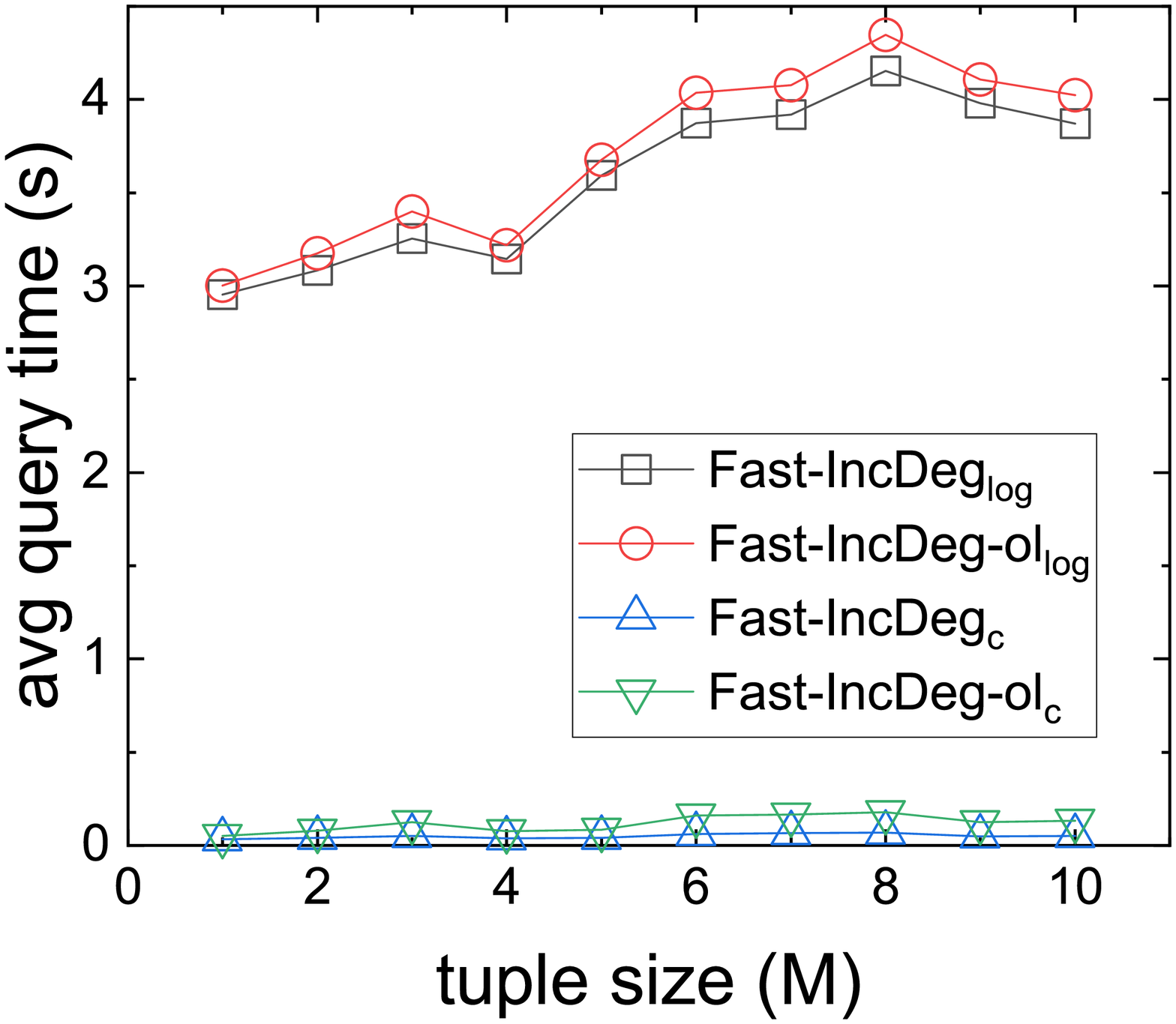} \hspace{-3mm}
			\label{fig:fi-ord-drho}
		}
		\subfigure[\textsc{order} $n = 10\mathrm{M}, \rho = 0.03$]{
			\hspace{-3mm}	\includegraphics[height=35mm]{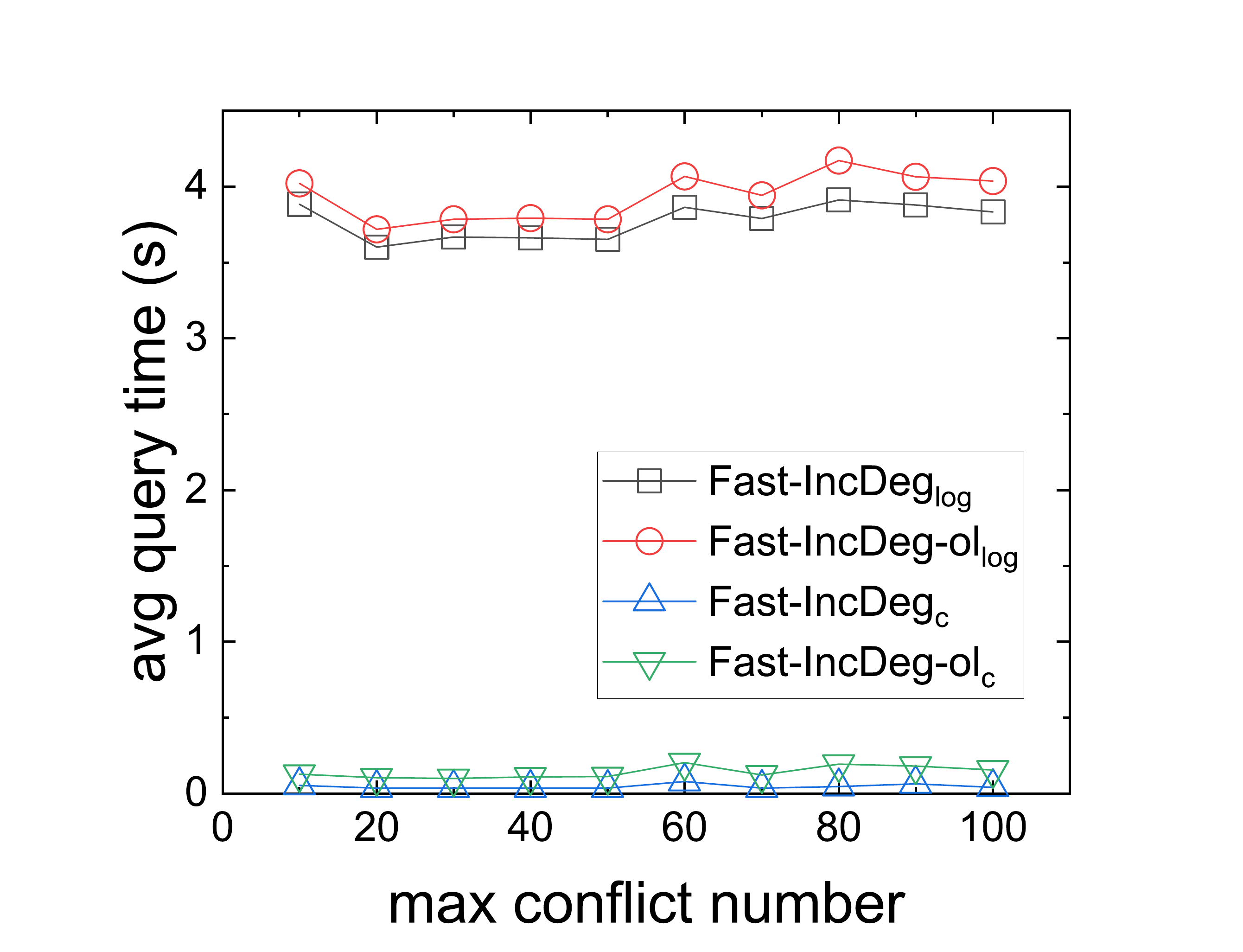} \hspace{-3mm}
			\label{fig:fi-ord-nrho}
		}
		\vspace{-4mm}
		
		\subfigure[\textsc{dblp} $\rho  = 0.05$]{
			\hspace{-3mm}\includegraphics[height=35mm]{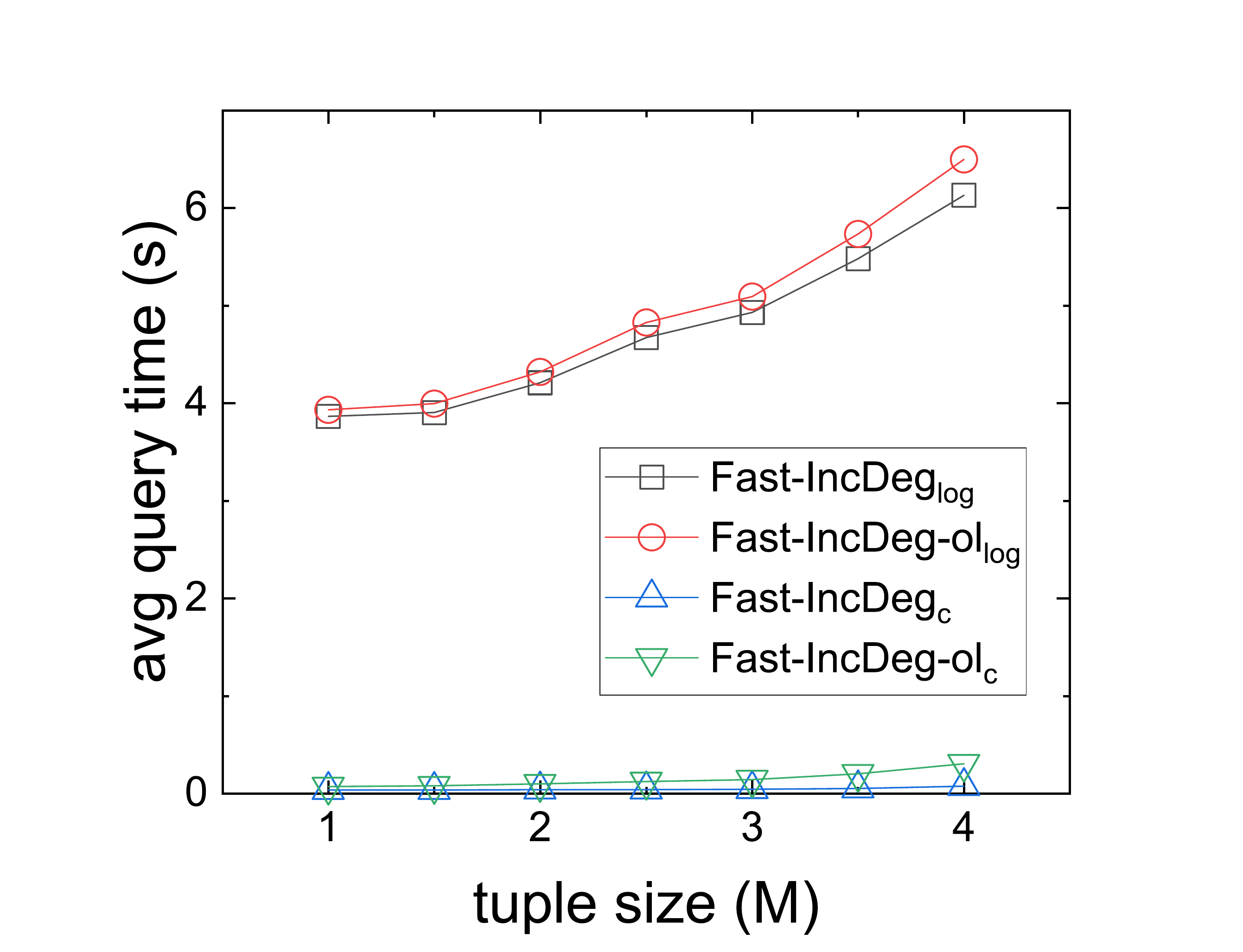} \hspace{-3mm}
			\label{fig:fi-dblp-rho}
		}
		\subfigure[\textsc{dblp} $n = 4$M]{
			\hspace{-3mm}	\includegraphics[height=35mm]{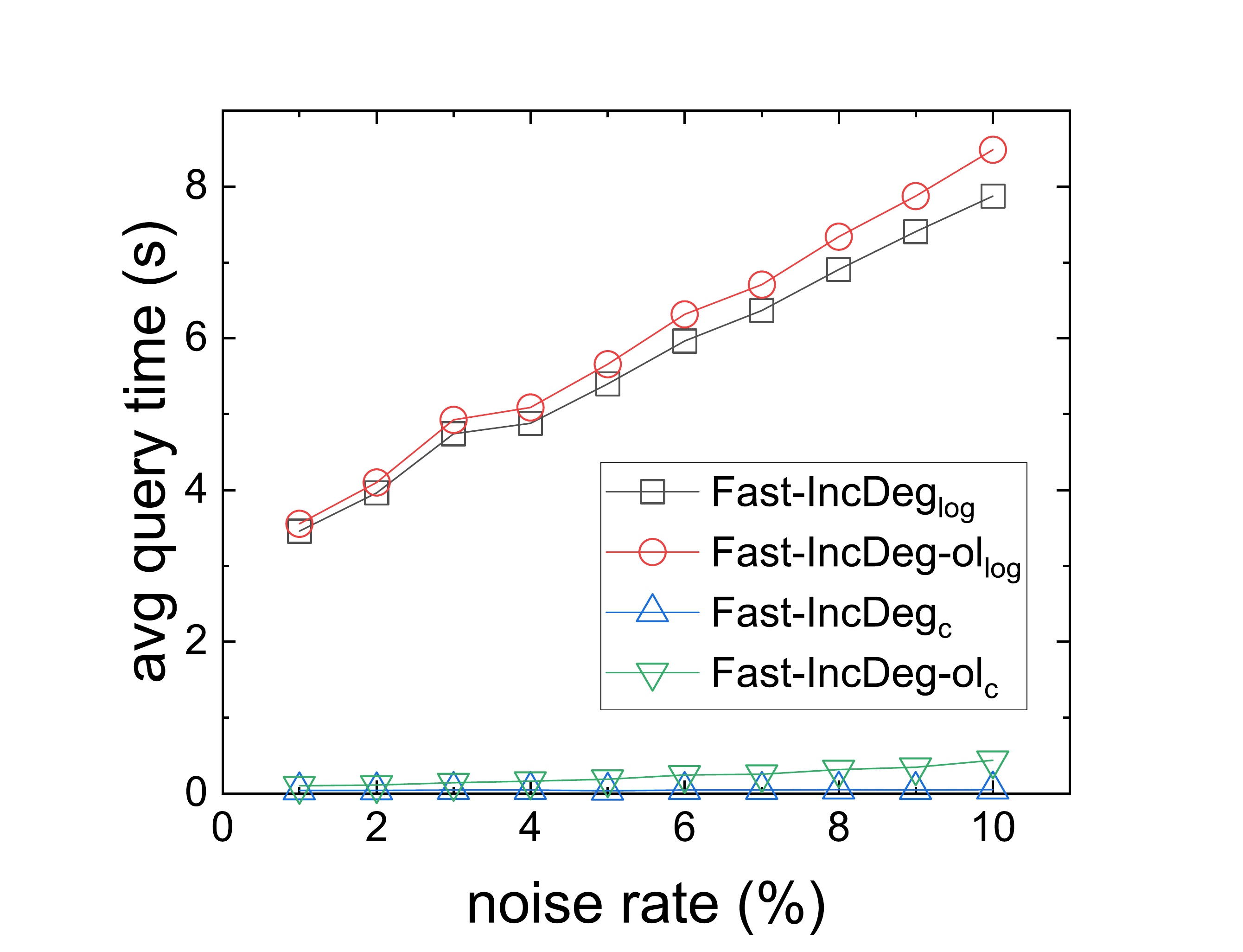} \hspace{-3mm}
			\label{fig:fi-dblp-n}
		}
		\subfigure[\textsc{dblp} $d= 50, \rho = 0.03$]{
			\hspace{-3mm}	\includegraphics[height=35mm]{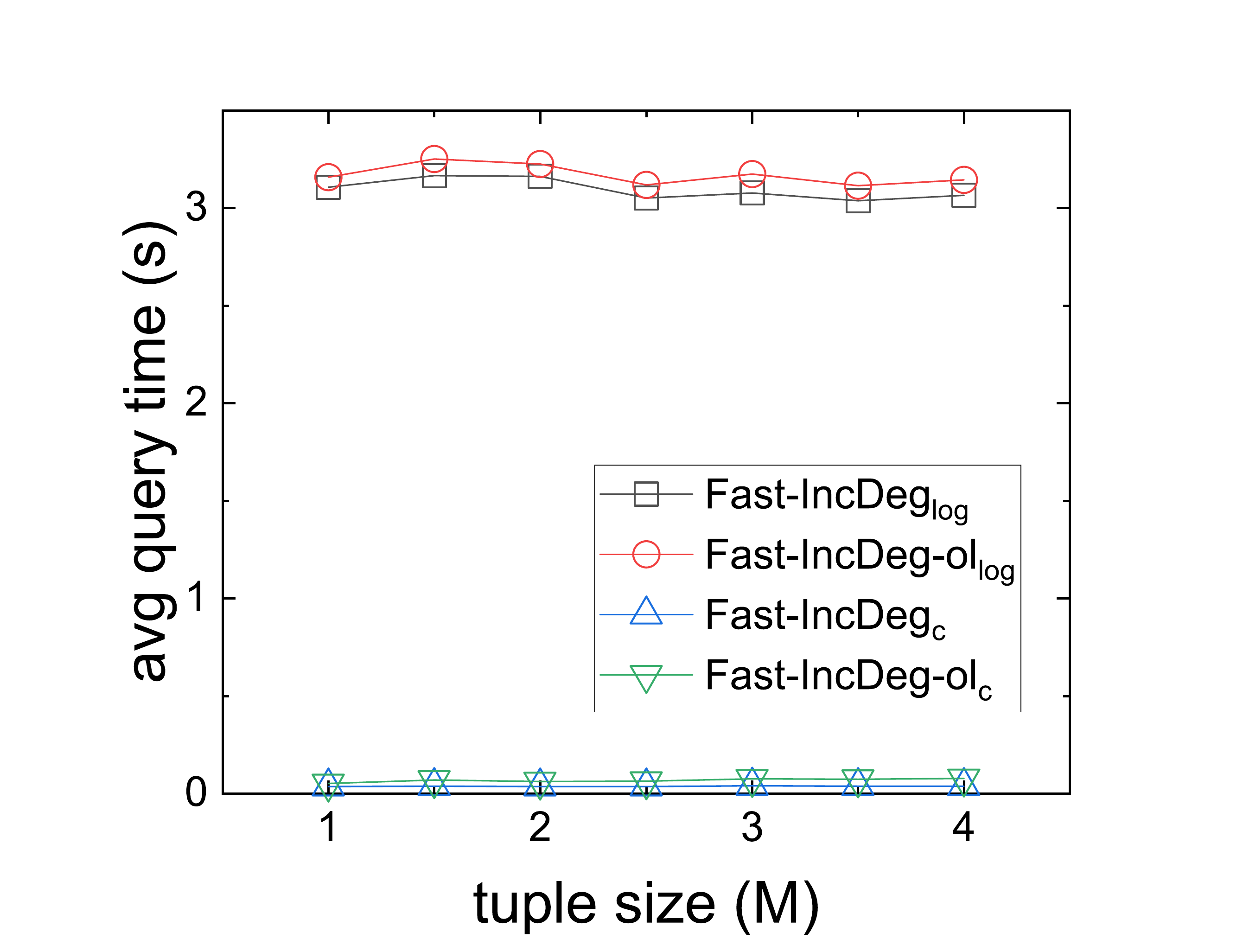} \hspace{-3mm}
			\label{fig:fi-dblp-drho}
		}
		\subfigure[\textsc{dblp} $n = 4\mathrm{M}, \rho = 0.03$]{
			\hspace{-3mm}	\includegraphics[height=35mm]{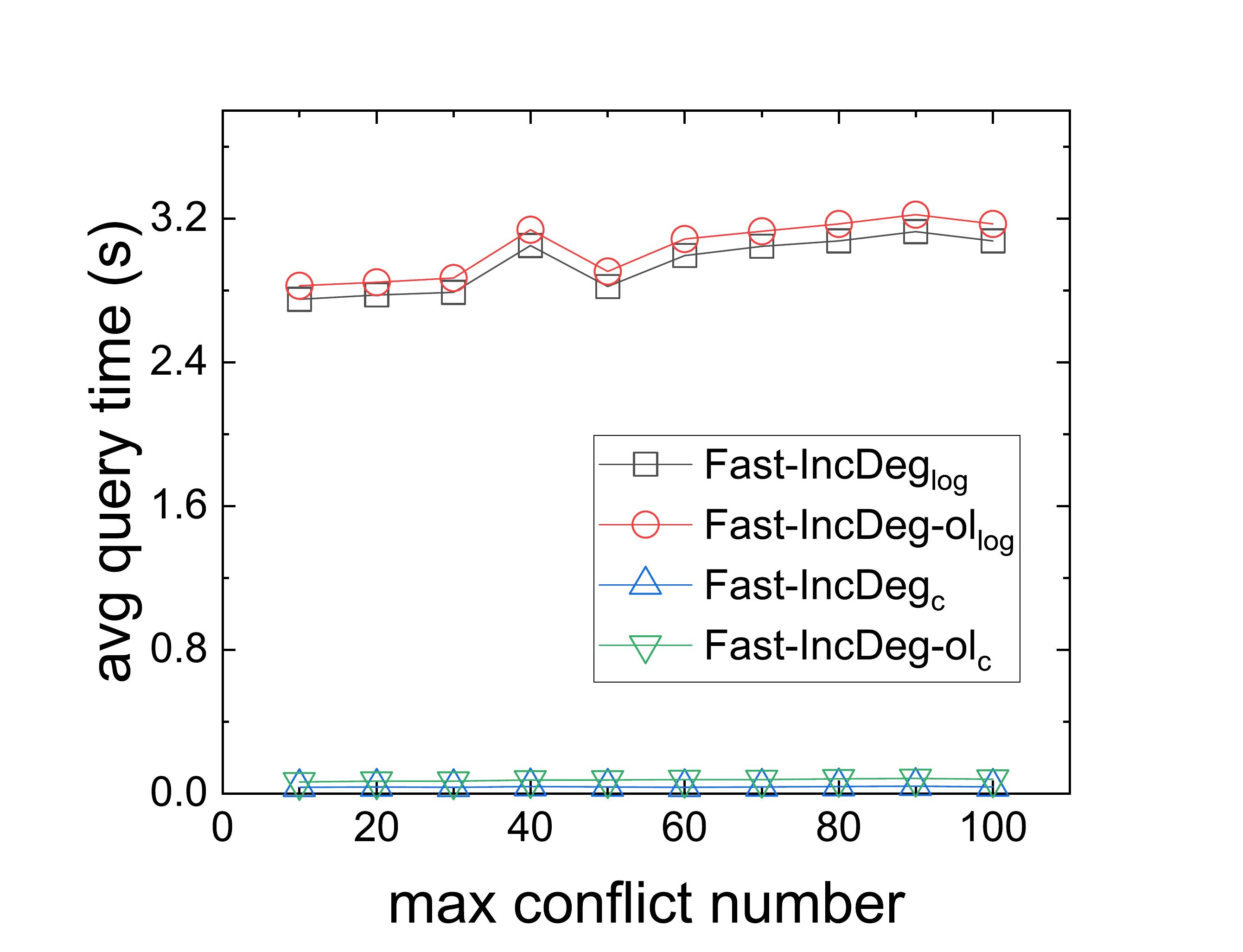} \hspace{-3mm}
			\label{fig:fi-dblp-nrho}
		}
		\vspace{-3mm}
		\caption{average query time with different $n$, $\rho$ and $d$}
		\label{fig:fi-exp}
		\vspace{0mm}
	\end{small}
\end{figure*}

\noindent\textbf{Accuracy.}
We first show the accuracy of \textsf{BL LP-OSR}, \textsf{TE LP-OSR} and \textsf{QT LP-OSR}. In figure \ref{fig:lp-exp}, we ran them on datasets consisting of 10K to 40K tuples with noise rate $\rho$ ranging from $1\%$ to $10\%$ and calculated $\mathsf{UB_1} = 2\rho n$. The $dist_{\texttt{sub}}(\hat{J}, I)$ of \textsf{BL LP-OSR}, \textsf{TE LP-OSR} and \textsf{QT LP-OSR} are much less than $\mathsf{UB_1}$. Because the approximation ratio only bounds the relation between worst case output of an algorithm and the optimal solution, \textsf{BL LP-OSR} sometimes performs better than the other two improved algorithm. What's more, it is discovered that, after triad elimination, the ratio of 0.5 solution becomes much less since they only appear when some conflicts form a cycle with odd length.

We also evaluate the accuracy of \textsf{Fast-IncDeg}. We ran $\mathsf{\textsf{Fast-IncDeg}_{c}}$ with parameter $\epsilon = 0.01$ on the same datasets and calculated $\mathsf{UB_2} = (2 \rho + \epsilon) n$. As shown in figure \ref{fig:lp-exp} the value return by $\mathsf{\textsf{Fast-IncDeg}_{c}}$ is less than $\mathsf{UB_2}$ even less than $\mathsf{UB_1}$ mostly since the upper bound is loose. And it is greater than the values of \textsf{BL LP-OSR}, \textsf{TE LP-OSR} and \textsf{QT LP-OSR} since it returns an estimate with an additive error. \smallskip

\noindent\textbf{Efficiency.}
We evaluate the efficiency of our algorithms for FD-inconsistency evaluation. We first ran $\mathsf{\textsf{Fast-IncDeg}_{c}}$, $\mathsf{\textsf{Fast-IncDeg}_{log}}$, $\mathsf{\textsf{Fast-IncDeg-ol}_{c}}$ and $\mathsf{\textsf{Fast-IncDeg-ol}_{log}}$ on datasets with various size, noise rate $\rho$ ranging from $1\%$ to $10\%$ and $\epsilon = 0.01$. 300 large queries were issued per dataset and the average query time was recorded. 

Figure \ref{fig:fi-ord-rho}, \ref{fig:fi-ord-n}, \ref{fig:fi-dblp-rho}, and \ref{fig:fi-dblp-n} indicate that the average query time increase with the number of tuples and noise rate since both of them influence the maximum conflicts number in $Q(I)$. Further experiments were performed to evaluate the impact of the maximum conflict number $\delta_{Q(I)}$ on the average query time. Since queries were generated randomly, we only bounded the maximum conflict number of the dataset $\delta_D$. Therefore, the average query time shown in figures \ref{fig:fi-ord-nrho} and \ref{fig:fi-dblp-nrho} remain basically the same with the increasing $\delta_D$. As shown in figures \ref{fig:fi-ord-drho} and \ref{fig:fi-dblp-drho}, the average query time of $\mathsf{\textsf{Fast-IncDeg}_{c}}$ and $\mathsf{\textsf{Fast-IncDeg-ol}_{c}}$ change slightly due to the impact of the number of tuples on $\delta_{Q(I)}$. And the average query time of $\mathsf{\textsf{Fast-IncDeg}_{log}}$ and $\mathsf{\textsf{Fast-IncDeg-ol}_{log}}$ grow with the number of tuples. 

Figure \ref{fig:fi-exp} also illustrates that no matter how the $\subseteq$-oracle is implemented, \textsf{Fast-IncDeg} performs better then \textsf{Fast-IncDeg-ol}. It is because that in \textsf{Fast-IncDeg-ol} the ranking is assigned on-the-fly when a conflict is queried and it is expensive to keep the ranking consistent in every tuple which the conflict concerned about. In addition, an efficient $\subseteq$-oracle indeed makes the average query time drop a lot.

\eat{Figure \ref{fig:ord_eff_rho}, \ref{fig:ord_eff_n}, \ref{fig:dblp_eff_rho} and \ref{fig:dblp_eff_n} indicate that the average query increases with the tuples size and noise rate since both of them will influence the maximum conflicts number in $Q(I)$.
A further is performed to evaluate the impact of the maximum conflict number $\delta_Q(I)$ on the average query time.
Since queries were generated randomly, we only bounded the maximum conflict number of the dataset $\delta_D$.
As shown in figures \ref{fig:ord_eff_drho} and \ref{fig:dblp_eff_drho}, the average query time of $\mathsf{\textsf{Fast-IncDeg}_{c}}$ and $\mathsf{\textsf{Fast-IncDeg}_{c}\_ol}$ grows slowly with the number of tuples since $\delta_Q(I)$ also increases with the tuples size. And the average query time of $\mathsf{\textsf{Fast-IncDeg}_{log}}$ and $\mathsf{\textsf{Fast-IncDeg}_{log}\_ol}$ increases logarithmically. For the same reason the average query time in \ref{fig:ord_eff_nrho} and \ref{fig:dblp_eff_nrho} the average query time remains basically the same with increasing $\delta_D$.

Figure \ref{fig:eff} also illustrates that no matter how the $\subseteq$-oracle is implemented, \textsf{Fast-IncDeg} performs better than \textsf{Fast-IncDeg\_ol}. It is because that in \textsf{Fast-IncDeg\_ol} the ranking is assigned on-the-fly when a conflict is queried and it is expensive to keep the ranking consistent in every tuple which the conflict concerned about. In addition, an efficient $\subseteq$-oracle indeed makes the average query time drop a lot.}

%---------------------------------------------
\section{Related Work}
As a principled approach managing inconsistency, Arenas et al.~\cite{arenas99cqa} introduced the notions of repairs to define consistent query answering. 
The definitions of repair differ in settings of integrity constraints and operation gain~\cite{Afrati2009}.
The most general form of integrity constraints are denial constraints~\cite{Gaasterland1992}, they are able to express the classic functional dependencies~\cite{Abiteboul1995Foundations}, inclusion dependencies~\cite{Koehler2017}, and so on.
Data complexities of computing optimal repairs are widely studied in the past.
The complexity of tuple-level deletion based subset repair~\cite{Chomicki2005MIM,ester:PODS} is studied respectively in the past.
And the complexity of cell-level update based v-repair is also studied in~\cite{ester:PODS,Kolahi2009AOR}.
APXcompleteness of both optimal subset repair and v-repair computation has been shown for in these works.

For the upper bound, the best approximation on subset repair is still 2 obtained by solving the corresponding vertex cover problem~\cite{ester:PODS} without the limitation on the number of given FDs.
For the setting of fixed number of FDs, there are still no existing algorithmic result.

For the data repairing frameworks~\cite{Afrati2009RCI}, there are two kinds of works which are based on FDs,
they both aim to directly resolve the inconsistency of database.
One kind of methods is to repair data based on minimizing the repair cost, \eg,~\cite{arenas99cqa, bohannon2005cost, cong2007improving, lopatenko2007efficient, winkler2004methods}.

Given the data edit operations (including tuple-level and cell-level),
minimum cost repair will output repaired data with minimizing the difference between it and the original one.
But these work also do not provide us tight lower and upper bounds for data repairing.
%Repair we focused also also can be seen as a cardinality repair of an inconsistent database [32], because it is right the complementary of any optimal subset repair.
%however it is much more expensive using their techniques directly, especially for dynamic data, and the algorithm given in this paper is more efficient and with better ratio.
There are some other type of repairs not related with this paper, such as “minimum description length”~\cite{Chiang11UnifiedModel}, “relative trust”~\cite{beskales2012relative} and so on.

%To our best of knowledge, there is almost no polynomial approximation algorithm with a good ratio bound for repairing inconsistent data based on FDs excepting few approximation.
For inconsistency detection, there exists some detection techniques which are able to detect errors efficiently.
SQL techniques for detecting FD violations were given in~\cite{Chomicki2005MIM},
practical algorithms for detecting violations of FDs in fragmented and distributed relations were provided in~\cite{fan2010detecting},
and a incremental detection algorithm were proposed by~\cite{Fan2014IDI}.
In contrast to inconsistency detection,
inconsistency evaluation need to compute the quantized dirtiness value of the data,
rather than finding all violations.
\section{Conclusions}
We revisit computing an optimal s-repair problem and fast estimate of s-repair based FD-inconsistency degree of subset query results.
For the lower bound, we improve the inapproximability of optimal s-repair computing problem over most cases of FDs and schemas.
For the upper bound, we developed two LP-based algorithms to compute a near optimal s-repair based on different characterization of input FDs and schemas respectively.
Complexity results implies it is hard to obtain a good approximation polynomially, not to mention sublinear time for large data.
For the FD-inconsistency degree, we present a fast $(2,\epsilon)$-estimation with an average sublinear query complexity, and achieve a sublinear time complexity whenever incorporating a sublinear implementation of the subset query oracle.
This results give a way to estimate FD-inconsistency degree efficiently with theoretical guarantee.
\balance	% ensure same length columns on last page (might need two sub-sequent latex runs)
%---------------------------------------------
%\section{Acknowledgments}
%---------------------------------------------
\bibliographystyle{abbrv}
\bibliography{vldb_2019_osr.bib}
%---------------------------------------------
\pagebreak
%\begin{appendix}
%You can use an appendix for optional proofs or details of your evaluation which are not absolutely necessary to the core understanding of your paper. 
%\section{Final Thoughts on Good Layout}
%Please use readable font sizes in the figures and graphs. Avoid tempering with the correct border values, and the spacing (and format) of both text and captions of the PVLDB format (e.g. captions are bold).
%
%At the end, please check for an overall pleasant layout, e.g. by ensuring a readable and logical positioning of any floating figures and tables. Please also check for any line overflows, which are only allowed in extraordinary circumstances (such as wide formulas or URLs where a line wrap would be counterintuitive).
%
%Use the \texttt{balance} package together with a \texttt{\char'134 balance} command at the end of your document to ensure that the last page has balanced (i.e. same length) columns.
%
%\end{appendix}

\end{document}